\newif\ifllncs\llncsfalse
\newif\ifanon\anonfalse
\newif\ifnotes
\title{Parallel Spooky Pebbling Makes Regev Factoring More Practical}
    \author{}
    \institute{}
    \date{}
    \author{
    Gregory D. Kahanamoku-Meyer\thanks{Email: \texttt{gkm@mit.edu}. Supported by U.S. DoE Co-design Center for Quantum Advantage (C$^2$QA) DE-SC0012704.}\\MIT \and
    Seyoon Ragavan\thanks{Email: \texttt{sragavan@mit.edu}. Supported by an Akamai Presidential Fellowship, NSF CNS-2154149, a Simons Investigator Award, the Defense Advanced Research Projects Agency (DARPA) under Contract No. HR0011-25-C-0300, and Amazon Research Awards. Any opinions, findings and conclusions or recommendations expressed in this material are those of the author(s) and do not necessarily reflect the views of the Defense Advanced Research Projects Agency (DARPA).
    Supported in part by Jane Street.
    This work was done in part while the author was visiting the Simons Institute for the Theory of Computing and the Challenge Institute for Quantum Computation at UC Berkeley.}\\MIT
    \and
    Katherine Van Kirk\thanks{Email: \texttt{kvankirk@g.harvard.edu}. Supported by the Fannie and John Hertz Foundation and an NDSEG fellowship.}\\Harvard
    }
    \date{}
\date{\today} \fi
\definecolor{DarkBlue}{RGB}{0,0,150}
\definecolor{NotSoDarkBlue}{RGB}{15,15,210}
\definecolor{DarkRed}{RGB}{150,0,0}
\definecolor{DarkGreen}{RGB}{0,100,0}
\newcolumntype{Y}{>{\centering\arraybackslash}X}
\theoremstyle{plain}
\newtheorem{theorem}{Theorem}[section]{\bfseries}{\itshape}
\newtheorem{informal-theorem}[theorem]{Informal Theorem}{\bfseries}{\itshape}
{\bfseries}{}
\newtheorem{lemma}[theorem]{Lemma}{\bfseries}{\itshape}
{\bfseries}{\itshape}
\newtheorem{conjecture}{Conjecture}
\newtheorem{assumption}{Assumption}
\theoremstyle{remark}
\newtheorem{remark}[theorem]{Remark}{\itshape}{}
\theoremstyle{definition}
{\bfseries}{}
\newtheorem{definition}[theorem]{Definition}{\bfseries}{}
{\bfseries}{}
\newtheorem{corollary}[theorem]{Corollary}{\bfseries}{\itshape}
{\bfseries}{\itshape}
\newtheorem{proposition}[theorem]{Proposition}{\bfseries}{\itshape}
{\bfseries}{\itshape}
\numberwithin{theorem}{section}
\numberwithin{conjecture}{section}
\numberwithin{problem}{section}
\numberwithin{assumption}{section}
\newcommand{\RR}{\mathbb{R}}
\newcommand{\ZZ}{\mathbb{Z}}
\newcommand{\poly}{\mathsf{poly}}
\newcommand{\sbbp}{\mathsf{SBBP}}
\newcommand{\vecb}{\mathbf{b}}
\newcommand{\emp}{\mathsf{empty}}
\newcommand{\peb}{\mathsf{pebble}}
\newcommand{\gho}{\mathsf{ghost}}
\newcommand{\state}{\mathsf{state}}
\newcommand{\pebop}{\mathsf{Pebble}}
\newcommand{\unpeb}{\mathsf{Unpebble}}
\newcommand{\ghop}{\mathsf{Ghost}}
\newcommand{\blast}{\mathsf{Blast}}
\newcommand{\unblast}{\mathsf{Unblast}}
\newcommand{\istart}{i_\mathsf{start}}
\newcommand{\iend}{i_\mathsf{end}}
\DeclarePairedDelimiterX{\bigket}[1]{\bigg\lvert}{\bigg\rangle}{\,#1}
\DeclarePairedDelimiter{\norm}\lVert\rVert
\newcommand{\condparagraph}[1]{\paragraph{#1}}
\begin{document}
\maketitle

\thispagestyle{plain}
\frenchspacing

\begin{abstract}
    ``Pebble games,'' an abstraction from classical reversible computing, have found use in the design of quantum circuits for inherently sequential tasks.
    Gidney showed that allowing Hadamard basis measurements during pebble games can dramatically improve costs---an extension termed ``spooky pebble games'' because the measurements leave temporary phase errors called ghosts.
    In this work, we define and study \emph{parallel spooky pebble games}.
    Previous work by Blocki, Holman, and Lee (TCC 2022) and Gidney studied the benefits offered by either parallelism or spookiness individually; here we show that these resources can yield impressive gains when used together.
    First, we show by construction that a line graph of length $\ell$ can be pebbled in depth $2\ell$ (which is exactly optimal) using space $\leq 2.47\log \ell$.
    Then, to explore pebbling schemes using even less space, we use a highly optimized $A^*$ search implemented in Julia to find the lowest-depth parallel spooky pebbling possible for a range of concrete line graph lengths $\ell$ given a constant number of pebbles $s$.

    We show that these techniques can be applied to Regev's factoring algorithm (Journal of the ACM 2025) to significantly reduce the cost of its arithmetic.
    For example, we find that 4096-bit integers $N$ can be factored in multiplication depth 193, which outperforms the 680 required of previous variants of Regev and the 444 reported by Eker{\aa} and G{\"a}rtner for Shor's algorithm (IACR Communications in Cryptology 2025).
    While the space required for Shor's algorithm is considerably less than any variant of Regev's algorithm including ours, and thus Shor likely remains the best candidate for the first quantum factorization of large integers, our results show that implementations of Regev's algorithm are far from fully optimized, and thus Regev's algorithm may have practical importance in the future.
    We also believe our pebbling techniques will find applications in quantum cryptanalysis beyond integer factorization, and in quantum circuit compilation more broadly.
\end{abstract}

\ifllncs
\else
\thispagestyle{empty}
\newpage
\thispagestyle{empty}
\tableofcontents
\newpage
\pagenumbering{arabic}
\fi

\section{Introduction}

\emph{How can one efficiently carry out an inherently sequential computation on a quantum computer?}
It became clear early that quantum computers were unlikely to have a black-box \textit{advantage} in running inherently sequential algorithms~\cite{ozhigov_iterated_black_box_1999}; but surprisingly, due to the reversible nature of quantum computations, sequential computations can actually be \textit{more} challenging for quantum computers than classical ones.
More formally, suppose there is some fixed function $H: \{0, 1\}^n \to \{0, 1\}^n$.
Given as inputs a depth $\ell$ and an initial string $x \in \{0, 1\}^n$, we wish to evaluate $H^\ell(x) = H(H\ldots (H(x)))$.
Tasks of this form are ubiquitous --- they are used in cryptography as proofs of work or time-lock puzzles~\cite{rswtimelock,as15,at17,abp17,DBLP:conf/eurocrypt/AlwenBP18}.
They also arise when performing exponentiations in a group using repeated squaring (i.e. we have $H(x) = x^2\bmod{N}$), which appear for example in Regev's factoring algorithm~\cite{Regev23}.

While the cost of such a task is straightforwardly $\ell$ evaluations of $H$ in standard models of classical computation, it becomes significantly more interesting if we insist that all operations be reversible and $H$ is irreversible (or at least, $H$ is not efficiently reversible).
In this case, one can still just compute $H$ $\ell$ times in sequence, but this necessarily uses $O(\ell n)$ space since each iteration of $H$ must be computed out of place.
Surprisingly, one can get away with using significantly less space than this without paying prohibitive overheads in the time.
This is studied through the abstraction of \emph{pebble games}, which are played on a directed graph describing the dependencies of data in a computation (here simply a line graph of length $\ell$, as each evaluation of $H$ only requires the previous iteration's output as input).
The goal is to place a ``pebble'' on the output vertex, with the rule that pebbles can only be placed on a vertex, or removed from one, if all of that vertex's incoming edges have pebbles on their start vertices (which corresponds to all input data being available for a particular computational step).
The space complexity is captured by the maximum number of pebbles used at any one time, and the runtime is captured by the number of times pebbles are placed or removed.
For the line graph, it is known, for example, that for any $\epsilon > 0$ we can use $O(\epsilon 2^{1/\epsilon} \cdot \log \ell)$ pebbles to complete the computation in $O(\ell^{1+\epsilon})$ time steps~\cite{bennett_timespace_1989,levine_note_1990}.

An array of work has further improved on this state of affairs by adding two types of twists to pebble games:
\begin{itemize}
    \item In \emph{parallel} pebble games~\cite{DBLP:conf/tcc/BlockiHL22,DBLP:conf/eurocrypt/BlockiHL25}, pebbles may be placed and removed in parallel (as long as the parallel computations do not interfere with each other).
    The hope is that this allows one to carry out this computation in a depth-efficient manner.
    In this setting, it has been shown that one can achieve a pebbling depth of $O(\ell)$ using $O(4^{\sqrt{\log \ell}})$ pebbles for the line graph.
    This pebble count is on the one hand sub-polynomial i.e. $\ell^{o(1)}$, but on the other hand well above being poly-logarithmic.
    Moreover, it is shown that this is essentially tight in terms of space-depth volume (up to constant factors in the exponent of the number of pebbles).

    \item In the special case of quantum computation, Gidney~\cite{gidneySpookyPebble} proposed the beautiful notion of a \emph{spooky} pebble game.
    The heart of this idea is to exploit the fact that quantum computation can circumvent the need for reversibility with the use of mid-circuit measurements.
    This loophole does not come for free; the mid-circuit measurements leave a quantum phase that must be corrected later.
    However, crucially, the phase does not cost any qubits to store, reducing the space cost dramatically.

    Gidney showed that spookiness is a very powerful resource: with just $(1+o(1))\log \ell$ pebbles, we can get away with just $(1+o(1))\ell \log \ell$ pebbling steps for the line graph.\footnote{Here and throughout this paper, $\log$ will denote the base 2 logarithm.}
    Kornerup, Sadun, and Soloveichik~\cite[Theorem 3.6]{Kornerup2025tightboundsspooky} generalized this to show that with $s$ pebbles, we only need $O(m \ell)$ pebbling steps, provided that $\ell \leq \binom{m+s-2}{s-2} + 1$.
    Remarkably, they also show that this is tight up to constant factors~\cite[Theorem 4.13]{Kornerup2025tightboundsspooky}.
\end{itemize}
The main contribution of our work is to define and thoroughly study --- both asymptotically and concretely --- \emph{parallel spooky pebble games}, which capture the benefits that can be reaped by using these two resources (namely, parallelism and spookiness) in tandem.
Specifically, we show that we can achieve a pebbling depth of $2\ell$ (the best possible) using $\approx 2.47\log \ell$ pebbles,\footnote{The prefactor is equal to $1/\log \alpha$, where $\alpha \approx 1.32$ is the real solution to $\alpha^3 - \alpha - 1 = 0$. We refer the reader to Theorem~\ref{thm:parallelpebblingmain} for a formal statement.} thus asymptotically getting the best of both worlds between the low pebble count of Gidney~\cite{gidneySpookyPebble} and the low pebbling depth of~\cite{DBLP:conf/tcc/BlockiHL22}.
By the lower bound shown in~\cite[Theorem 4.13]{Kornerup2025tightboundsspooky}, this is tight up to constant factors; the number of pebbles must be $\geq \Omega(\log \ell)$ if we want a pebbling depth of $O(\ell)$.
We focus on the case of a logarithmic number of pebbles because with $O(1)$ pebbles we cannot asymptotically benefit from parallelism and logarithmically many pebbles are already sufficient for the best possible depth.
We provide an overview of these results on pebbling in Table~\ref{pebblingtable}.
\begin{table}
\begin{center}
    \begin{tabular}{|c|c|c|c|c|}
    \hline
    Works & Parallelism? & Spookiness? & Number of pebbles & Pebbling time \\
    \hline\hline
    \cite{bennett_timespace_1989,levine_note_1990} & \multirow{2}{*}{N} & \multirow{2}{*}{N} & $O(\epsilon 2^{1/\epsilon} \cdot \log \ell)$ & $O(\ell^{1+\epsilon})$ \\
    \cline{1-1}
    \cline{4-5}
    \cite{bennett_timespace_1989,DBLP:conf/tcc/BlockiHL22} & & & $O(\sqrt{\log \ell} \cdot 2^{\sqrt{\log \ell}})$ & $O(\ell \cdot 2^{\sqrt{\log \ell}})$ \\
    \hline
    \cite{DBLP:conf/tcc/BlockiHL22} & Y & N & $O(4^{\sqrt{\log \ell}})$ & $O(\ell)$ \\
    \hline
    \cite{gidneySpookyPebble,Kornerup2025tightboundsspooky} & N & Y & $\approx \log \ell$ & $\approx \ell \log \ell$ \\
    \hline
    This work & Y & Y & $\approx 2.47\log \ell$ & $2\ell$ \\
    \hline
    \end{tabular}
    \end{center}
    \caption{\label{pebblingtable}Comparison of our work and previous works on the complexity of reversibly pebbling a line graph of length $\ell$. The primary differences between these works are whether they leverage parallelism and/or spookiness as resources. Our work shows that the two resources together in tandem can asymptotically match the best of both worlds, between the space that spookiness allows and the time that parallelism allows.}
\end{table}
As an additional contribution, we provide code that via $A^\ast$ search finds the optimal-depth parallel spooky pebbling for any concrete length $\ell$ and number of pebbles $s$.\footnote{The code is available on GitHub and Zenodo: \url{https://doi.org/10.5281/zenodo.17298960}~\cite{julia_code}}
We find this numerical search to be useful in practical circumstances, where small optimizations not captured in the asymptotics can reduce the cost for specific values of $\ell$ and $s$.
Additionally, our code supports weighted cost metrics that allow the user to incorporate knowledge that certain pebbles are more space-intensive to compute than others. This is relevant for our application to factoring (as we will discuss at the end of Section~\ref{sec:techoverview}), and we are optimistic that it will be useful in other applications too.

We defer a further discussion of our results and techniques to Section~\ref{sec:techoverview}, and for now turn our attention to a specific application: quantum factoring via Regev's algorithm, where we find that parallel spooky pebble games can meaningfully improve the algorithm's implementation.
This gives us optimism that our results will find applications elsewhere in cryptography and quantum algorithms.

\paragraph{Our target application: concretely efficient Regev factoring.}
\sloppy
In the decades since Shor's $\poly(n)$ quantum algorithm for factoring $n$-bit integers was first published~\cite{shor97}, a wide array of proposals~\cite{beckman,vedral,seifert2001using,Cop02,CleveW00,beauregard,takahashi,zalka2006shors,DBLP:conf/pqcrypto/EkeraH17,gidney2017factoring,hrs17,gidney2019windowed,DBLP:journals/quantum/GidneyE21,DBLP:journals/iacr/ChevignardFS24,kahanamokumeyer2024fast,gidney2025factor2048bitrsa} have sought to ease the implementation of its algorithmic building blocks such as quantum arithmetic and the quantum Fourier transform.
In some cases, the gate count, qubit count, or depth of the factoring algorithm has been improved by constant or $\poly(\log n)$ factors through these optimizations.
However, viewed at a high level in terms of e.g. the number of $n$-bit multiplications that must be performed, the gate count of the factoring circuit remained static at $\widetilde{O}(n^2)$ for decades. We note that in these decades there were a few key innovations~\cite{seifert2001using,zalka2006shors,DBLP:conf/pqcrypto/EkeraH17,gidney2019windowed,DBLP:journals/iacr/ChevignardFS24} that give constant-factor or even log-factor improvements to the multiplication count and qubit count.

\fussy
This situation persisted until 2023, when Regev~\cite{Regev23} proposed a dramatic higher-dimensional generalization of Shor's algorithm, and showed that such a generalization --- when combined with a very careful implementation of the underlying quantum arithmetic --- provided a \emph{polynomial} reduction in the number of quantum $n$-bit multiplications required to be performed in the factoring circuits, from $O(n)$ to $O(\sqrt{n})$.
At the time, the glaring limitation of Regev's circuit was its polynomially higher space complexity compared with that of Shor: while Shor's circuit only required $O(n)$ qubits, Regev's circuit required $O(n^{3/2})$ due to its use of repeated squaring (which is not reversible) for exponentiation modulo the number $N$ to be factored.
This issue was resolved by a modification proposed by Ragavan and Vaikuntanathan~\cite{rv24} to Regev's circuit, preserving its asymptotic gate count and depth while bringing the qubit usage down to $O(n)$.

However, a closer look reveals a still unsatisfactory picture: despite its asymptotic (in fact, polynomial) improvement over Shor's circuit, even the space-efficient variant of Regev's circuit~\cite{Regev23} by~\cite{rv24} is \emph{concretely much less efficient} (when $n = 2048$, say) than state-of-the-art implementations of Shor's algorithm.
At a high level, the reason for this is the reversible Fibonacci exponentiation technique proposed by Kaliski~\cite{kal17} and leveraged by~\cite{rv24}; each step involves several multiplications of $n$-bit integers, and moreover requires significant space because it involves storing intermediate values \emph{as well as their inverses} modulo $N$. Ragavan~\cite{cryptoeprint:2024/636} proposed constant-factor improvements to the circuit by~\cite{rv24} that somewhat mitigate this gap, but the concrete efficiency remains unsatisfactory and this appears to be inherent to Fibonacci-based reversible multiplication techniques.
This gap in concrete efficiency was demonstrated in a meticulous study by Eker{\aa} and G{\"a}rtner~\cite{ekeragartnercomparison}.
The central motivating question underlying this application of our work is thus the following:
\begin{quote}
    \begin{center}
        \textit{Motivating Question: Does Regev's factoring algorithm offer a more efficient path towards quantumly factoring 2048-bit integers?}
    \end{center}
\end{quote}
We revisit the original repeated squaring proposal by Regev and propose to conserve space and depth using our parallel spooky pebbling techniques.
We additionally propose several concretely important optimizations to further cut down the costs of Regev's factoring algorithm, which we sketch in Section~\ref{sec:techoverview}.
We also implement our proposed circuits for Regev factoring in detail; in particular we report the depth of the quantum circuits (in terms of multiplication operations mod $N$, following the previous resource estimation work by~\cite{ekeragartnercomparison}).
We achieve the following guarantees:
\begin{itemize}
    \item Asymptotically, it matches the depth of~\cite{Regev23,rv24,cryptoeprint:2024/636} and is only a $O(\log n)$ factor worse in the gate and qubit counts, while enjoying much better prefactors.
    We state this formally in Theorem~\ref{thm:asymptoticwithoptimalpeb}.
    \item Concretely, for $n = 2048$, we can achieve a mod $N$ multiplication depth per run of 157, which significantly improves upon the 580 used by Fibonacci-based implementations of Regev~\cite{rv24,cryptoeprint:2024/636}. This depth also outperforms the 230 reported by~\cite{ekeragartnercomparison} for Shor's algorithm (although Shor uses much less space, see below for more nuanced comparisons).

    For $n = 4096$, we achieve a mod $N$ multiplication depth of 193, which once again significantly improves on the 680 used by Fibonacci-based implementations of Regev, and outperforms the 444 reported by~\cite{ekeragartnercomparison} for Shor.

    In order to make a fairer comparison between algorithms with variable tradeoffs between time and space, we plot the per-shot time and depth cost of our results against both~\cite{rv24} and a parallelized version of Shor (see Appendix~\ref{app:parallel_shor}) in Figure~\ref{fig:time-space-plot}.
    For more detailed results see Section~\ref{subsec:factoring-results}.
\end{itemize}

\begin{figure}
	\begin{center}
		\includegraphics[width=0.8\textwidth]{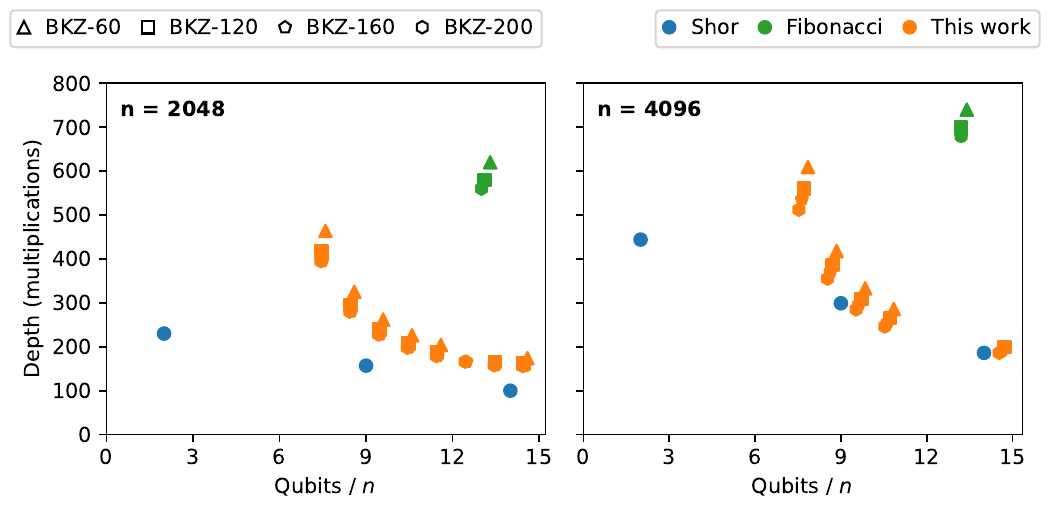}
	\end{center}
	\caption{\textbf{Depth-space tradeoffs for a single shot of various factoring algorithms.} Note that this configuration of Shor's algorithm requires roughly 10 times fewer shots than the other two constructions, which are based on Regev's algorithm. We include four points for each Regev variant corresponding to the strength of the BKZ lattice reduction used in classical postprocessing (see Section~\ref{sec:factoringcosts} for details). Across the tradeoff curve, our results outperform~\cite{rv24} (labeled ``Fibonacci'') in both space and depth, yielding by far the most efficient implementation of Regev's algorithm currently known. In some regions of parameter space the per-shot cost of our results is comparable to that of Shor's algorithm.}
	\label{fig:time-space-plot}
\end{figure}

\paragraph{Outlook and caveats.}
The practical takeaway of this work is that the cost of Regev's algorithm can be reduced considerably from the previous state of the art~\cite{Regev23,rv24,ekeragartner,ekeragartnercomparison}, even achieving a per-shot cost that is comparable to Shor's algorithm in some regimes~\cite{shor97,gidney2019windowed,DBLP:journals/quantum/GidneyE21,DBLP:journals/iacr/ChevignardFS24,gidney2025factor2048bitrsa}\footnote{Eker{\aa} and G{\"a}rtner~\cite{ekeragartnercomparison} note that they may be underestimating the benefit offered by windowing for Shor's algorithm in their cost estimates. Regardless of this, we believe the qualitative takeaways of our results --- in relation to the concrete potential of variants of Regev's algorithm --- still stand.}.
Indeed, on practical input sizes the margin between the two algorithms is narrower than previously thought.
Furthermore, we see no inherent obstacle to further optimization that builds on top of these ideas, thus reducing the cost of Regev's algorithm even further.
We thus believe that continued study on the concrete efficiency of Regev's algorithm is warranted for at least the following two reasons: (a) Shor's circuit has benefited from decades of optimizations, while Regev's circuit has only existed for two years; and (b) even if we ultimately determine that Regev does not outperform Shor on practically-sized integers, such a determination cannot be made until we are confident that we have the best practical implementation of Regev.
Our work makes substantial progress towards this goal, and we are certain that further improvements are possible.

The circuit depth per shot seems to be particularly promising, which is a metric that could become important in the future: since Regev's algorithm requires running many small circuits and classically combining the results, many intermediate-scale quantum computers could run the algorithm in parallel without any need for quantum communication between them.
This would be particularly impactful in streaming settings where it is desired to factor integers quickly as soon as they are received.

We note that in the near future, circuit depth per shot is likely not the most important metric; qubit count and the total number of gates across \emph{all runs} of the circuit are both important and by these metrics our variants of Regev are still much less efficient than Shor, as we discuss in Section~\ref{sec:pebblingresultsanalysis}.

%

\paragraph{Other applications.}
Understanding the complexity of carrying out inherently sequential computations is also integral to cryptography, where one may want to make use of proof-of-work or time-lock puzzles.
Indeed, the motivation of the study by~\cite{DBLP:conf/tcc/BlockiHL22,DBLP:conf/eurocrypt/BlockiHL25} of parallel pebbling games was to understand the post-quantum security of such tasks (while also accounting for space usage).
These works focus attention on parallelism and do not consider spookiness, due to concerns at the time with the practicality of mid-circuit measurements.
However, it has become clear that high-fidelity mid-circuit measurements will be a necessary and integral part of any future quantum computer due to their centrality in quantum error correction; indeed, high-fidelity mid-circuit measurement has already been demonstrated in a number of experiments~\cite{deist_mid-circuit_2022,koh_measurement-induced_2023,zhu_interactive_2023,hothem_measuring_2025}.

\paragraph{Organization of the paper.}
The remainder of this paper is organized as follows:
\begin{itemize}
    \item In Section~\ref{sec:techoverview}, we provide an overview of our techniques.

    \item In Section~\ref{sec:parallelspookypebbling}, we define and study abstract parallel spooky pebbling games, and prove that in this setting the line graph of length $\ell$ can be pebbled in optimal depth $2\ell$ with at most $2.47\log \ell$ pebbles.
    We also present empirical results detailing optimal pebbling depth for some concrete instances of parallel spooky pebbling games.

    \item In Section~\ref{sec:prelims}, we set up some preliminaries that will be useful for our application to Regev's factoring algorithm~\cite{Regev23}.

    \item Finally, in Section~\ref{sec:factoringcosts}, we put these ingredients together to obtain our results for reimplementing the arithmetic in Regev's factoring circuit.
\end{itemize}

\subsection{Technical Overview}\label{sec:techoverview}

In Section~\ref{sec:overviewpeb}, we provide an overview of the main ideas underlying our parallel spooky pebbling scheme.
We then turn to our factoring application in Section~\ref{sec:regevpebblingoverview}, where we discuss how Regev's circuit can be viewed as a pebbling game and additional optimizations specific to Regev factoring that we make in this work.

\subsubsection{Pebbling Games, Spooky and Parallelized}\label{sec:overviewpeb}

Recall that our goal is essentially to compute $H^{\ell}(x)$ for a function $H: \{0, 1\}^n \to \{0, 1\}^n$ and string $x \in \{0, 1\}^n$.
This can be abstracted out as a pebble game on a line graph of length $\ell$.
As already mentioned, the starting point of Gidney's spooky pebbling proposal~\cite{gidneySpookyPebble} is the following naive idea: after computing $\ket{x} \mapsto \ket{x} \ket{H(x)}$, we will be using one extra register. One way to ``clean up'' the $\ket{x}$ register would simply be to measure it in the Hadamard basis! We will obtain some classical string $s \in \left\{0, 1\right\}^n$ and be left with the state $(-1)^{\langle s, x \rangle} \ket{H(x)}$, which now does not use any additional space.
Crucially, the probability that we obtain a particular string $s$ is independent of $x$ (because all entries of the Hadamard matrix have the same magnitude), so the absolute amplitudes in our superposition will remain the same; we need only deal with these phases that arise.

Of course, this does not yet solve our problem; this phase will vary across the inputs in our superposition, and will completely destroy the correctness of the underlying algorithm if we need to compute $H^{\ell}(\cdot)$ in superposition.
We thus need to ``clean up'' this leftover phase at some point, so we will need to recompute $x$ at some point. This gives us access to three operations:
\begin{itemize}
    \item \emph{Pebbling:} given a register $\ket{x}$, write $\ket{H(x)}$ in a new register: $\ket{x} \ket{0^n} \mapsto \ket{x} \ket{H(x)}$.

    \item \emph{Ghosting:} measure a register $\ket{x}$ in the Hadamard basis. In doing so, we free up $n$ qubits but we create a phase that we will need to clean up later.

    \item \emph{Unpebbling:} this is the inverse of pebbling i.e. it computes $\ket{x} \ket{H(x)} \mapsto \ket{x} \ket{0^n}$.

    We implicitly assume that when unpebbling we will also ``exorcise'' any phases that were accumulated earlier from this stage of the pebbling game i.e. if we have a phase $(-1)^{\langle s, x \rangle}$ and also have $\ket{x}$ available in a register, we can remove this phase e.g. using phase kickback~\cite{cemm}.
\end{itemize}
This abstract formulation extends the notion of \emph{reversible pebbling games}~\cite{bennett_timespace_1989} from reversible computation to quantum measurement-based uncomputation and was first proposed and christened as ``spooky pebbling'' by Gidney~\cite{gidneySpookyPebble}. Surprisingly, these operations can be cleverly sequenced to obtain a construction that is more efficient than directly implementing the square-and-multiply algorithm. When computing modular exponents with exponents of bit length $\ell$, naive square-and-multiply uses $2\ell - 1$ multiplications (``pebbling steps'') and $\ell$ $n$-qubit registers (``pebbles''). In comparison, Gidney's proposed pebbling scheme uses $\approx \ell \log \ell$ pebbling steps (which is slightly worse), but manages to do so using only $\approx \log \ell$ pebbles. This already provides a concretely space-efficient and gate-efficient way to implement a sequential function such as $H^{\ell}$.

The main idea underlying Gidney's proposal --- which we will also make use of --- is the following divide-and-conquer approach.
Assume for now that $\ell$ is a power of 2.
The first observation is that with just two pebbles we can march one pebble all the way to position $\ell$, while leaving ghosts throughout.
We refer to this procedure as ``blasting''.
However, once we have done this, uncomputing the ghosts will be costly.
To remedy this, when blasting we will leave ``marker pebbles'' at a carefully chosen sequence of positions.
These markers will provide less costly starting points from which we can uncompute ghosts.
With this in mind, we can describe Gidney's pebbling procedure at a high level:
\begin{enumerate}
\item Blast all the way to position $\ell$, leaving marker pebbles at positions $\ell/2$, $3\ell/4$, $7\ell/8$, $\ldots$ and ghosts everywhere else.
\item\label{item:unblastrighthalf} Uncompute everything in the right half (positions $\ell/2$ to $\ell$)
\item\label{item:blastlefthalf} Recursively blast on the interval $[0, \ell/2]$.
\item Uncompute everything in the left half (positions $0$ to $\ell/2$).
\end{enumerate}
We build on Gidney's proposal by defining and analyzing \emph{parallel} spooky pebbling games, where some of the pebbling steps are run in parallel to further conserve depth.
As mentioned earlier, parallelism has previously been studied as a resource in pebble games~\cite{DBLP:conf/tcc/BlockiHL22,DBLP:conf/eurocrypt/BlockiHL25}, but without spookiness.
We show that spookiness and parallelism together give rise to pebbling games that are more efficient that what was possible using either resource individually.

One natural idea to inject parallelism into Gidney's proposal outlined above would be to run steps~\ref{item:unblastrighthalf} and~\ref{item:blastlefthalf} in parallel.
This would give a pebbling scheme with optimal depth $2\ell$ and $\approx 3\log \ell$ pebbles.
We improve on this with a very careful recursive construction based on a Fibonacci-like sequence instead of powers of 2, showing that we can achieve the same depth with just $\approx 2.47\log \ell$ pebbles (compared with the $O(4^{\sqrt{\log \ell}})$ pebbles used by~\cite{DBLP:conf/tcc/BlockiHL22} using only parallelism).
The pebbling schemes discussed in this overview are summarized in Table~\ref{pebblingtable}.

\subsubsection{Applying Pebbling Games to Regev Factoring}\label{sec:regevpebblingoverview}

\paragraph{Why pebbling?}
Let us recap the existing approaches to implementing the modular exponentiation in Regev's circuit:
\begin{itemize}
    \item Using a reversible Fibonacci exponentiation algorithm, as in Ragavan and Vaikuntanathan's optimizations~\cite{kal17,rv24,cryptoeprint:2024/636}: this incurs significant constant-factor overheads due to the need to work with intermediate registers as well as their inverses modulo $N$. This barrier seems inherent (see~\cite[``On Kaliski's Work and Quantum-Quantum Multiplication'']{rv24} for a discussion of this).

    \item Using the classic square-and-multiply algorithm, as in Regev's original circuit~\cite{Regev23}: this is now very efficient in terms of the number of multiplications, but encounters the barrier that squaring modulo $N$ is not a reversible operation and thus must be done out of place. Regev's algorithm performs $O(\sqrt{n})$ squarings mod $N$ and each squaring uses a new $n$-qubit register, which is the source of the significant $O(n^{3/2})$ qubit requirement.\footnote{We note that Shor's algorithm~\cite{shor97} circumvents this space barrier by essentially offloading the squarings mod $N$ to classical preprocessing and quantumly only doing multiplications mod $N$. This trick does not appear applicable to Regev; see~\cite[``Space Complexity'']{rv24} for a discussion of this.}
    This reversibility barrier is not as inherent; indeed, pebbling games provide a way to circumvent this.
\end{itemize}
Thus, any approach that could hope to rival the efficiency of Shor's algorithm on 2048-bit integers will likely have to modify the square-and-multiply algorithm to be more space-efficient.

\paragraph{Formulating Regev's circuit as a pebbling game.} The bottleneck of Regev's quantum circuit is computing the mapping $$(z_1, \ldots, z_d) \mapsto \prod_{j = 1}^d a_j^{z_j} \bmod{N},$$where the $a_j$'s are small bases and the $z_j$'s are non-negative integers $< D$. Regev's square-and-multiply procedure~\cite{Regev23} is summarized (in classical terms for now) in Algorithm~\ref{algo:basicrepeatedsquaring}.
Correctness follows from the fact that at the end of step $i^*$, we have $$x = \prod_{j = 1}^d a_j^{\sum_{i = 1}^{i^*} z_{j, i}2^{\log D - i}} \bmod{N},$$
where $z_{j,i}$ is the $i^\mathrm{th}$ bit of $z_j$ (in big-endian bit order, indexed from 1).

\begin{algorithm}
    \KwData{Indices $z_1, \ldots, z_d$}
    \KwResult{The product $\prod_{j = 1}^d a_j^{z_j} \bmod{N}$}
    \begin{enumerate}
        \item Initialize $x = 1$.
        \item\label{step:pebbleupdatebasic} For $i = 1, 2, \ldots, \log D$ in that order:
        \begin{enumerate}
            \item\label{step:squarebasic} Update $x \gets x^2 \bmod{N}$.
            \item\label{step:treebasic} Update $x \gets x \cdot \prod_{j = 1}^d a_j^{z_{j, i}} \bmod{N}$.
        \end{enumerate}
    \end{enumerate}
    \caption{Regev's square-and-multiply procedure. Note that for simplicity, we write the algorithm here in non-reversible terms.}\label{algo:basicrepeatedsquaring}
\end{algorithm}

In other words, each iteration of either Step~\ref{step:squarebasic} or Step~\ref{step:treebasic} of Algorithm~\ref{algo:basicrepeatedsquaring} constitutes a pebbling step in our pebbling game, which will have length $2\log D$.
We unpack this in more detail in Section~\ref{sec:peb2fac}.

\paragraph{Amortizing Step~\ref{step:treebasic} across pebbles.} As seen in Algorithm~\ref{algo:basicrepeatedsquaring}, each iteration of the loop in Step~\ref{step:pebbleupdatebasic} involves two pebbling steps: the squaring mod $N$ in Step~\ref{step:squarebasic} and the additional multiplicative update in Step~\ref{step:treebasic}. Neither of these costs is negligible. However, it was observed by Ragavan~\cite{cryptoeprint:2024/636} that one can actually save gates by only carrying out Step~\ref{step:treebasic} for some values $j$ (this is using a technique that bears some high-level similarity to the use of windowing in optimizing Shor's algorithm~\cite{gidney2019windowed,DBLP:journals/quantum/GidneyE21}; see~\cite{cryptoeprint:2024/636} for a further comparison of these techniques). Concretely, we will specify some window length $w$ and Step~\ref{step:pebbleupdatebasic} will have the following structure:

\begin{itemize}
    \item If $i$ is not a multiple of $w$: update $x \gets x^2 \bmod{N}$;
    \item If $i$ is a multiple of $w$: update $x \gets x^2 \cdot \prod_{j = 1}^d a_j^{z_{j,[i:i+w)}} \bmod N$.
\end{itemize}
Here, $z_{j,[i:i+w)}$ is the integer formed from bits $i$ to $i+w$ of $z_j$, in big-endian bit order.
This can be thought of as a batched version of Step~\ref{step:treebasic}.

\paragraph{Ghosting inside step~\ref{step:treebasic}.} We optimize further by using measurement based uncomputation to minimize the cost of reversibility inside of the pebble operations.
For $\pebop(i)$ steps from Step~\ref{step:treebasic} when $i$ is a multiple of $w$, we implement the overall transformation $x \gets x \cdot \prod_{j = 1}^d a_j^{z_{j,[i:i+w)}}$ by first computing $t_i=\prod_{j = 1}^d a_j^{z_{j,[i:i+w)}}$ in an ancillary register, and then computing the product of $x \cdot t_i$ into the output.
Then, instead of explicitly uncomputing $t_i$, we use measurement-based uncomputation to ``ghost'' it.
Only during the $\unpeb(i)$ phase must we explicitly uncompute $t_i$ (because unpebbling must not leave ghosts); during this stage we also fix all the phases associated with previous ghosting of $t_i$ during calls to $\pebop(i)$.

\paragraph{Computing $\prod_{j = 1}^d a_j^{z_j, [i:i+w)}$ in practice.}
This pebbling abstraction implicitly disregards the cost of actually computing $\prod_{j = 1}^d a_j^{z_j, [i:i+w)}$.
Asymptotically, if we set $d = \sqrt{n}$, this is reasonable since this product is typically much smaller than $N$~\cite{Regev23}.
However, in practice, we would like to set $d$ to be as large as possible to maximize the benefit we can get from high dimensionality:
Eker{\aa} and G{\"a}rtner~\cite{ekeragartnercomparison} and our work both select $d$ to be as large as possible while ensuring that $\prod_{j = 1}^d a_j^{2^w - 1} < 2^n$.
In this setting, it follows by construction that the bit length of $\prod_{j = 1}^d a_j^{z_j, [i:i+w)}$ is no longer negligible.

As discussed in their their study, Eker{\aa} and G{\"a}rtner~\cite{ekeragartnercomparison} chose to disregard the space and gate costs of constructing these trees, which was natural in their context because their finding was that Regev does not perform as well as Shor on cryptographically relevant problem sizes even when biasing the comparison in Regev's favor by omitting costs such as these.
Given our optimism about the algorithms introduced in our work, we choose to also carefully consider these costs.

In this respect, a key point of concern is the space overhead that arises from storing an entire product tree --- particularly in our case where we may be constructing multiple product trees in parallel.
We state some natural but ultimately limited ways to optimize this issue away, then outline our solution:
\begin{itemize}
    \item We could save a constant factor in space by uncomputing any right children of nodes in the product tree.
    \item Even better, when pebbling a particular node, we can ghost each layer of the tree once its parent layer has been computed. Thus we only ever need to store two layers of the tree.
    However, when \emph{unpebbling}, we will still need to store the entire tree (or a constant fraction) for the sake of uncomputation.
\end{itemize}
In our work, we simplify matters by abandoning the product tree, instead classically precomputing $a_j^{z_j 2^i}$ for $i = 0, 1, \ldots, w-1$, and simply multiplying the $d \cdot w$ relevant numbers in sequence.
This is comparably efficient when using schoolbook arithmetic, and additionally can essentially be done in place if one uses Gidney's recent low-space classical-quantum adder~\cite{gidney2025classicalquantumadderconstantworkspace} for the multiplications.

\paragraph{Not all pebbles are created equal!}
Finally, we note that our pebble operations corresponding to steps~\ref{step:squarebasic} and~\ref{step:treebasic} are qualitatively different in terms of efficiency.
The latter step needs to do some additional arithmetic (albeit with small integers), and more significantly needs additional space to store the intermediate value $t = \prod_{j = 1}^d a_j^{z_{j, [i:i+w)}}$.
Although our observation in the above paragraph allows us to shave down this space overhead considerably, it will still not be negligible.

To reflect this, we optimize our choice of pebbling while incorporating a weighted cost metric that captures the fact that pebbles in certain positions require more ancillary space to compute (or uncompute) than others.
We include this in our Julia script for finding efficient parallel spooky pebbling games, and are optimistic that the ability to weight different pebbling indices might be useful for other applications as well.

\section{Parallel Spooky Pebbling Games}\label{sec:parallelspookypebbling}

\subsection{Definitions}\label{sec:pebdefs}

\begin{definition}[Pebbling Games and States]\label{def:pebgamestate}
    A \emph{pebbling game} is specified by a target length $\ell$. At any point, the state of the game will be a vector $\state \in \left\{\peb, \gho, \emp\right\}^{\ell + 1}$. We index the elements of $\state$ by indices in $[0, \ell]$, and use $\state_i$ to denote entry $i$ of $\state$.

    The initial and final states of the pebbling game will both be $(\peb$, $ \emp$, $ \emp$, $ \ldots$, $ \emp)$.
    (The first entry of the state is a dummy variable which will always be $\peb$.)
    We require that one of the intermediate states of the pebbling game has a pebble in position $\ell$ i.e. $\state_\ell = \peb$,
    and the target end state will be $(\peb, \emp, \ldots, \emp)$.
\end{definition}

\begin{definition}[Pebbling Moves]\label{def:pebblingops}
    A \emph{pebbling move} is defined together with an \emph{active set}, which is a subset of $[0, \ell]$. It can be applied to an input state $\state$ (subject to certain restrictions on $\state$) and produces a modified state $\state'$. It can be any one of the following:
    \begin{itemize}
        \item $\pebop(i)$ for some $i \in [1, \ell]$: this is applicable when $\state_{i-1} = \peb$ and $\state_i \in \left\{\gho, \emp\right\}$. It will produce $\state'$ which is identical to $\state$ in all positions except $\state'_i$ which will now be $\peb$. Its active set is $\left\{i-1, i\right\}$.

        \item $\unpeb(i)$ for some $i \in [1, \ell]$: this is applicable when $\state_{i-1} = \state_i = \peb$. It will produce $\state'$ which is identical to $\state$ in all positions except $\state'_i$ which will now be $\emp$. Its active set is $\left\{i-1, i\right\}$.

        \item $\ghop(i)$ for some $i \in [1, \ell]$: this is applicable when $\state_i = \peb$. It will produce $\state'$ which is identical to $\state$ in all positions except $\state'_i$ which will now be $\gho$. Its active set is $\left\{i\right\}$.

    \end{itemize}
\end{definition}

\begin{definition}[Sequential and Parallel Pebbling Games]\label{def:parallelpeb}
    In a pebbling game, each time step consists of two phases that are applied in sequence:
    \begin{itemize}
        \item In the first phase of a \emph{sequential} pebbling game, we can make at most one call to $\pebop, \unpeb$. In the first phase of a \emph{parallel} pebbling game, we can make an arbitrary number of parallel calls to $\pebop, \unpeb$, \textbf{provided that their active sets are pairwise disjoint.}

        \item In the second phase, we can make an arbitrary number of parallel calls to $\ghop$.
    \end{itemize}
\end{definition}

\begin{definition}[Pebbling Cost and Depth]\label{def:paralleldepth}
    The \emph{pebbling cost} of a pebbling game is the total number of calls to $\pebop, \unpeb$.

    We define \emph{pebbling depth} as the total number of time steps (as defined in Definition~\ref{def:parallelpeb}).
\end{definition}

\begin{remark}
    Note that in the case of a sequential pebbling game, the pebbling cost and pebbling depth will always be equal.
\end{remark}

\begin{remark}
    We exclude calls to $\ghop$ when calculating pebbling cost, because in typical applications $\ghop$ will be significantly cheaper than the other three operations.
\end{remark}

\begin{remark}[Optimal-Depth Pebbling]\label{remark:optimaldepthpeb}
    It is straightforward to see that a parallel pebbling game of length $\ell$ must necessarily have depth $\geq 2\ell$.
    This is because the position of the rightmost pebble/ghost changes by at most $\pm 1$ in each time step.
    Initially and at the end, this position is 0.
    At some point this position must be $\ell$.
\end{remark}

\begin{definition}[Pebbling Space]\label{def:pebblingspace}
    Let the states of a pebbling game in sequence be $\state[0], \state[1], \ldots, \state[T]$, where $\state[0], \state[T]$ are the initial and target states defined in Definition~\ref{def:pebgamestate}. In other words, for each $i \in [0, T-1]$, $\state[i+1]$ is obtained from $\state[i]$ by applying one time step (as defined in Definition~\ref{def:parallelpeb}).

    For each $t = 1, \ldots, T$, we define the space usage $s_t$ of time step $t$ as the number of indices $i \in [1, \ell]$ such that at least one of $\state[t-1]_i, \state[t]_i$ is $\peb$. (Note in particular that this set of indices contains the active sets of all operations performed when going from $\state[t-1]$ to $\state[t]$; this can be verified by inspecting Definition~\ref{def:pebblingops}.) Then we define the \emph{pebbling space} of the pebbling game to be $\max_{t \in [1, T]} s_t$.
\end{definition}

\subsection{Previous Results}\label{sec:pebprev}

\paragraph{Sequential spooky pebbling.}
The study of sequential spooky pebbling was initiated by Gidney~\cite{gidneySpookyPebble} and further explored by Kornerup et al~\cite{Kornerup2025tightboundsspooky}.
We refer the reader to~\cite[Figure 1]{Kornerup2025tightboundsspooky} for a summary of previous results on this front and state some key results from there for comparison:
\begin{theorem}[Sequential Spooky Pebbling]
    There exist sequential spooky pebbling games achieving any of the following guarantees:
    \begin{itemize}
        \item Pebbling space 3 and pebbling cost $O(\ell^2)$~\cite{gidneySpookyPebble};
        \item Pebbling space $(1+o(1))\log \ell$ and pebbling cost $(1+o(1))\ell \log \ell$~\cite{gidneySpookyPebble};
        \item Pebbling space $s$ and pebbling cost $O(m\ell)$ provided $\binom{m+s-2}{s-2} \geq \ell$~\cite[Lemma 3.5]{Kornerup2025tightboundsspooky}.
    \end{itemize}

    Moreover, the third item here is asymptotically tight: for any $s, m$ such that $n > \binom{m+s-2}{s-2}$, a sequential spooky pebbling of length $\ell$ must use at least $\Omega(m\ell)$ steps~\cite[Theorem 4.13]{Kornerup2025tightboundsspooky}.
\end{theorem}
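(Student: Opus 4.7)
The plan is to handle the three bullets (and the tightness remark) via four short constructions/arguments; since all four are due to prior work, my proof would primarily consist of recapping the key ideas here and citing the original papers~\cite{gidneySpookyPebble,Kornerup2025tightboundsspooky} for the detailed bookkeeping.

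For the first bullet, I would exhibit the 3-pebble scheme explicitly. In a first sweep, march one pebble forward from position $0$ to position $\ell$ using a second ``leapfrog'' pebble, applying $\ghop$ on each intermediate vertex immediately after it is no longer needed; this leaves ghosts at every intermediate position and costs only $\ell$ pebbling steps. To exorcise the ghosts one by one from right to left, repebble from position $0$ up to the rightmost surviving ghost and then unpebble it: the $i$-th exorcism costs $O(i)$ steps, summing to $O(\ell^2)$. Throughout, at most three concurrent pebbles are ever live.

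For the second and third bullets I would use the recursive ``blast and clean'' scheme sketched in Section~\ref{sec:overviewpeb}. Define $\blast(\ell)$ as a procedure that advances a pebble to position $\ell$ while leaving marker pebbles at a carefully chosen subset of positions and ghosts elsewhere, and $\unblast(\ell)$ as its reverse. For Gidney's construction (second bullet), split $[0,\ell]$ at its midpoint: blast to position $\ell$, leaving markers at $\ell/2, 3\ell/4, 7\ell/8, \ldots$; recursively unblast the right half using these markers; then recursively blast and unblast the left half. This yields $T(\ell) \le 2T(\ell/2) + O(\ell)$ and $s(\ell) \le s(\ell/2) + O(1)$, solving to $T(\ell) = (1+o(1))\,\ell \log \ell$ and $s(\ell) = (1+o(1))\log \ell$. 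For the third bullet, one replaces the midpoint split with a breakpoint chosen so that the space recursion is tracked exactly by the binomial identity $\binom{m+s-2}{s-2}$, after which a clean double induction on $(m, s)$ gives the $O(m \ell)$ bound under the stated constraint.

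The main obstacle is the tightness claim. Here the plan is to reproduce the Kornerup--Sadun--Soloveichik potential argument: assign each configuration a potential that measures the ``residual ghost mass'' that must still be exorcised through the currently live pebbles, and show that (i) whenever $\ell > \binom{m+s-2}{s-2}$ this potential must reach at least $\Omega(m)$ at some intermediate time, and (ii) each individual pebbling move changes the potential by at most $O(1)$. Summing and rearranging gives the $\Omega(m \ell)$ lower bound. The delicate step is (i), which is a combinatorial claim about which configurations of $s$ pebbles can possibly arise during a valid game; this is exactly the content of~\cite[Theorem 4.13]{Kornerup2025tightboundsspooky}, which my proof would invoke directly rather than re-derive.
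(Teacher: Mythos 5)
The paper does not actually prove this statement: it is stated in Section~2.2 as a recap of prior results, with all four parts deferred entirely to the cited references (Gidney for the first two bullets, Kornerup--Sadun--Soloveichik for the third and for the lower bound). Your plan---recapping the key ideas and citing the originals for the bookkeeping---is therefore completely consistent with the paper's treatment, and your sketches of each piece are correct at the level of detail you give.

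Two small imprecisions worth flagging, neither of which is a gap. First, in the space-3 construction, the object you unpebble on each sweep is the single leftover \emph{pebble} from the previous sweep, not the rightmost \emph{ghost}: after the blast you have a pebble at $\ell$ and ghosts at $1,\ldots,\ell-1$; the first sweep repebbles up to $\ell-1$ and then unpebbles position $\ell$, leaving a pebble at $\ell-1$; the next sweep repebbles up to $\ell-2$ and unpebbles $\ell-1$; and so on. Your ``$i$-th exorcism costs $O(i)$'' cost analysis is unaffected. Second, the recurrence $T(\ell)\le 2T(\ell/2)+O(\ell)$ gives the right $\Theta(\ell\log\ell)$ order of growth, but pinning down the leading constant to $(1+o(1))\ell\log\ell$ requires tracking the recursion a bit more carefully than the master-theorem level: the actual structure is a length-$\ell$ blast (cost $\ell$) followed by an unblast $U(\ell)$ satisfying $U(\ell)=2U(\ell/2)+\ell/2+O(1)$, and the space bound $(1+o(1))\log\ell$ likewise needs a careful account of how markers accumulate across recursion levels. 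Since both the paper and your write-up invoke Gidney for this, it is fine to leave these constants to the citation, but your wording should not suggest they fall out trivially from the recurrence.
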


\paragraph{Parallel (non-spooky) pebbling.}
The notion of parallel reversible pebbling --- which resembles our notion of pebbling but does not allow for ghosting --- was previously studied by~\cite{DBLP:conf/tcc/BlockiHL22,DBLP:conf/eurocrypt/BlockiHL25} with the motivation of studying the post-quantum security of proof-of-work puzzles.
They show the following result:
\begin{theorem}\label{Parallel Non-Spooky Pebbling}
    There exists a parallel reversible pebbling scheme with pebbling space $4^{\sqrt{\log \ell}}$ and pebbling depth $O(\ell)$~\cite[Lemma 8]{DBLP:conf/tcc/BlockiHL22}.
    Moreover, if any parallel reversible pebbling scheme achieves pebbling space $s$ and pebbling depth $d$, we must have $sd \geq \ell \cdot \Omega(2^{(\sqrt{2}-o(1))\sqrt{\log \ell}})$~\cite[Theorem 1]{DBLP:conf/eurocrypt/BlockiHL25}.
\end{theorem}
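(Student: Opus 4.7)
The plan is to establish the two parts---construction and lower bound---separately, following \cite{DBLP:conf/tcc/BlockiHL22} and \cite{DBLP:conf/eurocrypt/BlockiHL25}.

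\textbf{Construction.} My plan is a recursive ``pipelined'' parallel pebbling. Given the line of length $\ell$, partition it at checkpoint positions $0 = p_0 < p_1 < \cdots < p_k = \ell$. The scheme proceeds in stages: stage $i$ pebbles the sub-segment $[p_{i-1}, p_i]$ starting from a checkpoint pebble held at $p_{i-1}$ and, when the stage finishes, leaves the new checkpoint pebble at $p_i$. Crucially, during stage $i$ we simultaneously uncompute, in parallel, the internal pebbles left over in segment $[p_{i-2}, p_{i-1}]$; since those indices are disjoint from $[p_{i-1}, p_i]$ the active-set-disjointness requirement of Definition~\ref{def:parallelpeb} is respected. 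Each stage contributes only $O(p_i - p_{i-1})$ to the critical path, and $\sum_i (p_i - p_{i-1}) = \ell$, so the overall pebbling depth is $O(\ell)$. Pebbling an individual sub-segment is in turn solved recursively by the same construction, giving (after accounting for the $O(k)$ checkpoint pebbles and the pebbles carried through the pipeline) a space recurrence of the rough form $s(\ell) \leq s(\ell/k) + O(k)$; balancing the branching factor across the roughly $\sqrt{\log \ell}$ levels of recursion and solving the recurrence yields the claimed space bound of $4^{\sqrt{\log \ell}}$.

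\textbf{Lower bound.} My plan is a potential-function argument. To each pebbling state $\state$ we attach a potential $\Phi(\state)$ measuring ``work remaining'', defined hierarchically to match the recursive structure of the conjectured optimal construction (roughly, a sum of contributions from each scale, where an empty gap of length $L$ at scale $\log L$ contributes a weight proportional to $L \cdot 2^{\Theta(\sqrt{\log L})}$). Initially and at the end $\Phi$ is large because the entire line is devoid of pebbles, whereas at the moment a pebble is placed at position $\ell$ certain scales must have been filled in, forcing $\Phi$ to have travelled a total distance at least $\ell \cdot 2^{(\sqrt{2} - o(1))\sqrt{\log \ell}}$ over the course of the game. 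Any single parallel time step touches at most $O(s)$ positions and so changes $\Phi$ by at most an additive $O(s)$. Combining these two observations gives $d \cdot O(s) \geq \ell \cdot 2^{(\sqrt{2} - o(1))\sqrt{\log \ell}}$, which rearranges to the claimed bound on $sd$.

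The main obstacle in both parts is matching the precise exponent $\sqrt{2}$ (equivalently the base $4$ on the upper side). On the construction side, this forces one to synchronize the pipeline so that the uncomputation of an old segment finishes exactly as its internal pebbles become unneeded, since any mismatch between the forward pass and the cleanup of a sub-segment wastes either space or time and degrades the exponent. On the lower-bound side, the difficulty is designing the potential delicately enough to see the sub-exponential factor $2^{\Omega(\sqrt{\log \ell})}$: a naive linear or quadratic potential recovers only the trivial $sd \geq \ell$ bound, so one needs the hierarchical/scale-based choice sketched above, together with an amortized charging argument to handle the fact that a single parallel step can simultaneously make small changes at many different scales.
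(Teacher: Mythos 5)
This theorem is not proved in the paper at all: it is imported background, cited verbatim from \cite{DBLP:conf/tcc/BlockiHL22} and \cite{DBLP:conf/eurocrypt/BlockiHL25}, so there is no in-paper argument to compare against. Judged on its own terms, your sketch has a concrete flaw in the construction half. The space recurrence you write, $s(\ell) \leq s(\ell/k) + O(k)$, is too optimistic: if it held, you could take $k = O(1)$ and get $s(\ell) = O(\log \ell)$ with depth $O(\ell)$, which would contradict the lower bound in the second half of the very theorem you are proving ($d = O(\ell)$ forces $s \geq 2^{(\sqrt{2}-o(1))\sqrt{\log \ell}}$). The point you are missing is that in a \emph{reversible, non-spooky} game, uncomputing the leftover pebbles in segment $[p_{i-2}, p_{i-1}]$ is itself a full recursive pebbling (run in reverse), and it must execute \emph{concurrently} with the forward recursive pebbling of $[p_{i-1}, p_i]$ to keep the depth at $O(\ell)$. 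So each level of recursion needs space for (at least) two simultaneous recursive instances plus the checkpoints, giving a multiplicative recurrence of the shape $s(\ell) \geq 2\, s(\ell/k) + O(k)$. Solving $2^{\log_k \ell} \cdot k$ and optimizing $\log k = \sqrt{\log \ell}$ is what produces $4^{\sqrt{\log \ell}} = 2^{2\sqrt{\log \ell}}$; with your additive recurrence the exponent never appears.

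The lower-bound half is also not yet a proof and is internally inconsistent as sketched: you assign an empty gap of length $L$ a weight proportional to $L \cdot 2^{\Theta(\sqrt{\log L})}$, but then claim each parallel step changes $\Phi$ by only $O(s)$. Removing a single pebble can merge two gaps into one of length $L$, changing such a potential by far more than $O(1)$ per touched position, so the per-step bound does not follow from ``at most $O(s)$ positions are touched.'' Making this work is exactly the delicate part of \cite{DBLP:conf/eurocrypt/BlockiHL25}, and the amortized charging you allude to would need to be spelled out before the $\sqrt{2}$ exponent can be claimed. Since the paper simply cites these results, the cleanest fix is to do the same rather than reprove them.
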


\paragraph{Lower bounds on parallel spooky pebbling.}
The aforementioned lower bound by~\cite[Theorem 4.13]{Kornerup2025tightboundsspooky} can be readily turned into a lower bound on \emph{parallel} spooky pebbling games:
\begin{corollary}[Lower Bound on Parallel Spooky Pebbling]\label{cor:lowerbounds}
    For any $s, m$ such that $\ell > \binom{m+s-2}{s-2}$, a parallel spooky pebbling game of length $\ell$ must have depth at least $\Omega(m\ell/s)$.

    Consequently, for any real constant $c \geq 2$ there exists another constant $c' > 0$ such that at least $c' \log \ell$ pebbles are needed to achieve depth $< c\ell$.
\end{corollary}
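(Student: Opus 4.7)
The plan is to reduce any parallel spooky pebbling game of depth $d$ and pebbling space $s$ to a sequential spooky pebbling game of cost $O(sd)$ with the same pebbling space, and then invoke the sequential lower bound~\cite[Theorem~4.13]{Kornerup2025tightboundsspooky} already cited in the excerpt.

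For the reduction, the key claim is that within any single time step of the parallel game the number of parallel $\pebop/\unpeb$ operations is at most $s + 1$. The reason is that by Definition~\ref{def:pebblingops} both $\pebop(i)$ and $\unpeb(i)$ require $\state_{i-1} = \peb$ in the pre-state, and the disjoint-active-set requirement in Definition~\ref{def:parallelpeb} forces these witness positions $i-1$ to be distinct across the parallel operations; all but possibly one such witness lies in $[1, \ell]$ and is therefore counted by $s_t$ in Definition~\ref{def:pebblingspace}. Since these operations have pairwise disjoint active sets they mutually commute, so I would serialize them in any order (performing all $\ghop$ operations from phase 2 together in the final sub-step) to obtain a valid sequential spooky pebbling. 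Because each operation only modifies positions in its own active set, every intermediate serialized state has pebble set contained in the union of the pre- and post-states of the original parallel step, so the sequential pebbling space remains at most $s$. Summing over all $d$ parallel time steps, the total number of $\pebop/\unpeb$ calls in the sequential game is at most $(s+1)d = O(sd)$.

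Applying~\cite[Theorem~4.13]{Kornerup2025tightboundsspooky} to this sequential pebbling gives a lower bound of $\Omega(m\ell)$ on its cost, which combined with the $O(sd)$ upper bound yields $d \geq \Omega(m\ell/s)$ whenever $\ell > \binom{m+s-2}{s-2}$; this is the first part of the corollary. For the consequence, given $c \geq 2$, I would set $m = \lceil \alpha s \rceil$ for $\alpha$ a sufficiently large constant multiple of $c$ so that the bound above forces $d \geq \Omega(\alpha\ell) > c\ell$, and then verify the binomial side-condition $\ell > \binom{m+s-2}{s-2}$ using the standard estimate $\binom{a}{b} \leq (ea/b)^b$: for $s$ sufficiently large this gives $\binom{m+s-2}{s-2} \leq (e(\alpha+2))^{s}$, which is less than $\ell$ whenever $s \leq \log\ell / \log(e(\alpha+2))$. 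Setting $c' = 1/(2\log(e(\alpha+2)))$, a constant depending only on $c$, thus forces any parallel spooky pebbling of depth $< c\ell$ to use at least $c' \log \ell$ pebbles. The only delicate parts of the argument are the $O(s)$ bound on $\pebop/\unpeb$ operations per time step and the preservation of pebbling space under serialization, but both follow essentially mechanically from the disjoint-active-set requirement in Definition~\ref{def:parallelpeb}.
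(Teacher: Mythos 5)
Your proposal is correct and follows essentially the same route as the paper: serialize each parallel time step into $O(s)$ sequential $\pebop/\unpeb$ calls and invoke the sequential lower bound of Kornerup, Sadun, and Soloveichik, then choose $m = \Theta(s)$ and bound the binomial coefficient to derive the logarithmic pebble requirement. Your write-up is in fact somewhat more careful than the paper's (which simply asserts that at most $s$ operations can run in parallel), since you justify the per-step operation count via the distinct pebbled witnesses $i-1$ and verify that serialization preserves the pebbling space.
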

\begin{proof}
    This is immediate from the observation that with $s$ pebbles, we can only possibly carry out $s$ operations in parallel at a time.
    Hence any parallel spooky pebbling game can be converted into a sequential spooky pebbling game where the total number of operations is at most $s \times$ the depth of the parallel game.

    To see the second statement, let $\mu > 0$ be a universal constant so that the depth must be at least exactly $\frac{m\ell}{s\mu}$.
    Let $c' > 0$ be any constant such that $\binom{c\mu s + s-2}{s-2} \leq 2^{s/c'}$ for all $s$ sufficiently large.
    Then suppose for the sake of contradiction that there exists a pebbling game with $s < c'\log \ell$ pebbles and depth $< c\ell$.
    Then let $m = c\mu s$.
    Since $s < c'\log \ell$, we have $\ell > 2^{s/c'} \geq \binom{m+s-2}{s-2}$.
    Consequently, the depth must be at least $\frac{m\ell}{s\mu} = c\ell$, which is a contradiction.
\end{proof}

\subsection{Parallel Spooky Pebbling with Optimal Depth and Logarithmic Space}\label{sec:parallelpebblingmain}

The main result of this section is the following positive result for parallel spooky pebbling.
By Corollary~\ref{cor:lowerbounds}, this is tight up to constant factors.
We focus on the case of a logarithmic number of pebbles because with $O(1)$ pebbles we cannot asymptotically benefit from parallelism and as the below theorem establishes, logarithmically many pebbles are already sufficient for the best possible depth.
\begin{theorem}\label{thm:parallelpebblingmain}
    Let $\alpha \approx 1.32$ be the real solution to $\alpha^3 - \alpha - 1 = 0$.
    Then for any $\ell \geq 7$, there exists a parallel spooky pebbling scheme with pebbling space $\left\lceil\frac{\log \ell - \log 0.94}{\log \alpha} - 3 \right\rceil \leq \left\lceil 2.47\log \ell - 2.77 \right\rceil$ and pebbling depth exactly $2\ell$ (which is optimal).
\end{theorem}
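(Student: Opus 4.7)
The plan is to recursively construct the pebbling, generalizing Gidney's binary divide-and-conquer to a Fibonacci-like three-term partition. The real root $\alpha \approx 1.32$ of $x^3 - x - 1$ will emerge naturally as the growth rate of the maximum line length $L_s$ that $s$ pebbles can handle, via a recurrence of the form $L_s = L_{s-2} + L_{s-3}$.

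I would begin by defining two primitives: a \emph{blast}, which advances a single pebble from position $i_\mathrm{start}$ to $i_\mathrm{end}$ in exactly $i_\mathrm{end} - i_\mathrm{start}$ time steps while maintaining only two simultaneously active pebbles, leaving ghosts (or pebble ``markers'' at designated positions) in its trail; and \emph{unblast}, its reverse. I would then define the sequence $L_s$ by the recurrence $L_s = L_{s-2} + L_{s-3}$ with base cases tuned so that $L_s \geq 0.94\,\alpha^{s+3}$. Standard linear-recurrence theory gives $L_s \sim C\alpha^s$, which upon inversion yields the bound $s \leq \lceil \log(\ell/0.94)/\log \alpha - 3 \rceil$ on the number of pebbles needed for lengths up to $L_s$; general $\ell$ are handled by padding up to the next threshold.

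The main recursive scheme $\mathcal{P}_s$ pebbles a line of length $L_s$ in depth exactly $2L_s$, proceeding in three consecutive sub-phases: a forward blast of duration $L_s$ marching from position $0$ to position $L_s$ while placing marker pebbles at the top-level split point $L_{s-3}$ and at all further split points required recursively, followed by the backward half of $\mathcal{P}_{s-2}$ on the right sub-interval $[L_{s-3}, L_s]$ (duration $L_{s-2}$), followed by the backward half of $\mathcal{P}_{s-3}$ on the left sub-interval $[0, L_{s-3}]$ (duration $L_{s-3}$). The total depth is $L_s + L_{s-2} + L_{s-3} = 2L_s$ by the defining recurrence, saturating the lower bound of Remark~\ref{remark:optimaldepthpeb}. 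For correctness, the state left by the forward blast must exactly coincide with the post-forward-blast state of the respective recursive scheme on each sub-interval --- which is precisely why the markers placed during the forward blast must include every split point at every level of the recursion, not just the top-level one. Whenever two sub-schemes can be scheduled with disjoint active sets, their operations are invoked in parallel within a single time step; sequential scheduling is used only where the active sets would otherwise collide (notably at the sub-interval boundaries).

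The main obstacle will be the careful space accounting needed to certify peak pebble usage of exactly $s$, not larger. I would proceed by induction on $s$. The key observation is that during the right cleanup, the markers in the left sub-interval are being passively held while $\mathcal{P}_{s-2}$ operates, so the recursive pebble budget for $\mathcal{P}_{s-2}$ must leave slack to accommodate these held markers; this slack is what forces the \emph{smaller} sub-interval on the left to have length $L_{s-3}$ rather than another $L_{s-2}$, and is the source of the three-term (as opposed to two-term) recurrence. Concretely, the inductive invariant is that during each sub-phase of $\mathcal{P}_s$, the passively held markers outside the currently active sub-interval together with the peak pebbles used inside it total at most $s$; verifying this invariant across all phase transitions, and tuning the base cases so that the additive constants combine into the precise factor $0.94$ appearing in the theorem, will be the bulk of the technical work.
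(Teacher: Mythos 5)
Your recurrence $L_s = L_{s-2} + L_{s-3}$, the root $\alpha$ of $x^3-x-1$, and the overall divide-and-conquer architecture (forward blast placing markers, then recursive cleanup) coincide with the paper's construction. However, there is a genuine gap in how you describe the cleanup phase, and it is precisely the step where parallelism actually matters.

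You propose three \emph{consecutive} sub-phases: forward blast (duration $L_s$), the backward half of $\mathcal{P}_{s-2}$ on $[L_{s-3}, L_s]$ (duration $L_{s-2}$), then the backward half of $\mathcal{P}_{s-3}$ on $[0, L_{s-3}]$ (duration $L_{s-3}$). The duration arithmetic sums to $2L_s$, but the third phase is not well-defined from the state it would inherit. The backward half of $\mathcal{P}_{s-3}$ must begin from the \emph{post-forward-blast} state on $[0,L_{s-3}]$, which has marker pebbles at $L_{s-6}$, $L_{s-6}+L_{s-8}$, etc. After the parent's forward blast and the right-side cleanup, though, positions $1,\ldots,L_{s-3}-1$ hold only ghosts: the parent's blast leaves markers only along the right chain $L_{s-3}$, $L_{s-3}+L_{s-5}$, $\ldots$. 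Your phrase ``at all further split points required recursively'' is ambiguous here, but if you actually mean markers at every split point on every level of recursion in both halves, that is $2^{\Theta(s)}$ markers and the pebble count blows up far beyond $s$. So markers can only go along one chain, and the left sub-interval must be \emph{re-blasted} before it can be cleaned up. Done sequentially, that re-blast adds roughly another $L_{s-3}$ steps and breaks the $2L_s$ depth bound.

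The resolution (the paper's Algorithm~\ref{algo:A_unblast}, step (b)) is to run the re-blast of $[0, L_{s-3}-1]$ \emph{in parallel with the last $L_{s-3}-1$ time steps of the right cleanup}. That specific overlap is what buys back the extra $L_{s-3}$ steps and is the only place in the scheme where parallelism is nontrivially exploited. Your blanket remark ``invoke in parallel when active sets are disjoint'' does not identify this overlap, and your ``consecutive sub-phases'' framing forbids it. Once the overlap is in place, the space analysis also gets more delicate: the peak is governed by a window in which the left blast and the right unblast are both consuming pebbles, and this case analysis (the paper's Lemma~\ref{lemma:unblast_recurrence} and Lemma~\ref{lem:ukfinalbound}) is where nearly all the technical work resides --- not just in ``tuning base cases to get $0.94$.'' You would need to rewrite the scheme to include the parallel re-blast and redo the space invariant with that parallel window in mind.
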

\noindent
Before we delve into the details of our pebbling scheme, let us reiterate the high-level ideas that we sketched in Section~\ref{sec:techoverview}.
Assume for now that $\ell$ is a power of 2.
The first observation is that with just two pebbles we can march one pebble all the way to position $\ell$, while leaving ghosts throughout.
We refer to this procedure as ``blasting''.
However, once we have done this, uncomputing the ghosts will be costly.
To remedy this, when blasting we will leave ``marker pebbles'' at a carefully chosen sequence of positions.
These markers will provide less costly starting points from which we can uncompute ghosts.
With this in mind, we can describe a variant of our pebbling procedure at a high level:
\begin{enumerate}
\item Blast all the way to position $\ell$, leaving marker pebbles at positions $\ell/2$, $3\ell/4$, $7\ell/8$, $\ldots$ and ghosts everywhere else.
\item Uncompute everything in the right half (positions $\ell/2$ to $\ell$), while recursively blasting on the interval $[0, \ell/2]$ in parallel.
\item Uncompute everything in the left half (positions $0$ to $\ell/2$).
\end{enumerate}
It turns out that this achieves depth $2\ell$ and pebbling space $\approx 3 \log \ell$.
With a more careful and involved variant of the above pebbling game, we can reduce the constant factor in the pebbling space.
Roughly, we will place the first marker pebble at a position around $c\ell$ for some $c < 1/2$.
Informally, this enables us to get some of the uncomputation in the right half out of the way before we need to allocate additional pebbles to blast in the left half.

\paragraph{Sequence $\left\{A_k\right\}$ and its properties.}
While the above simplified outline of our pebbling scheme worked with $\ell$ being a power of 2, it will be beneficial for us to work with a different exponentially growing sequence.
We define this sequence here and state its properties.
\begin{definition}[$A$-sequence]
    Define the sequence $A_1, A_2, \ldots$ by the base cases $A_1 = 1, A_2 = 2, A_3 = 2$, and the recurrence relation $A_k = A_{k-2} + A_{k-3}$.

    For any positive integer $n$, we let $A^{-1}(n)$ denote the largest index $j$ such that $A_j \leq n$. (Note in particular that $A^{-1}(2) = 3$.)
\end{definition}

\begin{proposition}[Properties of $\left\{A_k\right\}$]\label{prop:A_properties}
We have:
\begin{enumerate}
    \item\label{item:increasing} For all $k \geq 2$, $A_{k-1} \leq A_k$, with equality if and only if $k = 3$.
    \item\label{item:partialsum} For any $k \geq 4$, we have:
    $$A_k = 2 + \sum_{i = 1}^{\lfloor (k-2)/2 \rfloor} A_{k - 2i - 1}.$$
    \item\label{item:consecutivediffs} For any $k \geq 6$, we have $A_k = A_{k-1} + A_{k-5}$.
    \item\label{item:consecutivediffsincreasing} For any $k \geq 4$, we have $A_k - A_{k-1} \geq A_{k-1} - A_{k-2}$.
    \item\label{item:Aasymptoticgrowth} For all $k \geq 7$, we have $A_k \geq 0.94 \alpha^k$, where
    $$\alpha = \frac{(9 - \sqrt{69})^{1/3} + (9 + \sqrt{69})^{1/3}}{18^{1/3}} \approx 1.32$$
    is the real root of $\alpha^3 - \alpha - 1 = 0$.
\end{enumerate}

\begin{proof}
    See Appendix~\ref{sec:Apropertiesproof}.
\end{proof}

\end{proposition}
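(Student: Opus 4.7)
My overall plan is to prove all five items by induction on $k$ grounded in the defining recurrence $A_k = A_{k-2} + A_{k-3}$, after first computing the small values $A_1, A_2, \ldots, A_9 = 1, 2, 2, 3, 4, 5, 7, 9, 12$ explicitly to serve as base cases and as sanity checks.

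Items 1, 2, and 3 are essentially one-step manipulations of the recurrence. For item 1, I would unfold $A_k - A_{k-1} = (A_{k-2} + A_{k-3}) - (A_{k-3} + A_{k-4}) = A_{k-2} - A_{k-4}$ for $k \geq 5$, which is nonnegative by induction, with small cases checked directly from the list above. Item 2 follows by induction on $k$ with base cases $k \in \{4, 5\}$: apply the recurrence $A_k = A_{k-2} + A_{k-3}$, substitute the inductive hypothesis for $A_{k-2}$, and reindex to absorb $A_{k-3}$ as the new $i = 1$ term (checking that $\lfloor(k-4)/2\rfloor + 1 = \lfloor(k-2)/2\rfloor$ in both parities of $k$). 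Item 3 is the cleanest: for $k \geq 6$, write $A_k = (A_{k-4} + A_{k-5}) + A_{k-3} = (A_{k-3} + A_{k-4}) + A_{k-5} = A_{k-1} + A_{k-5}$, using the recurrence both to expand and to recollect.

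Item 4 then drops out of items 1 and 3: for $k \geq 7$, we have $A_k - A_{k-1} = A_{k-5} \geq A_{k-6} = A_{k-1} - A_{k-2}$, and the three remaining values $k \in \{4, 5, 6\}$ are each handled by direct arithmetic from the small values.

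For item 5, the key observation is that the characteristic polynomial of the recurrence is exactly $x^3 - x - 1$, so $\alpha$ satisfies $\alpha^3 = \alpha + 1$. The induction step is then essentially free: if $A_{k-2} \geq 0.94\alpha^{k-2}$ and $A_{k-3} \geq 0.94\alpha^{k-3}$, then $A_k \geq 0.94\alpha^{k-3}(\alpha + 1) = 0.94\alpha^{k-3}\cdot\alpha^3 = 0.94\alpha^k$. The only real obstacle I anticipate is verifying the three base cases $k \in \{7, 8, 9\}$: since the prefactor $0.94$ appears to be chosen tight, one must compute $\alpha^7, \alpha^8, \alpha^9$ with enough numerical precision to confirm $A_7 = 7 \geq 0.94\alpha^7$, $A_8 = 9 \geq 0.94\alpha^8$, and $A_9 = 12 \geq 0.94\alpha^9$. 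Once these three inequalities are in hand, the inductive step propagates the bound to all $k \geq 7$ automatically.
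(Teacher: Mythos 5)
Your proposal is correct and follows essentially the same route as the paper for items 1, 2, 4, and 5 — induction on $k$ via the recurrence, the telescoping reindexing for item 2, deriving item 4 from items 1 and 3, and exploiting $\alpha^3 = \alpha + 1$ for item 5 with base cases $k \in \{7,8,9\}$. The one place you genuinely diverge is item 3: the paper proves $A_k - A_{k-1} = A_{k-5}$ by induction (base case $k=6$, then $A_k - A_{k-1} = A_{k-3} - (A_{k-1}-A_{k-2}) = A_{k-3} - A_{k-6} = A_{k-5}$ using the IH at $k-1$), whereas you obtain it directly: expand $A_{k-2} = A_{k-4} + A_{k-5}$, then regroup $A_{k-3} + A_{k-4} = A_{k-1}$, giving $A_k = A_{k-1} + A_{k-5}$ for all $k \geq 6$ without any induction. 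Your argument is shorter and, I'd say, preferable.

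One small point to tighten: in item 1 you write that $A_{k-2} - A_{k-4}$ is ``nonnegative by induction,'' but the claim includes ``with equality iff $k=3$,'' so for $k \geq 5$ you need strict positivity. This is easy to supply — write $A_{k-2} - A_{k-4} = (A_{k-2} - A_{k-3}) + (A_{k-3} - A_{k-4})$ and note that by the inductive hypothesis both summands are $\geq 0$ and at most one of them can be zero (since only a single index, $k=3$, gives equality); this is also in spirit how the paper phrases it ($A_{k-2} \geq A_{k-3} \geq A_{k-4}$ with at least one strict). With that patch the proof is complete.
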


\paragraph{Our parallel spooky pebbling scheme.}
Our pebbling procedure is presented in Algorithm~\ref{algo:A_parallel_optdepth}.
Its two main recursive ingredients are a ``blasting'' procedure Algorithm~\ref{algo:A_blast} and an ``unblasting'' (uncomputation) procedure Algorithm~\ref{algo:A_unblast}.

\begin{algorithm}[h]
    \caption{Optimal-depth parallel pebbling scheme}\label{algo:A_parallel_optdepth}
    \KwData{Pebbling game of length $\ell = A_k$, initialized to state $(\peb, \emp,\emp, \ldots, \emp, \emp)$.}
    \KwResult{A completed pebbling game in the target end state $(\peb, \emp,\emp, \ldots, \emp, \emp)$. At some intermediate stage, we will have $\state_\ell = \peb$.}
    \begin{enumerate}
        \item Call $\blast(0, \ell-1)$.
        \item Apply $\pebop(\ell)$ followed by $\unpeb(\ell)$.
        \item Call $\unblast(0, \ell-1)$.
    \end{enumerate}
\end{algorithm}

\begin{algorithm}[h]
    \caption{$\blast(\istart, \iend)$: a blasting subroutine with ``marker pebbles'' at $A_k$}\label{algo:A_blast}
    \KwData{Indices $\istart, \iend$ such that $0 \leq \istart < \iend \leq \ell-1$, and $\iend - \istart = A_k-1$, for some index $k \geq 2$.
    Initially, we have $\state_{\istart} = \peb$ and $\state_{i} \in \{\emp,\gho\}$ for $i \in [\istart+1,\iend]$.}
    \KwResult{At the end of the algorithm, we will have ``marker'' pebbles at $A_k$-sized intervals away from the start. Formally, we will have $\state_i = \peb$ for $i = \istart, \istart + A_{k-3}, \istart + A_{k-3} + A_{k-5}, \ldots, \iend-1, \iend$. All other sites are $\gho$.}
    \begin{enumerate}
        \item Base case: if $\iend - \istart = 1$ (i.e. $k \in \left\{2, 3\right\}$), $\pebop(\iend)$ and terminate.
        \item If $\iend - \istart > 1$ and $k > 3$, we recursively proceed as follows: \label{algostep:blast}
        \begin{enumerate}
            \item $\pebop(\istart+1)$
            \item For sites $j \in [\istart+2,\istart+A_{k-3}]$: $\pebop(j)$ and then $\ghop(j-1)$
            \item Blast on the remainder of the subgame by recursively calling $\blast(\istart + A_{k-3}, \iend)$.
        \end{enumerate}
    \end{enumerate}
\end{algorithm}

\begin{algorithm}
    \caption{$\unblast(\istart, \iend)$: an unblasting subroutine}\label{algo:A_unblast}
    \KwData{Indices $\istart, \iend$ such that $0 \leq \istart < \iend \leq \ell-1$ and $\iend - \istart = A_k-1$ for some index $k \geq 2$.
    Initially we have $\state_{\istart}, \state_{\istart+1}, \ldots, \state_{\iend}$ as they would be at the end of $\blast(\istart, \iend)$ (see the specification of Algorithm~\ref{algo:A_blast} for details).}
    \KwResult{At the end of the algorithm, we will have $\state_{\istart} = \peb$ and $\state_i = \emp$ for all $i$ such that $\istart < i \leq \iend$.}
    \begin{enumerate}
            \item\label{algostep:find_fib} Base case: if $\iend - \istart = 1$ (i.e. $k \in \left\{2, 3\right\}$), run $\unpeb(\iend)$ and terminate.
            \item If $\iend - \istart > 1$ and $k > 3$, we recursively proceed as follows. Except where stated, all steps are sequential.
            \label{algostep:unbla_parallelization}
            \begin{enumerate}
                \item Run the first $A_{k-2} - A_{k-3}$ steps of $\unblast(\istart + A_{k-3}, \iend)$.
                \item In parallel, run the remaining $A_{k-3} - 1$ steps of $\unblast(\istart + A_{k-3}, \iend)$   \\ as well as a $\blast(\istart, \istart + A_{k-3} - 1)$.
                \item Call $\unpeb(\istart + A_{k-3})$.
                \item Call $\unblast(\istart, \istart + A_{k-3} - 1)$.
            \end{enumerate}
        \end{enumerate}
\end{algorithm}

\paragraph{Analysis of $\blast$ and $\unblast$.}\label{sec:blastunblastanalysis}
It is fairly straightforward to see from our algorithms that they are valid pebbling games and achieve the claimed input/output behavior.
We now turn our attention to establishing the desired efficiency guarantees, initially focusing on Algorithms~\ref{algo:A_blast} and~\ref{algo:A_unblast}.
We will address their depth in Lemmas~\ref{lem:blast_algo_depth} and~\ref{lem:unblast_algo_depth}, and devote the remainder of this subsection (Definition~\ref{def:bkuk} to Lemma~\ref{lem:ukfinalbound}) to analyzing their space.

\begin{lemma}
\label{lem:blast_algo_depth}
    For all $k$ and $\istart, \iend$ such that $\iend - \istart = A_k - 1$, the algorithm $\blast(\istart, \iend)$ takes exactly $A_k - 1$ time steps.
\end{lemma}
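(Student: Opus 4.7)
The plan is to proceed by induction on $k$, exploiting the recurrence $A_k = A_{k-2} + A_{k-3}$ from the definition of the sequence $\{A_k\}$ and the structure of Algorithm~\ref{algo:A_blast}.

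For the base case $k \in \{2,3\}$, we have $\iend - \istart = A_k - 1 = 1$, and the algorithm performs a single call $\pebop(\iend)$, which constitutes one time step (the $\pebop$ occupies the first phase; the second phase is empty). This matches $A_k - 1 = 1$.

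For the inductive step with $k > 3$, I would count the time steps of each part of Step~\ref{algostep:blast} separately. Part (a) is a single $\pebop(\istart+1)$, contributing $1$ time step. The key observation for part (b) is that within one time step, as specified in Definition~\ref{def:parallelpeb}, the first phase can contain one $\pebop$ and the second phase can contain one $\ghop$, so each iteration ``$\pebop(j)$ then $\ghop(j-1)$'' can be executed in a single time step; since $j$ ranges over $[\istart+2, \istart+A_{k-3}]$, this contributes exactly $A_{k-3} - 1$ time steps. (The $\ghop(j-1)$ is valid in the second phase of that time step because $\pebop(j)$ was just applied and leaves $\state_j = \peb$ while keeping $\state_{j-1} = \peb$.) Part (c) is a recursive call $\blast(\istart + A_{k-3}, \iend)$ on a subgame of length $\iend - (\istart + A_{k-3}) = A_k - 1 - A_{k-3} = A_{k-2} - 1$, which by the induction hypothesis takes exactly $A_{k-2} - 1$ time steps.

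Summing up, the total number of time steps is
\[
1 + (A_{k-3} - 1) + (A_{k-2} - 1) \;=\; A_{k-2} + A_{k-3} - 1 \;=\; A_k - 1,
\]
using the recurrence for $\{A_k\}$, which completes the induction. There is no real obstacle here; the only subtlety worth flagging is the time-step accounting in part (b), where one must invoke Definition~\ref{def:parallelpeb} to see that a $\pebop$ immediately followed by a $\ghop$ on a different site fits into a single time step rather than two.
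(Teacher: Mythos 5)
Your proof is correct and takes essentially the same inductive approach as the paper, which also bundles parts (a) and (b) into $A_{k-3}$ time steps and applies the induction hypothesis to the recursive call on a subgame of length $A_{k-2}-1$. The only small note: since the induction hypothesis is invoked at $k-2$ rather than $k-1$, you should call it strong induction (as the paper does), but this is cosmetic.
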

\begin{proof}
    By strong induction on $k$, see Appendix~\ref{sec:blast_algo_depth_proof}.
\end{proof}

\begin{lemma}\label{lem:unblast_algo_depth}
    For all $k$ and $\istart, \iend$ such that $\iend - \istart = A_k - 1$, the algorithm $\unblast(\istart, \iend)$ takes exactly $A_k - 1$ pebbling time steps.
\end{lemma}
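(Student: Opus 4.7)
The plan is to proceed by strong induction on $k$, exactly paralleling the structure of Lemma~\ref{lem:blast_algo_depth}. The base case $k \in \{2, 3\}$ is immediate, since $A_k - 1 = 1$ and Step~\ref{algostep:find_fib} is a single $\unpeb$ call, taking one time step.

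For the inductive step $k > 3$, the key preliminary observation is that by the recurrence $A_k = A_{k-2} + A_{k-3}$ from Proposition~\ref{prop:A_properties}, the sub-interval $[\istart + A_{k-3}, \iend]$ has length $A_k - 1 - A_{k-3} = A_{k-2} - 1$, so the inductive hypothesis applies to $\unblast(\istart + A_{k-3}, \iend)$ and tells us it takes $A_{k-2} - 1$ time steps. Likewise, $\unblast(\istart, \istart + A_{k-3} - 1)$ takes $A_{k-3} - 1$ steps by induction, and $\blast(\istart, \istart + A_{k-3} - 1)$ takes $A_{k-3} - 1$ steps by Lemma~\ref{lem:blast_algo_depth}. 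These numbers satisfy $(A_{k-2} - A_{k-3}) + (A_{k-3} - 1) = A_{k-2} - 1$, so the splitting of $\unblast(\istart + A_{k-3}, \iend)$ into an initial segment of length $A_{k-2} - A_{k-3}$ and a tail of length $A_{k-3} - 1$ is well-defined. Summing the four sub-steps of the algorithm then gives total depth
\[
(A_{k-2} - A_{k-3}) + (A_{k-3} - 1) + 1 + (A_{k-3} - 1) = A_{k-2} + A_{k-3} - 1 = A_k - 1,
\]
as desired.

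The one subtlety I expect to be the main obstacle is verifying that the parallel composition in step~\ref{algostep:unbla_parallelization}(b) is a legal parallel pebbling move in the sense of Definition~\ref{def:parallelpeb}, i.e.\ that the active sets at every time step are pairwise disjoint. The argument here is that the tail of $\unblast(\istart + A_{k-3}, \iend)$ only touches indices in $[\istart + A_{k-3}, \iend]$, while $\blast(\istart, \istart + A_{k-3} - 1)$ only touches indices in $[\istart, \istart + A_{k-3} - 1]$; the two ranges are disjoint and moreover separated by at least one index, so no $\pebop$/$\unpeb$ active set (which has the form $\{i-1, i\}$) can straddle the boundary. To make this completely rigorous I would likely phrase a small auxiliary claim that both $\blast$ and $\unblast$ only operate on indices in $[\istart + 1, \iend]$, which is directly visible from inspecting Algorithms~\ref{algo:A_blast} and~\ref{algo:A_unblast}. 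Once this is established, the parallel block in step~\ref{algostep:unbla_parallelization}(b) takes exactly $A_{k-3} - 1$ time steps (the common length of its two branches), and the computation above goes through cleanly.
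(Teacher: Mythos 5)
Your proposal is correct and follows essentially the same strong-induction argument as the paper: the same base case, the same decomposition of the recursive step into four stages, and the same arithmetic $(A_{k-2}-A_{k-3})+(A_{k-3}-1)+1+(A_{k-3}-1)=A_k-1$. Your extra remark on the disjointness of active sets in the parallel block is a point the paper treats separately (it asserts validity of the pebbling games before the depth lemmas), so it is a reasonable inclusion but not a divergence in approach.
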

\begin{proof}
    By strong induction on $k$, see Appendix~\ref{sec:unblast_algo_depth_proof}.
\end{proof}
In the below lemmas, we will account for the space at a specific time step $t$, as formally defined in Definition~\ref{def:pebblingspace}.
To do this, we formally define the following quantities:
\begin{definition}[$b_k(t)$ and $u_k(t)$]\label{def:bkuk}
    Fix a positive integer $k \geq 2$ and consider a pebbling sub-game on a segment $[\istart, \iend]$ such that $\iend - \istart = A_k - 1$.
    \begin{itemize}
        \item During the blasting phase (Algorithm~\ref{algo:A_blast}), we label by time step $t \in [1, A_k-1]$ the step in which $\istart + t$ is pebbled.
        We let $b_k(t)$ denote the space usage of time step $t$, excluding the given pebble at $\istart$.

        \item During the unblasting phase (Algorithm~\ref{algo:A_unblast}), we label by time step $t \in [1, A_k-1]$ the step in which $\iend - t + 1$ is unpebbled.
        We let $u_k(t)$ denote the space usage of time step $t$, excluding the given pebble at $\istart$.
    \end{itemize}
\end{definition}

\begin{lemma}
\label{lem:blast_space}
    For any $k \geq 2$ and $t \in [1, A_k - 1]$, we have:
    \begin{equation} \label{eqn:time_dep_blasting_space}
        b_k(t) =  \left\lceil \frac{k  - A^{-1}(A_k - t + 1)}{2} \right\rceil  + 1 \text{   pebbles}.
    \end{equation}
\end{lemma}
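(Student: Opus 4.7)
The plan is to proceed by strong induction on $k$. The base cases $k \in \{2,3\}$ have $A_k = 2$ and a single time step that applies $\pebop(\iend)$; direct inspection gives space 1 (excluding the given pebble at $\istart$), and the formula evaluates to $\lceil (k-3)/2 \rceil + 1 = 1$, matching.

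For the inductive step at $k \geq 4$, I would partition $t \in [1, A_k - 1]$ into the ``initial sweep'' range $t \in [1, A_{k-3}]$ (steps 2a and 2b of Algorithm~\ref{algo:A_blast}) and the ``recursive blast'' range $t \in [A_{k-3}+1, A_k - 1]$ (step 2c). On the recursive range, time step $t$ of the outer game corresponds to step $t' = t - A_{k-3}$ of the inner $\blast(\istart + A_{k-3}, \iend)$, which has length $A_{k-2}$; the outer space equals the inner space plus 1, the extra unit accounting for the marker pebble at $\istart + A_{k-3}$, which the inner recursion excludes (as its own ``starting'' pebble) but the outer game does not. Combining the inductive hypothesis with the identity $A_{k-2} - t' + 1 = A_k - t + 1$ (from $A_k = A_{k-2} + A_{k-3}$) and the elementary ceiling manipulation $\lceil (k-2-m)/2 \rceil + 2 = \lceil (k-m)/2 \rceil + 1$ then yields the claimed formula.

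On the initial sweep range, the state always consists of just the pebble at $\istart$, one moving pebble, and a run of ghosts between them. Direct inspection gives $b_k(1) = 1$ (only $\pebop(\istart + 1)$ fires, and no pebble outside $\istart$ existed at time $0$) and $b_k(t) = 2$ for $t \in [2, A_{k-3}]$ (phase 1 pebbles $\istart + t$ while phase 2 ghosts $\istart + t - 1$, so the indices flagged by the pair $(\state[t-1], \state[t])$ as containing a pebble in one of them are $\istart + t - 1$ and $\istart + t$, excluding $\istart$). To see the formula agrees, observe that $A_k - t + 1 = A_k$ at $t = 1$ and lies in $[A_{k-2} + 1, A_k - 1]$ for $t \in [2, A_{k-3}]$; by monotonicity (Property~\ref{item:increasing} of Proposition~\ref{prop:A_properties}), $A^{-1}(A_k) = k$ yields 1 in the first case, while $A^{-1}(A_k - t + 1) \in \{k-2, k-1\}$ yields 2 in the second.

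The main obstacle I anticipate is the bookkeeping around which pebble is excluded from the count at each level of recursion: the marker pebble that served as the origin of the inner sub-blast must be re-added in the outer count, producing the crucial $+1$ (and hence the $+2$ that is absorbed by the ceiling identity when unrolling). Once this is set up carefully, the remainder reduces to the recurrence $A_k = A_{k-2} + A_{k-3}$ and the short case check at small $k$ (e.g.\ $k = 4$, where the initial sweep range $[2, A_{k-3}]$ is empty and the claim degenerates).
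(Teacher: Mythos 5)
Your proposal is correct and follows essentially the same approach as the paper's proof: both proceed by (strong) induction on $k$, splitting the time range into the initial sweep $t \in [1, A_{k-3}]$ and the recursive blast $t \in [A_{k-3}+1, A_k-1]$, and tracking the $+1$ from the marker pebble at $\istart + A_{k-3}$. The only superficial difference is that the paper manually verifies $k=4$ as an extra base case before inducting from $k \geq 5$, whereas you fold $k=4$ into the inductive step by noting the initial sweep range degenerates; both handle this edge case correctly.
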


\begin{proof}
    By induction on $k$, see Appendix~\ref{sec:blast_space_proof}.
\end{proof}

\begin{lemma}\label{lemma:unblast_recurrence}
    For any integer $k \geq 5$, we have:
    \begin{align*}
    u_k(t) = \begin{cases}
    1 + u_{k-2}(t),
      & \text{if } 1 \leq t \leq A_{k-2} - A_{k-3}, \\
    b_{k-3}(t - (A_{k-2} - A_{k-3})) + 1 + u_{k-2}(t),
      & \text{if } A_{k-2} - A_{k-3} + 1 \leq t \\
      & \quad \leq A_{k-2} - 1, \\
    \lfloor (k-1)/2 \rfloor,
      & \text{if } t = A_{k-2}, \\
    u_{k-3}(t - A_{k-2}),
      & \text{if } A_{k-2} + 1 \leq t \leq A_k - 1.
    \end{cases}
    \end{align*}
\end{lemma}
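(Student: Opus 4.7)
The plan is a direct case analysis mirroring the four phases of $\unblast(\istart, \iend)$ laid out in Algorithm~\ref{algo:A_unblast}. First I would pin down the correspondence between the time index $t \in [1, A_k - 1]$ and the four phases using the identity $A_k = A_{k-2} + A_{k-3}$ together with Lemmas~\ref{lem:blast_algo_depth} and~\ref{lem:unblast_algo_depth}. Specifically: phase (a) (the first $A_{k-2} - A_{k-3}$ steps of the right sub-$\unblast$, which acts on a segment of length $A_{k-2} - 1$) covers $t \in [1, A_{k-2} - A_{k-3}]$; phase (b) (the remaining $A_{k-3} - 1$ steps of that sub-$\unblast$, running in parallel with $\blast(\istart, \istart + A_{k-3} - 1)$, which has depth exactly $A_{k-3} - 1$) covers $t \in [A_{k-2} - A_{k-3} + 1, A_{k-2} - 1]$; phase (c), the lone $\unpeb(\istart + A_{k-3})$, is the single step $t = A_{k-2}$; and phase (d) covers $t \in [A_{k-2} + 1, A_k - 1]$.

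For cases 1, 2, and 4 I would just evaluate Definition~\ref{def:pebblingspace} by invoking the quantities $u_{k-2}, u_{k-3}, b_{k-3}$ on the active sub-games, carefully bookkeeping the ``start'' pebbles that those quantities exclude by convention (Definition~\ref{def:bkuk}). In phases (a) and (b) the pebble at $\istart + A_{k-3}$ serves as the start pebble of the right sub-$\unblast$ and is present throughout; since $u_{k-2}$ excludes it, it must be reinstated as the explicit $+1$. In phase (b) the left-side $\blast$ additionally contributes $b_{k-3}(t - (A_{k-2} - A_{k-3}))$, at its own step index obtained by shifting; its excluded start pebble is the $\istart$ already excluded from $u_k$, so no double-counting occurs, and the active sets of the two parallel sub-games lie in disjoint segments, so their counts simply add. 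In phase (d) the entire right half is empty, the active sub-$\unblast$ starts at $\istart$, and the two excluded-pebble conventions align, giving $u_k(t) = u_{k-3}(t - A_{k-2})$.

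The main obstacle is case 3 at $t = A_{k-2}$, where I cannot reduce to a recursive quantity and must directly count pebbles in $\state[A_{k-2}-1] \cup \state[A_{k-2}]$. At the end of phase (b) the right sub-$\unblast$ has returned to only its start pebble at $\istart + A_{k-3}$, and the left-side $\blast$ has just terminated, so by the output specification of Algorithm~\ref{algo:A_blast} at parameter $k-3$, the segment $(\istart, \istart + A_{k-3} - 1]$ contains marker pebbles at exactly the prefix-sum positions $\istart + A_{k-6},\ \istart + A_{k-6} + A_{k-8},\ \ldots,\ \istart + A_{k-3} - 2,\ \istart + A_{k-3} - 1$. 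Counting these using Property~\ref{item:partialsum} of the $A$-sequence yields $\lfloor (k-3)/2 \rfloor$ markers; adding the pebble at $\istart + A_{k-3}$ (which is in $\state[A_{k-2}-1]$ but not $\state[A_{k-2}]$) produces $\lfloor (k-3)/2 \rfloor + 1 = \lfloor (k-1)/2 \rfloor$, as claimed. As a sanity check, this agrees with $b_{k-3}(A_{k-3}-1) + 1$ computed from Lemma~\ref{lem:blast_space} using $A^{-1}(2) = 3$, which provides a useful independent confirmation of the marker count should the direct enumeration via Property~\ref{item:partialsum} prove delicate for the smallest values of $k$.
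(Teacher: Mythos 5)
Your proposal is correct and follows essentially the same case analysis as the paper's proof: the four phases of Algorithm~\ref{algo:A_unblast} map onto the four ranges of $t$, with the recursive quantities $u_{k-2}$, $b_{k-3}$, $u_{k-3}$ plus the idle marker pebble at $\istart + A_{k-3}$. For the $t = A_{k-2}$ case the paper simply evaluates $b_{k-3}(A_{k-3}-1) + 1$ via Lemma~\ref{lem:blast_space} --- the very cross-check you mention --- so your direct marker enumeration is just an equivalent route to the same count.
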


\begin{proof}
    This follows from the recursive formulation of Algorithm~\ref{algo:A_unblast}, see Appendix~\ref{sec:unblast_recurrence_proof}.
\end{proof}

\begin{lemma}\label{lem:ukfinalbound}
    For any $k \geq 6$, we have:
    \[
u_k(t) \leq
\begin{cases}
k - 3, & \text{if } 1 \leq t \leq A_k - A_{k-1} + 1, \\
k - 4, & \text{if } A_k - A_{k-1} + 2 \leq t \leq A_k - A_{k-2} + 1, \\
k - 5, & \text{if } A_k - A_{k-2} + 2 \leq t \leq A_k - A_{k-3} + 1, \\
k - 6, & \text{if } A_k - A_{k-3} + 2 \leq t \leq A_k - A_{k-5} + 1, \\
k - r - 5, & \text{if } A_k - A_{k-2r+1} + 2 \leq t \\
           & \quad \leq A_k - A_{k-2r-1} + 1,
           \quad \text{for any } r \in [3, \lfloor (k-3)/2 \rfloor].
\end{cases}
\]

\end{lemma}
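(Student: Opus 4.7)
The plan is to prove the bound by strong induction on $k$, using Lemma~\ref{lemma:unblast_recurrence} to reduce the analysis of $u_k$ to bounds on $u_{k-2}$, $u_{k-3}$, and on $b_{k-3}$ (the last of which is already pinned down by Lemma~\ref{lem:blast_space}). Since Lemma~\ref{lemma:unblast_recurrence} only applies for $k \geq 5$ and the lemma's bound acquires additional ranges as $k$ grows, I would dispose of the small cases $k \in \{6, 7, 8\}$ by direct computation: first verifying $u_2, u_3, u_4$ from the definition of $\unblast$ (for these, the algorithm terminates at a base case or a degenerate recursive call, giving $u_k(t) = 1$), and then unrolling the recurrence for $k \in \{5, 6, 7, 8\}$. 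This also naturally handles the subtlety that $u_k(A_{k-2}) = \lfloor (k-1)/2 \rfloor$ (from case~3 of the recurrence) lies in range~2 of the lemma's bound for $k \in \{6, 7, 8\}$ but in range~3 once $k \geq 9$.

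For the inductive step $k \geq 9$, I would split on the four cases of Lemma~\ref{lemma:unblast_recurrence}. In Case~1 ($1 \leq t \leq A_{k-2} - A_{k-3}$), property~\ref{item:consecutivediffsincreasing} of Proposition~\ref{prop:A_properties} gives $A_{k-2} - A_{k-3} \leq A_k - A_{k-1} = A_{k-5}$, which places $t$ in the first range of the lemma, and the IH on $u_{k-2}$ yields $u_k(t) = 1 + u_{k-2}(t) \leq k - 4 \leq k - 3$. In Case~3 ($t = A_{k-2}$), the inequality $\lfloor (k-1)/2 \rfloor \leq k - 5$ is immediate for $k \geq 9$. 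For Case~4 ($A_{k-2} + 1 \leq t \leq A_k - 1$), I would substitute $s = t - A_{k-2} \in [1, A_{k-3} - 1]$ and use $A_k = A_{k-2} + A_{k-3}$ to rewrite each range endpoint $A_{k-3} - A_{k-3-j}$ of the $(k-3)$-statement (in terms of $s$) as $A_k - A_{k-2} - A_{k-3-j}$; these translate into the range endpoints of the $k$-statement from range~3 onward (in terms of $t$), so the translated IH bound on $u_{k-3}$ is always at least as strong as required, though some $(k-3)$ ranges split finer than the corresponding $k$ ranges.

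The main technical obstacle is Case~2 of the recurrence, where $u_k(t) = b_{k-3}(s) + 1 + u_{k-2}(t)$ with $s = t - (A_{k-2} - A_{k-3})$ and $t \in [A_{k-2} - A_{k-3} + 1, A_{k-2} - 1]$. I would subdivide this interval according to the ``level'' $m = (k-3) - A^{-1}(A_{k-3} - s + 1)$, on which Lemma~\ref{lem:blast_space} yields the clean form $b_{k-3}(s) = \lceil m/2 \rceil + 1$. On each such sub-interval I would combine this with the IH bound on $u_{k-2}(t)$ (locating $t$ in the appropriate range of the $(k-2)$-statement using $A_{k-2} - A_{k-j}$ identities from Proposition~\ref{prop:A_properties}) and verify that the sum matches the target bound of the lemma on that sub-interval. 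The trickiest bookkeeping is checking alignment at the boundaries between sub-intervals, since $b_{k-3}(s)$ and $u_{k-2}(t)$ step up at slightly different values of $t$; the non-decreasing successive-differences property~\ref{item:consecutivediffsincreasing} together with the recurrence $A_k = A_{k-1} + A_{k-5}$ from property~\ref{item:consecutivediffs} are the main tools I would rely on to confirm these alignments.
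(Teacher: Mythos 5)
Your proposal matches the paper's proof essentially step for step: strong induction with base cases $k\in\{6,7,8\}$ computed by hand, an inductive step for $k\geq 9$ driven by the four cases of Lemma~\ref{lemma:unblast_recurrence}, Lemma~\ref{lem:blast_space} to evaluate the $b_{k-3}$ term via $A^{-1}$, and Items~\ref{item:consecutivediffs} and~\ref{item:consecutivediffsincreasing} of Proposition~\ref{prop:A_properties} to align the sub-intervals of the different statements. One small slip: your parenthetical claim that $u_k(t)=1$ for all $t$ when $k\in\{2,3,4\}$ is wrong for $k=4$ (unrolling the degenerate recursion gives $u_4(1)=2$, $u_4(2)=1$, which is what makes $u_6([1,4])=[3,3,2,1]$ come out right), but this would surface as soon as you actually carried out the base-case computation and does not affect the structure of the argument.
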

\begin{proof}
    By strong induction on $k$, making use of the recurrence relation presented in Lemma~\ref{lemma:unblast_recurrence}.
    See Appendix~\ref{sec:ukfinalbound_proof} for details.
\end{proof}

\begin{proof}[Proof of Theorem~\ref{thm:parallelpebblingmain}]

Let $k \geq 7$ be minimal such that $A_k \geq \ell$.
It follows from Item~\ref{item:Aasymptoticgrowth} of Proposition~\ref{prop:A_properties} that $k \leq \left\lceil \frac{\log \ell - \log 0.94}{\log \alpha} \right\rceil \leq \left\lceil 2.47\log \ell + 0.23 \right\rceil$.
We will run Algorithm~\ref{algo:A_parallel_optdepth} with this value of $k$.
We associate index $i$ of the length-$A_k$ pebble game with index $i - (A_k - \ell)$ of the length-$\ell$ pebble game, dropping any operations for which $i < (A_k - \ell)$.

Because the first and last $A_k - \ell$ operations of the length-$A_k$ pebble game operate only on sites with $i < (A_k - \ell)$, the depth of the resulting length-$\ell$ pebble game is at most $2A_k - 2(A_k - \ell) = 2\ell$ (and therefore it must be exactly $2\ell$).
Thus we achieve optimal depth for any $\ell$.

To analyze the space, we look at each phase separately:
\begin{itemize}
    \item During blasting, Lemma~\ref{lem:blast_space} implies that the number of pebbles is always at most $\lceil (k-3)/2 \rceil + 1 = \lceil (k-1)/2 \rceil < k-3$.
    \item When pebbling and unpebbling site $\ell$, the number of pebbles is hence at most $\lceil (k+1)/2 \rceil \leq k-3$.
    \item When unblasting, Lemma~\ref{lem:ukfinalbound} implies that the number of pebbles is at most $k-3$.
\end{itemize}
Thus the pebbling space is $\leq k-3 \leq \left\lceil 2.47\log \ell - 2.77 \right\rceil$, as claimed.
\end{proof}

\subsection{Exactly Optimal Pebbling Schemes via A* Search}\label{sec:julia}

In Section~\ref{sec:parallelspookypebbling} we construct a parallel pebbling scheme that achieves the exactly optimal depth of $2\ell$ using at most $\left\lceil 2.47\log \ell - 2.77 \right\rceil$ pebbles.
However, there are several reasons that this is not the end of the story, in particular for concrete, practical use of pebble games.
These include:
\begin{itemize}
    \item It is not clear from established lower bounds that the constant factor on this pebble count is tight;
    \item One may desire to hold the number of pebbles $s$ constant, and find the optimal-depth solution under that constraint (which may be greater than $2\ell$);
    \item There may be many solutions achieving that same optimal depth for a given $s$, and it may be desirable to use some other metric (e.g. minimizing total number of pebble operations) to choose among them; 
    \item For practical situations (see Section~\ref{sec:factoringcosts}), some extra ancillary space may be needed inside of $\pebop(i)$ and $\unpeb(i)$, and the amount may even vary with $i$.
\end{itemize}
For these reasons, we have implemented in the Julia programming language a highly optimized A* search to find the absolutely optimal pebble game for concrete values of the parameters, such as the number of pebbles $s$ and the pebble game length $\ell$.
In the left plot of Figure~\ref{fig:julia-results}, we compare the results output by this search to Theorem~\ref{thm:parallelpebblingmain}.
We find that while the algorithm of Theorem~\ref{thm:parallelpebblingmain} is nearly optimal, the constant factor still has room to be improved slightly.
In the right plot of Figure~\ref{fig:julia-results}, we plot the optimal depth achievable for constant values of $s$, as a function of $\ell$.
We see that once it is no longer possible to achieve the optimal depth of $2\ell$, the optimal depth for very small numbers of pebbles increases dramatically; meanwhile, for slightly larger numbers of pebbles, the depth remains close to optimal even as the pebble game length $\ell$ continues to increase.
The Julia code is available on GitHub and Zenodo: \url{https://doi.org/10.5281/zenodo.17298960}~\cite{julia_code}.

\begin{figure}
    \centering
    \includegraphics[width=0.49\linewidth]{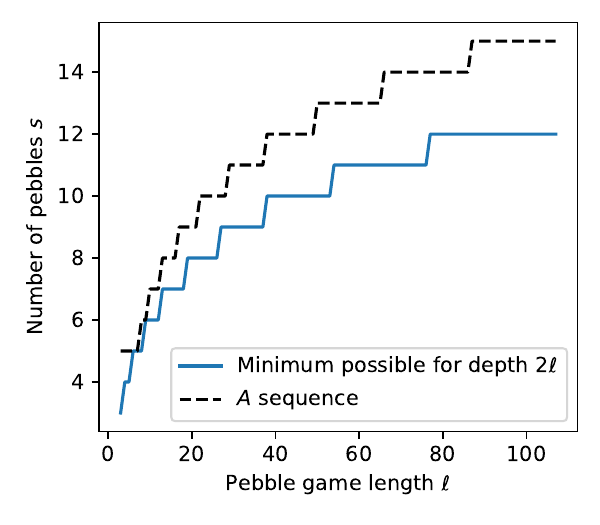}
    \includegraphics[width=0.49\linewidth]{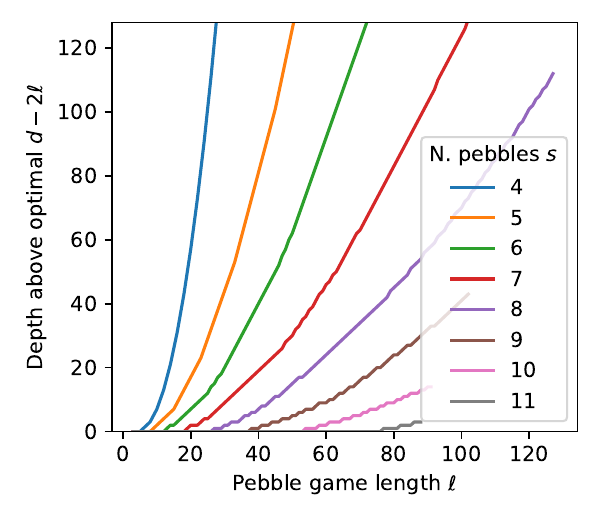}
    \caption{Results of the $A^*$ search for optimal parallel pebble games. In the left figure, we plot the minimum number of pebbles $s$ with which it is possible to achieve the absolute optimal depth $2\ell$, and compare with the space used by the $A$-sequence construction of Theorem~\ref{thm:parallelpebblingmain}. In the right figure, we plot the depth as a function of $\ell$, for different constant values of the number of pebbles $s$.}
    \label{fig:julia-results}
\end{figure}

\section{Preliminaries for Our Factoring Application}\label{sec:prelims}

Let $N < 2^n$ be an $n$-bit number that we wish to factor.
We use $\norm{\cdot}$ to denote the $\ell_2$ norm of a vector with real entries.

\subsection{Classical Algorithms for Lattice Problems}\label{sec:classicallatticealgos}

The core of the classical post-processing procedure in Regev's quantum factoring algorithm~\cite{Regev23} is an approximate solution to the following lattice problem, which we refer to as the \emph{short bounded basis problem} ($\sbbp$):

\begin{definition}[$\alpha$-approximate Short Bounded Basis Problem]
    The \emph{$\alpha$- approximate short bounded basis problem} $\sbbp_\alpha$ is the following task: given as input a basis for a lattice $\mathcal{L} \subset \RR^r$ together with a real number $T > 0$, output a list of vectors $z_1, \ldots, z_t \in \mathcal{L}$ (with $t \leq r$) such that both of the following conditions are true:
    \begin{enumerate}
        \item (Shortness) For all $i \in [t]$, we have $\norm{z_t} \leq \alpha T$.
        \item (Correctness) Any vector in $\mathcal{L}$ of norm $\leq T$ must be an integer linear combination of $z_1, \ldots, z_t$.
    \end{enumerate}
\end{definition}

\begin{remark}[Related lattice problems]
    This problem is closely related to the Short Basis Problem~\cite{DBLP:conf/icalp/Ajtai99}; in fact, they will be equivalent when $T > \lambda_r(\mathcal{L})$.
    While it is likely that $\sbbp_\alpha$ can generically be reduced to a more standard lattice problem such as $\mathsf{SVP}_{\alpha'}$ ($\alpha'$-approximate shortest vector), these reductions typically incur concrete losses in the parameters.
    Our focus is on algorithms that directly solve this problem via lattice basis reduction, following the template laid out by~\cite[Claim 5.1]{Regev23}.
\end{remark}

\paragraph{Heuristic algorithms.}
To aid our concrete resource estimates in Section~\ref{sec:factoringcosts}, we will focus on algorithms that work well in practice, rather than those that come with provable guarantees such as LLL~\cite{lenstra1982factoring} and slide reduction~\cite{gn08}.
One such algorithm is BKZ basis reduction which is analyzed heuristically using the \emph{root-Hermite factor} as a metric:

\begin{definition}[Root-Hermite factor]
    Suppose that a lattice basis reduction algorithm takes as input an arbitrary basis for a lattice $\mathcal{L} \subseteq \RR^r$ and outputs a reduced basis $\vecb_1, \ldots, \vecb_r$.
    The \emph{root-Hermite factor} achieved by this algorithm is the value $\delta_0 > 0$ such that:
    $$\norm{\vecb_1} = \delta_0^r (\det \mathcal{L})^{1/r}.$$
\end{definition}
We will work with the following heuristic, which we justify in Appendix~\ref{sec:assumptionjustification}.

\begin{assumption}\label{assumption:roothermitetosbbp}
    If a lattice basis reduction algorithm achieves root-Hermite factor $\delta_0$ in dimension $r$, then it can replace the use of LLL reduction in Regev's algorithm and yield an algorithm for $\sbbp_\alpha$, where:
    $$\alpha = \frac{\sqrt{r}}{2} \cdot \delta_0^{2r}.$$
\end{assumption}
\noindent
We will base our analyses on BKZ lattice reduction with block size $\beta = 60$, $120$, $160$, $200$.
Looking ahead, we will be applying these algorithms to lattices of dimension in the 300-600 range.
Given this, we tabulate estimates of the corresponding root-Hermite factors and costs in 600 dimensions, obtained using the online estimator\footnote{Available at \texttt{https://github.com/malb/lattice-estimator}.} by Albrecht et al~\cite{DBLP:journals/jmc/AlbrechtPS15}, in Table~\ref{roothermitetable}.
These cost estimates indicate that our setups range from being feasible for most academic institutions ($\beta = 60$) to being beyond all known implementations but still not totally inconceivable ($\beta = 200$).

\subsection{Applying $\sbbp_\alpha$ to Regev Factoring}

We refer the reader to~\cite[Section 3]{rv24} for a high-level overview of Regev's factoring algorithm, here conveying only the core points.
The broad structure of the algorithm consists of running a particular quantum circuit several times, and performing classical postprocessing to combine the measurement results in a way that reveals the factors of the input $N$.
The main operation in the quantum circuit is the application of following unitary (here defined by its action on basis states):
\begin{equation}
    \mathcal{U}(\mathbf{a}, N) \ket{\mathbf{z}}\ket{0} = \ket{\mathbf{z}}\ket{\prod_{j} a_j^{z_j} \bmod N}
    \label{eq:multiexp-unitary}
\end{equation}
where $\mathbf{z} \in \mathbb{Z}^n$ and $\mathbf{a} \in \mathbb{Z}^n_N$ is classical.
It is the implementation of this unitary to which we apply our pebbling techniques.

With this in mind, let us begin with some notation.
\begin{itemize}
    \item Let $d$ the dimension of the lattice considered by Regev~\cite{Regev23}, which the algorithm designer is free to select;
    \item let $R$ denote the radius of the initial quantum superposition over $\ZZ^d$;
    \item let $m \geq d+4$ be the number of times we run and measure the quantum circuit; and
    \item let $r = d+m$.
\end{itemize}
For clarity, we also explicitly restate Regev's number-theoretic assumption in the setting of general dimensions $d$.
We work with the variant used by Eker{\aa} and G{\"a}rtner~\cite[Assumption 1]{ekeragartner}, which is stronger but comes with the (relatively minor) benefit of allowing one to work with primes rather than squares of primes as the bases.

\begin{conjecture}[Assumption 1 of~\cite{ekeragartner}]\label{conjecture:regevnt}
    For the base integers $a_1, \ldots, a_d$, define the following lattice in $d$ dimensions:
    \begin{align*}
        \mathcal{L} &= \left\{(z_1, \ldots, z_d) \in \ZZ^d: \prod_{j = 1}^d a_j^{z_j} \equiv 1 \bmod{N}\right\}.
    \end{align*}
    Then we conjecture that there exists a basis of $\mathcal{L}$ with all vectors of $\ell_2$ norm at most $2^{Cn/d}$ for some given constant $C > 0$.
\end{conjecture}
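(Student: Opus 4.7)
Since the statement is posed as a conjecture, a full unconditional proof is out of reach, and I would aim instead for a heuristic justification that isolates the irreducible number-theoretic difficulty. The strategy has three steps: (i) bound $\det(\mathcal{L})$ exactly, (ii) apply Minkowski-type bounds to control the successive minima on average, and (iii) invoke a genericity heuristic to argue that no single successive minimum is anomalously large, then convert short independent vectors into a short basis.

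\emph{Steps 1 and 2.} The map $\vecx \mapsto \prod_j a_j^{x_j} \bmod N$ is a group homomorphism $\ZZ^d \to (\ZZ/N\ZZ)^\ast$ with kernel $\mathcal{L}$, so $\det(\mathcal{L}) = [\ZZ^d : \mathcal{L}]$ equals the order of the subgroup generated by $a_1, \ldots, a_d$, which is at most $|(\ZZ/N\ZZ)^\ast| < 2^n$. Minkowski's second theorem then gives $\prod_{i=1}^d \lambda_i(\mathcal{L}) \leq (c\sqrt{d})^d \cdot \det(\mathcal{L})$ for a universal $c > 0$, so the geometric mean of the successive minima is at most $O(\sqrt{d} \cdot 2^{n/d})$. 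Combined with a standard transformation turning $d$ independent lattice vectors of norm $\leq B$ into a basis of norm $\leq B\sqrt{d}/2$ (e.g.\ via size-reduction of the Hermite normal form), this would already give a basis of norm $2^{(1+o(1))n/d}$ \emph{if} we could verify that $\lambda_d(\mathcal{L})$ is comparable to $\lambda_1(\mathcal{L})$ up to $\mathrm{poly}(d)$ factors.

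\emph{Step 3 and the main obstacle.} The crux is ruling out a lopsided geometry in which most $\lambda_i$ are tiny but $\lambda_d$ is enormous. One can easily construct abstract lattices of small covolume where such behavior occurs (e.g.\ orthogonal sums of a very short with a very long sublattice), and no unconditional argument seems to preclude this for $\mathcal{L}$. The heuristic used in the literature is that a ``random'' $d$-tuple $(a_1, \ldots, a_d)$ yields a relation lattice statistically indistinguishable from a uniformly random lattice of covolume $\det(\mathcal{L})$, for which the Gaussian heuristic predicts every $\lambda_i$ to be $\Theta(\sqrt{d} \cdot \det(\mathcal{L})^{1/d})$; this would deliver the conjecture for any constant $C > 1$ after absorbing $\mathrm{poly}(d)$ factors (harmless since $d = O(\sqrt{n})$ in the factoring application). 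Rigorously justifying this genericity statement would require controlling multiplicative correlations and subgroup orders of small integers modulo $N$, which is a notoriously deep problem in analytic number theory; I therefore expect any realistic justification to either assume GRH-style hypotheses on the distribution of smooth subgroup orders modulo $N$, or to be replaced by experimental verification on the specific $N$ of interest, as in prior work on Regev's algorithm.
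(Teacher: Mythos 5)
This statement is posed as a conjecture, and the paper does not prove it; after restating it, the paper only remarks that ``a simple heuristic argument based on the pigeonhole principle and put forth by Regev suggests that taking $C = 1+\epsilon$ likely suffices'' and cites Regev~\cite{Regev23} and Eker{\aa}--G{\"a}rtner~\cite{ekeragartner}. You correctly recognize this and offer a heuristic justification rather than a proof, which is the right posture.

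Your heuristic is sound and is a reasonable alternative presentation of essentially the same idea the cited works rely on. The covolume computation is correct: the relation lattice $\mathcal{L}$ is the kernel of the homomorphism $\ZZ^d \to (\ZZ/N\ZZ)^\ast$, so $\det(\mathcal{L}) = [\ZZ^d:\mathcal{L}]$ is the order of the subgroup generated by the $a_j$'s, which is at most $\varphi(N) < 2^n$. You then control the geometric mean of the successive minima via Minkowski's second theorem and convert linearly independent short vectors into a short basis, and you correctly isolate the irreducible obstacle: nothing unconditional rules out a lopsided successive-minima profile with $\lambda_d$ huge, and the only way forward is a genericity heuristic (here, the Gaussian heuristic applied to the relation lattice). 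Regev's own heuristic is a pigeonhole count---the box $[-T,T]^d$ with $T \approx 2^{Cn/(2d)}$ contains more than $2^n$ integer points once $C>1$, forcing a collision and hence a short nonzero vector---and it faces the same gap: a pigeonhole argument alone produces one short vector (or $d$ independent ones with more care), but promoting this to a short \emph{basis} with all vectors of comparable length requires the same implicit genericity assumption you name. So the two routes are morally the same: covolume bound plus ``the relation lattice behaves like a random lattice of that covolume.'' Your framing via Minkowski's second theorem makes the reduction to genericity more explicit, while Regev's pigeonhole version is slightly more elementary; neither closes the gap, which is exactly why the paper treats the statement as an assumption and refers to~\cite{ekeragartner} for its precise formulation and to heuristic or experimental evidence for support.

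One small caution: you write that the geometric-mean bound ``would already give a basis of norm $2^{(1+o(1))n/d}$ \emph{if} we could verify that $\lambda_d \approx \lambda_1$.'' To turn this into the conjecture's $2^{Cn/d}$ bound you also need $\mathrm{poly}(d)$ factors to be absorbable into $2^{o(n/d)}$, i.e.\ $\log d = o(n/d)$; you note this is harmless because $d = O(\sqrt{n})$ in the application, which is correct, but it is worth stating that the conjecture as phrased is really only expected to hold in that regime.
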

\noindent
A simple heuristic argument based on the pigeonhole principle and put forth by Regev~\cite{Regev23} suggests that taking $C = 1+\epsilon$ likely suffices.
Below, we abstract out the classical post-processing procedure of Regev~\cite{Regev23} in terms of $\sbbp_\alpha$.

\begin{lemma}[Follows from analysis in Section 5 of~\cite{Regev23}]\label{lemma:regevlattice}
    Let $\alpha > 0$ be such that the following inequality is satisfied:
    \begin{equation}\label{eqn:dimensionsinequalityprelims}
        \alpha \cdot \sqrt{m+1} \cdot 2^{Cn/d} < \frac{\sqrt{2} \cdot R}{\sqrt{d}} \cdot \left(4 \cdot 2^n\right)^{-1/m}/6.
    \end{equation}
    Then, assuming Conjecture~\ref{conjecture:regevnt} and given the output from $m$ runs of the quantum circuit as input, it is possible to classically output a nontrivial factor of $N$ with one call to an oracle that solves $\sbbp_\alpha$ in $r = d+m$ dimensions, and some additional polynomial-time operations.
\end{lemma}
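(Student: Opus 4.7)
The plan is to follow Regev's classical postprocessing~\cite{Regev23} essentially verbatim, with the $\sbbp_\alpha$ oracle playing the role that LLL played in the original analysis. First I would recall the lattice $\Lambda \subset \RR^{d+m}$ that Regev's postprocessing constructs from the $m$ measurement outputs $\mathbf{y}_1, \ldots, \mathbf{y}_m$: it is designed so that a vector $(\mathbf{z}, \mathbf{k}) \in \Lambda$ with small first $d$ coordinates corresponds to an element $\mathbf{z}$ of the relation lattice $\mathcal{L}$ from Conjecture~\ref{conjecture:regevnt}, while the remaining $m$ coordinates record how $\mathbf{z}$ interacts with the measurement vectors $\mathbf{y}_i$. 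The scaling (depending on $R$, $D = 2^{\lceil \log(4 \cdot 2^n) \rceil}$, and $m$) is exactly as in~\cite[Section 5]{Regev23}.

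Next I would invoke Conjecture~\ref{conjecture:regevnt} to obtain a basis of $\mathcal{L}$ whose vectors have norm at most $2^{Cn/d}$. Each such basis vector $\mathbf{z}$ lifts canonically to a vector in $\Lambda$; a direct calculation (the same bookkeeping as in~\cite[Claim 5.1]{Regev23}) shows this lift has norm at most $T := \sqrt{m+1} \cdot 2^{Cn/d}$, with the extra $\sqrt{m+1}$ accounting for the $m$ correction coordinates. I would then feed $\Lambda$ together with this threshold $T$ to the $\sbbp_\alpha$ oracle.

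By the shortness guarantee of $\sbbp_\alpha$, the oracle returns vectors $z_1, \ldots, z_t \in \Lambda$ all of norm at most $\alpha T = \alpha \cdot \sqrt{m+1} \cdot 2^{Cn/d}$; by the correctness guarantee, every lift of a basis vector of $\mathcal{L}$ lies in the integer span of the $z_i$, so the $z_i$ suffice to recover all the short relations among the $a_j$'s that Regev's post-processing needs. The final step is to observe that the hypothesized inequality is precisely the condition on $\alpha T$ under which Regev's analysis shows that these short relations yield, via elementary order-finding style arguments on parities of exponents and $\gcd$'s with $N$, a nontrivial factor of $N$ with the claimed probability --- the right-hand side $\tfrac{\sqrt{2} R}{\sqrt{d}} (4 \cdot 2^n)^{-1/m}/6$ matches the bound derived in~\cite[Section 5]{Regev23} for LLL-reduced vectors to still be classified as ``short enough.''

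The main obstacle, though purely a matter of bookkeeping, is simply verifying that the constants align: that the factor $\sqrt{m+1}$ picked up in lifting $\mathcal{L}$ into $\Lambda$, combined with the constants appearing in Regev's successful-factoring threshold, together reconstitute exactly Equation~\eqref{eqn:dimensionsinequalityprelims}. No new idea beyond Regev's analysis is required --- the substance of the lemma is the abstraction of LLL into the cleaner $\sbbp_\alpha$ interface, which makes the bound transparently depend on the approximation factor $\alpha$ alone and thereby lets us plug in the heuristic BKZ estimates from Table~\ref{roothermitetable} in the next section.
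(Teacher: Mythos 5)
Your proposal matches the paper's intended derivation: the paper offers no proof of this lemma, attributing it directly to the analysis in Section 5 of Regev's paper with the LLL step abstracted into the $\sbbp_\alpha$ interface, which is exactly the adaptation you describe (lift the short generators of $\mathcal{L}$ guaranteed by Conjecture~\ref{conjecture:regevnt} into the $(d+m)$-dimensional postprocessing lattice with the $\sqrt{m+1}$ overhead, call the oracle with threshold $T = \sqrt{m+1}\cdot 2^{Cn/d}$, and check that $\alpha T$ stays below Regev's distinguishing threshold, which is the stated inequality). One minor slip worth correcting: the quantity $4 \cdot 2^n$ appearing in Equation~\eqref{eqn:dimensionsinequalityprelims} bounds (four times) the determinant of the relation lattice, not the parameter $D$, which the paper defines as the smallest power of two at least $2\sqrt{d}\cdot R$; this parenthetical does not affect the structure of your argument.
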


\begin{remark}
    As shown in~\cite[Appendix A]{ekeragartner}, assuming a stronger variant of Conjecture~\ref{conjecture:regevnt} and making some low-cost tweaks to Regev's algorithm allows one to find the multiplicative order of an arbitrary $g \in \ZZ_N^*$.
    By~\cite{DBLP:journals/qip/Ekera21,DBLP:journals/tqc/Ekera24}, this then typically suffices to obtain a \emph{complete} factorization of $N$ rather than just one nontrivial factor.
\end{remark}

\section{Application: Concrete Resource Estimates for Regev Factoring}\label{sec:factoringcosts}

Recall from our overview in Section~\ref{sec:regevpebblingoverview} that we have \textit{asymptotically} efficient circuits implementing Regev's factoring algorithm~\cite{Regev23,rv24,cryptoeprint:2024/636}; however, for cryptographically relevant problem sizes these leave much to be desired in terms of practical costs~\cite{ekeragartnercomparison}.
In this section, we apply our pebbling techniques introduced in Section~\ref{sec:parallelspookypebbling} to make progress towards remedying this problem.
We focus on $n$-bit RSA factoring for $n \in \{2048, 4096\}$.

\subsection{From Pebbling to Factoring}\label{sec:peb2fac}

Here we explicitly describe how we cast the implementation of Equation~\eqref{eq:multiexp-unitary} as a pebble game.
As in Regev's original paper, our task is to compute $(z_1, \ldots, z_d) \mapsto \prod_{j = 1}^d a_j^{z_j} \bmod N$ via repeated squaring.
Here, $a_1, \ldots, a_d$ are some classically known small bases and the $z_j$'s are in quantum superposition.
They will range from 0 to $D-1$, where $D$ is a power of 2.
Regev~\cite{Regev23} originally proposed to structure this computation as follows:

\begin{enumerate}
    \item $y_0 \leftarrow 1$
    \item for $i$ from 1 to $\log D$
    \begin{enumerate}
        \item $y_{2i-1} \leftarrow y_{2i-2}^2 \bmod N$
        \item $y_{2i} \leftarrow \left(\prod_{j = 1}^d a_{j}^{z_{j,i}}\right) y_{2i-1} \bmod{N}$
    \end{enumerate}
    \item return $y_{2 \log D}$
\end{enumerate}
where $z_{j,i}$ is the $i^\mathrm{th}$ \emph{highest-order} bit of $z_j$ (indexed from 1).
This algorithm can be interpreted as a line graph of length $2\log D$, where each of the $y_*$ are vertices and the operations $y_{2i-1} \leftarrow y_{2i-2}^2 \bmod N$ and $y_{2i} \leftarrow (\prod_{j = 1}^d a_{j}^{z_{j,i}}) y_{2i-1} \bmod{N}$ correspond to the odd- and even-indexed edges, respectively.
Thus we may use pebble games to efficiently compute the final value $y_{\log D} = \prod_j a_j^{z_j} \bmod N$.
Indeed, applying our spooky pebbling scheme from Section~\ref{sec:parallelspookypebbling} yields the following theorem:

\begin{theorem}\label{thm:asymptoticwithoptimalpeb}
There exists a quantum circuit for the unitary $\mathcal{U}$ of Equation~\eqref{eq:multiexp-unitary} which has ($n$-bit) multiplication depth $4 \log D$ and uses at most $1.3 (2n + S_\times(n) + o(n)) \log \log D$ qubits of space, where $S_\times(n)$ is the number of ancilla qubits used by an $n$-bit multiplication mod $N$ and we assume $\log \left[ \prod_j a_j \right] = o(n)$.
\end{theorem}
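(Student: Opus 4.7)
The plan is to phrase the unitary $\mathcal{U}(\mathbf{a}, N)$ as a length-$\log D$ line-graph spooky pebble game and then invoke Theorem~\ref{thm:parallelpebblingmain}. Concretely, I would define vertices $y_0 = 1$ and
\begin{align*}
y_i = y_{i-1}^2 \cdot \prod_{j=1}^d a_j^{z_{j,i}} \bmod N \qquad (i = 1, \ldots, \log D),
\end{align*}
so that $y_{\log D}$ is the desired output $\prod_j a_j^{z_j} \bmod N$. The edge $y_{i-1}\mapsto y_i$ is implemented by first squaring $y_{i-1}$ reversibly into a fresh register, and then multiplying by the small product $\prod_j a_j^{z_{j,i}}$, which has bit-length $o(n)$ by assumption and can be constructed via $d$ controlled multiplications by the classical constants $a_j$. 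This accounts for two units of $n$-bit multiplication depth and $S_\times(n) + o(n)$ transient ancilla per pebble op.

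Next, I would apply Theorem~\ref{thm:parallelpebblingmain} with $\ell = \log D$, producing a parallel spooky pebbling of depth $2\log D$ using pebble space $s \le \lceil 2.47 \log\log D - 2.77 \rceil$. Since each pebble op corresponds to two $n$-bit multiplication layers, the overall multiplication depth is $4\log D$, matching the claim. For the qubit count, each pebble index stores a single $n$-qubit residue register, and each in-flight $\pebop$ or $\unpeb$ reserves an additional $S_\times(n) + o(n)$ ancilla qubits. By Definition~\ref{def:pebblingops} the parallel first-phase moves have pairwise disjoint active sets of size~$2$, while Definition~\ref{def:pebblingspace} already counts both endpoints of every such set. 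Consequently, writing $s_t$ for the pebble space at time $t$, the number of simultaneously in-flight multiplications is at most $s_t/2$, and summing both contributions yields a total of
\begin{align*}
    s_t \cdot n \;+\; \tfrac{s_t}{2}\bigl(S_\times(n) + o(n)\bigr) \;=\; \tfrac{s_t}{2}\bigl(2n + S_\times(n) + o(n)\bigr)
\end{align*}
qubits at time $t$. Bounding $s_t/2 \le 1.3 \log\log D$ for large $\log D$ gives the theorem.

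The main subtlety I anticipate lies in the parallel space accounting: a naive count would give $s_t \cdot (n + S_\times(n) + o(n))$, doubling the leading constant. Obtaining the sharper $s_t/2$ multiplier for the $S_\times(n)$ term requires carefully unpacking definitions: an in-flight pebble or unpebble operation at edge $(i-1,i)$ occupies both of its endpoints in the pebble-space count, and the disjoint-active-set condition of Definition~\ref{def:parallelpeb} guarantees that distinct simultaneous operations occupy disjoint pairs. This is precisely what makes $1.3 \approx 2.47/2$ the correct leading constant. Secondary routine checks are that (a) the $\ghop$ second-phase of Definition~\ref{def:parallelpeb} is a Hadamard-basis measurement and thus uses no extra qubits, and (b) the $o(n)$-qubit register holding $\prod_j a_j^{z_{j,i}}$ can be computed, multiplied in, and uncomputed within the same pebble op without exceeding the stated $o(n)$ budget. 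Once these checks are in place, the asymptotic depth and qubit bounds follow by direct substitution of parameters.
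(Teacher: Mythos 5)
There is a genuine gap in the space accounting, and it stems from the decomposition you chose. You cast the computation as a pebble game on a line graph of length $\ell = \log D$ in which each edge $y_{i-1} \mapsto y_i$ performs a squaring \emph{followed by} a multiplication by the small product $t_i = \prod_j a_j^{z_{j,i}}$. The paper instead uses a line graph of length $\ell = 2\log D$ with exactly \emph{one} out-of-place $n$-bit multiplication per pebble operation: odd-indexed vertices hold the squared values $y_{2i-1} = y_{2i-2}^2 \bmod N$, and even-indexed vertices hold $y_{2i} = t_i \cdot y_{2i-1} \bmod N$. Both decompositions give the correct multiplication depth of $2\ell = 4\log D$, so your depth argument is fine.

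The problem shows up in your ancilla count of $S_\times(n) + o(n)$ per in-flight $\pebop(i)$. Inside a merged pebble op you must hold the squared value $y_{i-1}^2 \bmod N$ \emph{somewhere} before it can be multiplied by the quantum value $t_i$: either it is a separate $n$-qubit intermediate register, or you place it in the output register and then need an extra $n$-qubit ancilla to do the in-place modular multiplication by $t_i$ (since $t_i$ is quantum, not a fixed classical constant — Gidney's constant-workspace classical-quantum multiplier does not apply to quantum multiplicands and in any case does not incorporate the mod-$N$ reduction). So the true transient cost per in-flight operation in your scheme is $n + S_\times(n) + o(n)$, not $S_\times(n) + o(n)$. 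Your space bound would then become $\tfrac{s_t}{2}\bigl(3n + S_\times(n) + o(n)\bigr) \approx 1.24\bigl(3n + S_\times(n) + o(n)\bigr)\log\log D$, which is strictly worse (by about a factor of $3/2$ when $S_\times(n) = o(n)$) than the $1.3\bigl(2n + S_\times(n) + o(n)\bigr)\log\log D$ claimed in the theorem. The paper's decomposition avoids this issue by design: the squared value $y_{2i-1}$ \emph{is itself a pebble}, so there is no hidden $n$-qubit intermediate inside a pebble op, and each op genuinely uses only $S_\times(n) + o(n)$ transient ancilla. The fix is to set $\ell = 2\log D$ and assign a single out-of-place multiplication to each pebble step; the extra $\approx 2.47$ pebbles this costs (from $\log(2\log D)$ vs.\ $\log\log D$) contribute only an additive $O(n)$ rather than the multiplicative blow-up your merged scheme incurs.

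Your other observations are correct: the $s_t/2$ bound on the number of simultaneously in-flight operations from the disjoint-active-set condition, the fact that $\ghop$ is a measurement and costs no qubits, and the fact that constructing $t_i$ itself takes only $o(n)$ qubits. Those parts carry over unchanged once you switch to the length-$2\log D$ decomposition.
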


\begin{proof}
See Appendix~\ref{app:asymptotic_factoring_proof}.
\end{proof}
\noindent
Extending on this theoretical analysis, we also numerically compute the multiplication depth for certain concrete problem sizes (see Section~\ref{subsec:factoring-results}).
For practical efficiency in that analysis we make the following further optimizations, alluded to earlier in Section~\ref{sec:techoverview}:
\begin{enumerate}
    \item We use an amortization trick introduced by~\cite{cryptoeprint:2024/636} that bears some similarity to windowing~\cite{gidney2019windowed} to reduce the frequency with which products like $\prod_j a_{j}^{z_{j,i}}$ need to be multiplied in, reducing the overall length of the pebble game.
    \item We immediately measure the final value $y_{2\log D} = \prod_i a_i^{z_i} \bmod N$ when it is produced rather than insisting on keeping it around until the end as in Equation~\eqref{eq:multiexp-unitary}.
    This allows the qubits holding that value to be recycled.
    \item We compute each product $\prod_{j = 1}^d a_j^{z_{i, j}}$ by simply computing each partial product $\prod_{j = 1}^{j_\mathrm{max}} a_j^{z_{i, j}}$ via a series of in-place classical-quantum multiplies, rather than using a product tree as in~\cite{Regev23}. This saves a significant number of qubits in practice and does not asymptotically hurt the gate count when using schoolbook arithmetic.
    This is because we do not need to store an entire product tree with $d$ leaves.
    Indeed, using Gidney's recent low-space classical-quantum adder $m$ times~\cite{gidney2025classicalquantumadderconstantworkspace}, we can multiply an $m$-qubit quantum value by a classical one in-place using only two total ancilla qubits.
    \item In $\pebop(2i)$, rather than explicitly uncomputing the temporary value $t_{i}=\prod_{j = 1}^d a_j^{z_{i, j}}$, we use measurement-based uncomputation to ``ghost'' it.
    \sloppy
    \item We make our pebbling strategies \textit{ancilla-aware}---numerically optimizing them in a way that accounts for the fact that different pebble steps (multiplications and squarings) require different numbers of ancilla qubits.
\end{enumerate}
\fussy
We record our concrete algorithm with these optimizations by explicitly defining the operations $\pebop$ and $\unpeb$ (see Section~\ref{sec:parallelspookypebbling}) in Algorithms~\ref{algo:multiexp_pebble} and~\ref{algo:multiexp_unpebble} respectively.
In these algorithms we assume without loss of generality that the final pebble index is divisible by the window size $w$.
Also, the notation $z_{j,[i:i+w)}$ means ``the integer consisting of bits $i$ to $i+w$ of $z_j$'', in highest-bit order as usual.

Finally, we remark that our use of pebbling and spookiness allows us to avoid cumbersome overheads and complications that arise in the Fibonacci-based implementations~\cite{rv24,cryptoeprint:2024/636} of Regev's circuit (which is reversible without relying on mid-circuit measurements).
The main such overhead is the need to track the inverses mod $N$ of intermediate quantities used in the algorithm, which we completely circumvent.
Indeed, this is the main reason that our methods give concrete improvements over the Fibonacci-based approaches.

\begin{algorithm}[h]
    \caption{$\pebop(i)$ for Regev factoring}
    \label{algo:multiexp_pebble}
    \KwData{Classical values $\mathbf{a}$, $N$, window size parameter $w$, and pebble index $i \in [\log D-w+\log D/w]$. Quantum registers $\mathbf{z}$, $\mathsf{in}$, and $\mathsf{out}$; at the start, $\mathsf{in}$ contains $y_{i-1}$ and $\mathsf{out}$ contains $0$.}
    \KwResult{$\mathsf{out}$ is set to $y_i$, and the other registers are unchanged.}

    \eIf{$i \not \equiv 1 \pmod{w+1}$}
    {
        $\mathsf{out} \gets \mathsf{in}^2\bmod{N}$
    }{
        Classically compute $i' = (i-1)w/(w+1)$. \\
        $\mathsf{t} \gets \prod_j a_j^{z_{j,[i':i'+w)}}$\\
        $\mathsf{out} \gets \mathsf{t} * \mathsf{in} \bmod{N}$\\
        $\mathsf{Ghost}(\mathsf{t})$
    }
\end{algorithm}

\begin{algorithm}
    \caption{$\unpeb(i)$ for Regev factoring}
    \label{algo:multiexp_unpebble}
    \KwData{Classical values $\mathbf{a}$, $N$, window size parameter $w$, and pebble index $i \in [\log D-w+\log D/w]$. Quantum registers $\mathbf{z}$, $\mathsf{in}$, and $\mathsf{out}$; at the start, $\mathsf{in}$ contains $y_{i-1}$ and $\mathsf{out}$ contains $y_i$.}
    \KwResult{$\mathsf{out}$ is set to $0$, the other registers are unchanged, and all ghosts of intermediate values are removed.}

    \eIf{$i \not \equiv 1 \pmod{w+1}$}
    {
        $\mathsf{out} \gets \mathsf{out}-\mathsf{in}^2 \bmod{N}$
    }{
        Classically compute $i' = (i-1)w/(w+1)$. \\
        $\mathsf{t} \gets \prod_j a_j^{z_{j,[i':i'+w)}}$ \\
        $\mathsf{out} \gets \mathsf{out}-\mathsf{t} * \mathsf{in} \bmod{N}$\\
        $\mathsf{Unghost}(\mathsf{t})$\\
        $\mathsf{t} \gets \mathsf{t} - \prod_j a_j^{z_{j,[i':i'+w]}}$\\
    }
\end{algorithm}

\subsection{Setup}\label{sec:facsetup}

\paragraph{Efficiency metrics.}
We will measure the cost of a quantum circuit in terms of the following metrics:
\begin{itemize}
    \item We follow~\cite{ekeragartnercomparison} and estimate the gate count and depth in terms of calls to the atomic operation $\ket{u, v, t, 0^S} \rightarrow \ket{u, v, (t+uv)\bmod{N}, 0^S}$, which we hereby refer to as a ``mod $N$ multiplication''.
    In Algorithms~\ref{algo:multiexp_pebble} and~\ref{algo:multiexp_unpebble}, this means we charge a cost of 1 operation for each operation modifying $\mathsf{out}$; we do not account here for the gates/depth required to compute $\mathsf{t}$.

    We also tabulate the total number of gates across \emph{all runs} (rather than just one); as expected, we will see that the large number of runs in all variants of Regev make it much worse than Shor by this metric.

    \item We provide qubit counts, ignoring the small constant number of ancilla qubits used by Gidney's classical-quantum multiplier~\cite{gidney2025classicalquantumadderconstantworkspace} when computing $\mathsf{t}$ and also the ancilla qubits $S$ for multiplications.
    We believe that the latter cost will likely be very small in practice~\cite{kahanamokumeyer2024fast}.
    
    We note that these qubit counts are neglected consistently across our evaluations of Regev with Fibonacci arithmetic in Table~\ref{fibtable} and our evaluations of Regev with spooky pebbling in Tables~\ref{table2048} and~\ref{table4096}.
\end{itemize}

\paragraph{Classical post-processing power.}
As observed by Regev~\cite{Regev23}, his factoring algorithm and its variants enjoy natural tradeoffs between quantum circuit complexity and classical postprocessing.
To incorporate the effect of this into their cost analyses, Eker{\aa} and G{\"a}rtner consider three regimes: LLL (which runs in polynomial time), BKZ-200 (which is not practically feasible, but also not too much of an exaggeration), and perfect lattice reduction (which is an extreme case and not even close to practically feasible).

We zoom in to the range of practically feasible (or potentially feasible in the future) algorithms and thus focus on BKZ-$\beta$ for $\beta = 60, 120, 160, 200$.
As explained in Section~\ref{sec:classicallatticealgos}, these range from being very feasible with the computational resources typically available to an academic group, to being beyond anything that is currently feasible in practice.
We use Albrecht's estimator to estimate the root-Hermite factors achieved by these algorithms, which are tabulated in Table~\ref{roothermitetable}.
Looking ahead, we will plug these into Assumption~\ref{assumption:roothermitetosbbp} which we will in turn plug into Lemma~\ref{lemma:regevlattice}.

\paragraph{Setting parameters in our algorithms.}
The first parameter that we vary will be the ``window size'' $w$ that we introduced in Section~\ref{sec:regevpebblingoverview}.
Once we do this, we consider any dimension $d$ and number of runs $m$ subject to the following constraints:
\begin{itemize}
    \item $d$ must be small enough that
    $\prod_{i = 1}^d p_i^{2^w-1} \leq 2^n,$
    where $p_i$ denotes the $i$th prime.
    We denote $d_\mathrm{max} := d_\mathrm{max}(n, w)$ to be the maximal integer $d$ that would satisfy this constraint.
    This follows~\cite{ekeragartnercomparison} and is intended to ensure that the cost of computing the product $\mathsf{t} = \prod_j a_j^{z_{j,[i':i'+w)}}$ does not become dominant.
    \item $m \geq d+4$ (as required throughout the analysis by~\cite{Regev23}).
\end{itemize}
Recall from~\cite{Regev23} that $D$ is the smallest power of 2 greater than or equal to $2\sqrt{d} \cdot R$ i.e. the length of our pebbling game $\log D$ is $\lceil \log(2\sqrt{d} \cdot R) \rceil$.
For a particular $w$ and BKZ blocksize $\beta$, we will select $m, d$ that minimize $\log D$.
It can be derived from Equation~\eqref{eqn:dimensionsinequalityprelims} and Assumption~\ref{assumption:roothermitetosbbp} (setting $C \approx 1$ as in Conjecture~\ref{conjecture:regevnt}) that we require the following (see Appendix~\ref{sec:logDderivation} for details):
\begin{align}
\log D
  &> 2(m+d)\log \delta_0 + \frac{n}{d} + \frac{n}{m} \notag\\
  \quad &+ \tfrac{1}{2}\log(m+d) + \tfrac{1}{2}\log(m+1)
     + \log\!\frac{6d}{\sqrt{2}} + \tfrac{2}{m}
\label{eqn:logDbound}
\end{align}

\begin{remark}[Including lower-order terms when calculating $\log D$]\label{remark:coarselogD}
    We remark here that~\cite{ekeragartnercomparison}  ignores the lower-order terms in the above expression and approximates the lower bound on $\log D$ by just $2(m+d)\log \delta_0 + \frac{n}{d} + \frac{n}{m}$.
    While this is asymptotically a reasonable thing to do, it turns out that the concrete impact on the obtained value of $\log D$ is non-negligible.
    For example, if we work with $n = 2048$ and BKZ block size $\beta = 200$, and take $d = 222$ and $m = 303$ (as in Table~\ref{fibtable}), Equation~\eqref{eqn:logDbound} implies that we must take $\log D$ to be at least 44.
    However, if we were to discard the lower-order terms in the second line we would end up with a lower bound of only 26.
    Given this gap, we will deviate from~\cite{ekeragartnercomparison} by including these lower-order terms in our computation of $\log D$.
    Note that this shifts any comparison between Shor and Regev in the favor of Shor's algorithm.
\end{remark}

The final parameter that we vary given $\log D$ is the number of pebbles $s$ in our pebbling game, which roughly scales linearly with the qubit usage of our circuit.
Indeed, the qubit count will roughly be $d \log D + sn$.
We restrict attention to $s \in [5, 12]$ since each additional pebble adds $n$ qubits.

\subsection{Results}\label{subsec:factoring-results}

We make a comparison of our results to related factoring circuits~\cite{rv24,cryptoeprint:2024/636,DBLP:conf/pqcrypto/EkeraH17} by broadly following the lead of~\cite{ekeragartnercomparison}, but with some important differences.
In \cite{ekeragartnercomparison} the takeaway was that even with the optimizations of \cite{rv24}, Regev's algorithm is outperformed by Shor in both time \textit{and} space simultaneously on practical input sizes (e.g. 2048 bits). 
Thus, even by a cost metric that ignores space usage (a choice which is favorable to Regev) and only counts total number of multiplications, the authors were able to argue that Shor has better performance in practice.
Our parallel spooky pebble games improve the performance of Regev to an extent that the estimates for Shor from \cite{rv24} are no longer clearly better simultaneously in all metrics---indeed, for some settings of parameters, Regev's algorithm with parallel spooky pebbling may outperform in depth but require more space (see below).
For this reason, rather more nuance is required to make a fair comparison: how do we balance the cost of space versus depth?
Furthermore, how do we account for the fact that Regev's algorithm requires many independent runs of the circuit, while Shor's algorithm can be performed in one shot~\cite{DBLP:journals/qip/Ekera21} or by combining the results of several shots, making each shot cheaper~\cite{DBLP:conf/pqcrypto/EkeraH17}?

Rather than devise some single metric that attempts to capture this complexity, and then optimize parameters according to that metric, we instead opt to provide cost estimates corresponding to several different settings of parameters for each algorithm, which yield various tradeoffs of depth vs time. 
First we describe the estimates for previous algorithms to which we compare our work, and note some subtleties about them:

\begin{itemize}
    \item \textbf{Regev's algorithm~\cite{Regev23} with the optimizations by~\cite{rv24,cryptoeprint:2024/636}.}
    The concrete costs of these algorithms were studied extensively by~\cite{ekeragartnercomparison}, using the analysis in~\cite[Theorem 2.4]{cryptoeprint:2024/636}.
    Because of our interest in multiplication \emph{depth}, we make some modifications to this circuit to benefit from parallelism, which we then use to obtain concrete results for comparison with our pebbling results.
    We describe these modifications and tabulate these results in Appendix~\ref{app:regevfib}.

    As noted in Remark~\ref{remark:coarselogD}, we will incorporate lower-order terms in our computation of $\log D$.
    In other words, we calculate $\log D$ using Equation~\eqref{eqn:logDbound}, including the lower-order terms in the second line.

    \item \textbf{Shor's algorithm~\cite{shor97} combined with optimizations by Eker{\aa} and H{\aa}stad~\cite{DBLP:conf/pqcrypto/EkeraH17,DBLP:journals/dcc/Ekera20,DBLP:journals/corr/abs-2309-01754}.}
    These costs have already been calculated and reported by~\cite{ekeragartnercomparison}; in part, we will use the results from there without modifications, as tabulated in Table~\ref{shortable}.
    Recall for comparison that this variant of Shor's algorithm uses $\approx 2n$ qubits when factoring an $n$-bit integer (although smaller is achievable using different methods and at the expense of gate count~\cite{zalka2006shors,DBLP:journals/iacr/ChevignardFS24,gidney2025factor2048bitrsa}).
    No known variant of Regev's algorithm, including the present work, achieves such a small space footprint; for comparison, one may wonder what depth could be achieved if this algorithm were allowed to use more space.
    It is nontrivial to parallelize Shor's algorithm while keeping gate and qubit counts well-controlled; in Appendix~\ref{app:parallel_shor} we propose one way of doing so, which we use for the estimates we present.

    We note for completeness that~\cite{ekeragartnercomparison} incorporates windowing into their results with $w = 10$.
    As they note, this choice of $w$ may be underestimating the benefit offered by windowing for Shor's algorithm.
    For simplicity, we maintain this choice for our estimates of the cost of this algorithm; we believe the qualitative takeaways of our results --- in relation to the concrete potential of variants of Regev's algorithm --- stand regardless of this fact.
\end{itemize}

Our methodology for evaluating our spooky pebbling schemes is as follows: for any $n, \beta, w$, we select $d, m$ by iterating over all possible values subject to the constraints stated in ``Setting parameters in our algorithms'', and minimizing the required value of $\log D$ as dictated by Equation~\eqref{eqn:logDbound}.
Given $\log D$ and $s$, we then experimentally (see Section~\ref{sec:julia}) find the optimal-depth pebbling game with length $\ell := (w+1)\left\lfloor \frac{\log D - 1}{w} \right\rfloor + 1$ and $s$ pebbles and calculate the resulting depth and total number of mod $N$ multiplications.
Note that we pass information about the number of ancilla qubits required to the numerical optimization (see point 5 above).
We note that numerical experiments showed that $w < 1$ yielded worse performance (because this is equivalent to turning off windowing), but $w > 2$ yielded worse performance as well because increasing $w$ ends up leading to larger $\log D$, which ultimately increases the length $\ell$ of the pebble game.
We report results for $w = 2$ and several values of $s$ in Tables~\ref{table2048} and~\ref{table4096}. See also Figure~\ref{fig:time-space-plot} for a visual depiction of the results, including more values of the time-space tradeoff depending on number of pebbles $s$.

\begin{remark}[Choice of $s$]
    \label{remark:8pebbles}
    For most settings of the parameters in Tables~\ref{table2048} and~\ref{table4096}, we record results for $s \in \{5, 7, 12\}$, as this gives a representative picture of the tradeoffs between space and depth.
    $s=4$ is the only smaller value for which solutions to the pebble game exist at all; the depth and multiplication count of those solutions are very large (in the thousands), so we choose $s=5$ instead.
    On the other side, $s=12$ achieves the absolute optimal depth of $2\ell$ for most settings of the parameters, and thus there is no value in using more pebbles.
    The value $s=7$ achieves a reasonable tradeoff between the two.
    We note that for $n=4096$ and $\mathrm{BKZ}$-$\beta=60$, we report results for $s=8$ instead of $s=12$, as our computational resources were not sufficient to compute the optimal pebbling sequence for $s=12$ for those parameters.
\end{remark}

\subsubsection{Analysis}\label{sec:pebblingresultsanalysis}
We refer the reader to Tables~\ref{shortable}-\ref{table4096} for detailed results comparing these various factoring algorithms.

\paragraph{Comparing depth per shot.}
For $n = 2048$, Fibonacci-based approaches (Table~\ref{fibtable}) to the arithmetic in Regev's algorithm can only hope to attain a multiplication depth of 560-620 (depending on the BKZ block size that is used, see Table~\ref{fibtable}).
However when using spooky pebbling with a similar qubit count ($s=12$), \emph{our results requiring a multiplication depth of only 157-175} (Table~\ref{table2048}).
Using 5 or 7 pebbles yields somewhat worse multiplication depth (396-465 and 229-263, respectively), but still outperforms the Fibonacci variants and has the benefit of using much less space (than the other variants of Regev's algorithm; it still uses more space than Shor).
Shor's algorithm as analyzed by~\cite{ekeragartnercomparison} uses a multiplication depth of 230 if using windowing and multiple runs.
Via the parallelization strategy we describe in Appendix~\ref{app:parallel_shor}, this depth can be reduced to 100 when using a similar number of qubits as the $s=12$ spooky pebble game.

For $n = 4096$, Fibonacci-based approaches end up with a multiplication depth of 680-740, and here \emph{spooky pebbling with 8-12 pebbles fares better, requiring a multiplication depth of only 187-200 (for BKZ-120/160/200 using 12 pebbles) or 287 (for BKZ-60 using 8 pebbles)} (Table~\ref{table4096}).
Even with 5 pebbles, the multiplication depth is only 513-610.
Meanwhile, Shor's algorithm uses a multiplication depth of 444 as estimated by~\cite{ekeragartnercomparison}.
We note that in the intermediate-space regime roughly corresponding to an $s=7$ spooky pebble game, parallelized Shor (Appendix~\ref{app:parallel_shor}) requires a depth of 299 while our spooky pebble game requires depth 286-334 depending on the choice of BKZ block size.
Thus in this intermediate regime the per-shot depth of Shor and Regev is now comparable.

\paragraph{Comparing \# of mod $N$ multiplications per run.}
Even if one were to compare the total number of mod $N$ multiplications as in~\cite{ekeragartnercomparison} (as opposed to the depth), spooky pebbling still performs quite well; the gap between depth and total \# of mod $N$ multiplications per run is small.
This is in contrast to the Fibonacci case (Table~\ref{fibtable}) where there is a gap of a factor of 2.
For example, for $n = 2048$ when using spooky pebbling with 12 pebbles, the total number of mod $N$ multiplications per run ranges from 253-295, and for $n = 4096$ it ranges from 326-345.
In particular, \emph{the total number of operations for $n = 4096$ still outperforms the 444 achieved by Shor when the BKZ block size is 120, 160, or 200.}

\paragraph{Comparing \# of mod $N$ multiplications across all runs.}
As is evident from Tables~\ref{shortable}-\ref{table4096}, if we were to make this comparison counting total $\bmod{N}$ multiplications across \emph{all runs} rather than per one run (which seems to be the important metric for setting error correction parameters, since we need a large fraction of the runs to be error-free), spooky pebbling is about an order of magnitude worse than Shor, but still outperforms the Fibonacci-based results in Table~\ref{fibtable} by half an order of magnitude.

\paragraph{Comparing qubit counts.}
As far as space usage, Shor's algorithm remains the most efficient, requiring only $\approx 2n$ qubits when no parallelization is required (note that this holds for the variants considered in this work; recently it has been shown that reducing the qubit count to $0.5n + o(n)$ is possible~\cite{DBLP:journals/iacr/ChevignardFS24}).
Fibonacci variants of Regev sit on the other extreme, requiring $\geq 13n$ qubits in all cases reported in our work, and likely quite a bit more in practice since these costs disregard the space required to hold the products $\prod_{j = 1}^d a_j^{z_{j, i}}$.
Our variants of Regev require $\approx 7n$-$15n$ qubits, and these costs also account for the space required to hold the intermediate products.
Our results hence provide significant concrete improvements over Fibonacci-based implementations of Regev in all metrics, although they remain rather larger than those of Shor.

\paragraph{The effect of more lattice postprocessing power.}
We finally make a surprising observation that dramatic increases in the classical power required for the lattice postprocessing (corresponding to increasing $\beta$ in Tables~\ref{fibtable}-\ref{table4096}) have a somewhat marginal concrete effect on the number or depth of mod $N$ multiplications per run.
This is with the slight exception of spooky pebbling with $s = 5$, where the constrained space seems to enable more significant benefits from improved classical postprocessing power.

The previous study by~\cite{ekeragartnercomparison} restricted attention to the extremes of LLL (which is essentially BKZ with block size 2) and BKZ-200, and thus did not arrive at the finding that the performance is qualitatively similar even with intermediate block sizes that are much smaller (and more practicable) than $\beta = 200$.

\ifanon
\else
\condparagraph{Acknowledgements.} The authors would like to thank Craig Gidney, Isaac Chuang, Mikhail Lukin, Oded Regev, Peter Shor, Noah Stephens-Davidowitz, and Vinod Vaikuntanathan for insightful comments and discussions.
We also thank anonymous reviewers for helpful feedback, including the suggestion to consider parallelized variants of Shor's algorithm (Appendix~\ref{app:parallel_shor}) in our evaluation.
\fi

\ifllncs
\bibliographystyle{splncs04}
\bibliography{main}
\else
\bibliographystyle{alpha-doi}
\bibliography{main}

\newcommand{\etalchar}[1]{$^{#1}$}

\begin{thebibliography}{CEMM98}
\urlstyle{rm} 
\DeclareUrlCommand\DOIasURL{\urlstyle{rm}} 
\ifx\undefined\hypersetup
 \newcommand\DOIbbl[1]{\DOIasURL{#1}}
\else
 \newcommand\DOIbbl[1]{\href{http://dx.doi.org/#1}{\DOIasURL{#1}}}
\fi

\bibitem[ABP17]{abp17}
Jo{\"e}l Alwen, Jeremiah Blocki, and Krzysztof Pietrzak.
\newblock Depth-{{Robust Graphs}} and {{Their Cumulative Memory Complexity}}.
\newblock In Jean-S{\'e}bastien Coron and Jesper~Buus Nielsen, editors, {\em
  Advances in {{Cryptology}} -- {{EUROCRYPT}} 2017}, pages 3--32, Cham, 2017.
  Springer International Publishing.
\newblock DOI \DOIbbl{10.1007/978-3-319-56617-7_1}.

\bibitem[ABP18]{DBLP:conf/eurocrypt/AlwenBP18}
Jo{\"{e}}l Alwen, Jeremiah Blocki, and Krzysztof Pietrzak.
\newblock Sustained space complexity.
\newblock In Jesper~Buus Nielsen and Vincent Rijmen, editors, {\em Advances in
  Cryptology - {EUROCRYPT} 2018 - 37th Annual International Conference on the
  Theory and Applications of Cryptographic Techniques, Tel Aviv, Israel, April
  29 - May 3, 2018 Proceedings, Part {II}}, volume 10821 of {\em Lecture Notes
  in Computer Science}, pages 99--130. Springer, 2018.
\newblock DOI \DOIbbl{10.1007/978-3-319-78375-8\_4}.

\bibitem[Ajt99]{DBLP:conf/icalp/Ajtai99}
Mikl{\'{o}}s Ajtai.
\newblock Generating hard instances of the short basis problem.
\newblock In Jir{\'{\i}} Wiedermann, Peter van Emde~Boas, and Mogens Nielsen,
  editors, {\em Automata, Languages and Programming, 26th International
  Colloquium, ICALP'99, Prague, Czech Republic, July 11-15, 1999, Proceedings},
  volume 1644 of {\em Lecture Notes in Computer Science}, pages 1--9. Springer,
  1999.
\newblock DOI \DOIbbl{10.1007/3-540-48523-6\_1}.

\bibitem[APS15]{DBLP:journals/jmc/AlbrechtPS15}
Martin~R. Albrecht, Rachel Player, and Sam Scott.
\newblock On the concrete hardness of {{Learning}} with {{Errors}}.
\newblock {\em Journal of Mathematical Cryptology}, 9(3) pages 169--203,
  October 2015.
\newblock DOI \DOIbbl{10.1515/jmc-2015-0016}.

\bibitem[AS15]{as15}
Jo\"{e}l Alwen and Vladimir Serbinenko.
\newblock High parallel complexity graphs and memory-hard functions.
\newblock In {\em Proceedings of the Forty-Seventh Annual ACM Symposium on
  Theory of Computing}, STOC '15, page 595–603, New York, NY, USA, 2015.
  Association for Computing Machinery.
\newblock DOI \DOIbbl{10.1145/2746539.2746622}.

\bibitem[AT17]{at17}
Jo\"{e}l Alwen and Bj\"{o}rn Tackmann.
\newblock Moderately hard functions: Definition, instantiations, and
  applications.
\newblock In {\em Theory of Cryptography: 15th International Conference, TCC
  2017, Baltimore, MD, USA, November 12-15, 2017, Proceedings, Part I}, page
  493–526, Berlin, Heidelberg, 2017. Springer-Verlag.
\newblock DOI \DOIbbl{10.1007/978-3-319-70500-2_17}.

\bibitem[BCDP96]{beckman}
David Beckman, Amalavoyal~N Chari, Srikrishna Devabhaktuni, and John Preskill.
\newblock Efficient networks for quantum factoring.
\newblock {\em Physical Review A}, 54(2) page 1034, 1996.
\newblock DOI \DOIbbl{10.1103/PhysRevA.54.1034}.

\bibitem[BDGL16]{DBLP:conf/soda/BeckerDGL16}
Anja Becker, L{\'{e}}o Ducas, Nicolas Gama, and Thijs Laarhoven.
\newblock New directions in nearest neighbor searching with applications to
  lattice sieving.
\newblock In Robert Krauthgamer, editor, {\em Proceedings of the Twenty-Seventh
  Annual {ACM-SIAM} Symposium on Discrete Algorithms, {SODA} 2016, Arlington,
  VA, USA, January 10-12, 2016}, pages 10--24. {SIAM}, 2016.
\newblock DOI \DOIbbl{10.1137/1.9781611974331.CH2}.

\bibitem[Bea03]{beauregard}
St{\'{e}}phane Beauregard.
\newblock Circuit for {Shor's} algorithm using 2n+3 qubits.
\newblock {\em Quantum Inf. Comput.}, 3(2) pages 175--185, 2003.
\newblock DOI \DOIbbl{10.26421/QIC3.2-8}.

\bibitem[Ben89]{bennett_timespace_1989}
Charles~H. Bennett.
\newblock Time/{Space} {Trade}-{Offs} for {Reversible} {Computation}.
\newblock {\em SIAM Journal on Computing}, 18(4) pages 766--776, August 1989.
\newblock Publisher: Society for Industrial and Applied Mathematics, DOI
  \DOIbbl{10.1137/0218053}.

\bibitem[BHL22]{DBLP:conf/tcc/BlockiHL22}
Jeremiah Blocki, Blake Holman, and Seunghoon Lee.
\newblock The parallel reversible pebbling game: Analyzing the post-quantum
  security of imhfs.
\newblock In Eike Kiltz and Vinod Vaikuntanathan, editors, {\em Theory of
  Cryptography - 20th International Conference, {TCC} 2022, Chicago, IL, USA,
  November 7-10, 2022, Proceedings, Part {I}}, volume 13747 of {\em Lecture
  Notes in Computer Science}, pages 52--79. Springer, 2022.
\newblock DOI \DOIbbl{10.1007/978-3-031-22318-1\_3}.

\bibitem[BHL25]{DBLP:conf/eurocrypt/BlockiHL25}
Jeremiah Blocki, Blake Holman, and Seunghoon Lee.
\newblock The impact of reversibility on parallel pebbling.
\newblock In Serge Fehr and Pierre{-}Alain Fouque, editors, {\em Advances in
  Cryptology - {EUROCRYPT} 2025 - 44th Annual International Conference on the
  Theory and Applications of Cryptographic Techniques, Madrid, Spain, May 4-8,
  2025, Proceedings, Part {I}}, volume 15601 of {\em Lecture Notes in Computer
  Science}, pages 124--154. Springer, 2025.
\newblock DOI \DOIbbl{10.1007/978-3-031-91107-1\_5}.

\bibitem[CEMM98]{cemm}
Richard Cleve, Artur Ekert, Chiara Macchiavello, and Michele Mosca.
\newblock {Quantum algorithms revisited}.
\newblock {\em Proc. Roy. Soc. Lond. A}, v.454 page 339, 1998.
\newblock DOI \DOIbbl{10.1098/rspa.1998.0164}.

\bibitem[CFS25]{DBLP:journals/iacr/ChevignardFS24}
Cl{\'{e}}mence Chevignard, Pierre{-}Alain Fouque, and Andr{\'{e}}
  Schrottenloher.
\newblock Reducing the number of qubits in quantum factoring.
\newblock In Yael~Tauman Kalai and Seny~F. Kamara, editors, {\em Advances in
  Cryptology - {CRYPTO} 2025 - 45th Annual International Cryptology Conference,
  Santa Barbara, CA, USA, August 17-21, 2025, Proceedings, Part {II}}, volume
  16001 of {\em Lecture Notes in Computer Science}, pages 384--415. Springer,
  2025.
\newblock DOI \DOIbbl{10.1007/978-3-032-01878-6\_13}.

\bibitem[Cop02]{Cop02}
D.~Coppersmith.
\newblock An approximate {{Fourier}} transform useful in quantum factoring,
  January 2002.
\newblock DOI \DOIbbl{10.48550/arXiv.quant-ph/0201067}.

\bibitem[CW00]{CleveW00}
Richard Cleve and John Watrous.
\newblock Fast parallel circuits for the quantum {Fourier} transform.
\newblock In {\em 41st Annual Symposium on Foundations of Computer Science,
  {FOCS} 2000, 12-14 November 2000, Redondo Beach, California, {USA}}, pages
  526--536. {IEEE} Computer Society, 2000.
\newblock DOI \DOIbbl{10.1109/SFCS.2000.892140}.

\bibitem[DLH{\etalchar{+}}22]{deist_mid-circuit_2022}
Emma Deist, Yue-Hui Lu, Jacquelyn Ho, Mary~Kate Pasha, Johannes Zeiher, Zhenjie
  Yan, and Dan~M. {Stamper-Kurn}.
\newblock Mid-{{Circuit Cavity Measurement}} in a {{Neutral Atom Array}}.
\newblock {\em Physical Review Letters}, 129(20) page 203602, November 2022.
\newblock DOI \DOIbbl{10.1103/PhysRevLett.129.203602}.

\bibitem[EG24]{ekeragartner}
Martin Eker{\aa} and Joel G{\"a}rtner.
\newblock Extending {R}egev's factoring algorithm to compute discrete
  logarithms.
\newblock In Markku-Juhani Saarinen and Daniel Smith-Tone, editors, {\em
  Post-Quantum Cryptography}, pages 211--242, Cham, 2024. Springer Nature
  Switzerland.
\newblock DOI \DOIbbl{10.1007/978-3-031-62746-0_10}.

\bibitem[EG25]{ekeragartnercomparison}
Martin Eker{\aa} and Joel G{\"a}rtner.
\newblock A high-level comparison of state-of-the-art quantum algorithms for
  breaking asymmetric cryptography.
\newblock {\em {IACR} Communications in Cryptology}, 2(1), 2025.
\newblock DOI \DOIbbl{10.62056/ayzojb0kr}.

\bibitem[EH17]{DBLP:conf/pqcrypto/EkeraH17}
Martin Eker{\aa} and Johan H{\aa}stad.
\newblock Quantum algorithms for computing short discrete logarithms and
  factoring {RSA} integers.
\newblock In Tanja Lange and Tsuyoshi Takagi, editors, {\em Post-Quantum
  Cryptography - 8th International Workshop, PQCrypto 2017, Utrecht, The
  Netherlands, June 26-28, 2017, Proceedings}, volume 10346 of {\em Lecture
  Notes in Computer Science}, pages 347--363. Springer, 2017.
\newblock DOI \DOIbbl{10.1007/978-3-319-59879-6\_20}.

\bibitem[Eke20]{DBLP:journals/dcc/Ekera20}
Martin Eker{\aa}.
\newblock On post-processing in the quantum algorithm for computing short
  discrete logarithms.
\newblock {\em Des. Codes Cryptogr.}, 88(11) pages 2313--2335, 2020.
\newblock DOI \DOIbbl{10.1007/S10623-020-00783-2}.

\bibitem[Eke21]{DBLP:journals/qip/Ekera21}
Martin Eker{\aa}.
\newblock On completely factoring any integer efficiently in a single run of an
  order-finding algorithm.
\newblock {\em Quantum Inf. Process.}, 20(6) page 205, 2021.
\newblock DOI \DOIbbl{10.1007/S11128-021-03069-1}.

\bibitem[Eke23]{DBLP:journals/corr/abs-2309-01754}
Martin Eker{\aa}.
\newblock On the success probability of the quantum algorithm for the short
  {DLP}.
\newblock {\em CoRR}, v.abs/2309.01754, 2023.
\newblock DOI \DOIbbl{10.48550/ARXIV.2309.01754}.

\bibitem[Eke24]{DBLP:journals/tqc/Ekera24}
Martin Eker{\aa}.
\newblock On the success probability of quantum order finding.
\newblock {\em {ACM} Trans. Quantum Comput.}, 5(2) pages 11:1--11:40, 2024.
\newblock DOI \DOIbbl{10.1145/3655026}.

\bibitem[GE21]{DBLP:journals/quantum/GidneyE21}
Craig Gidney and Martin Eker{\aa}.
\newblock How to factor 2048 bit {RSA} integers in 8 hours using 20 million
  noisy qubits.
\newblock {\em Quantum}, v.5 page 433, 2021.
\newblock DOI \DOIbbl{10.22331/q-2021-04-15-433}.

\bibitem[Gid18]{gidney2017factoring}
Craig Gidney.
\newblock Factoring with n+2 clean qubits and n-1 dirty qubits, January 2018.
\newblock DOI \DOIbbl{10.48550/arXiv.1706.07884}.

\bibitem[Gid19a]{gidneySpookyPebble}
Craig Gidney.
\newblock {S}pooky {P}ebble {G}ames and {I}rreversible {U}ncomputation.
\newblock \url{https://algassert.com/post/1905}, 2019.
\newblock [Accessed 19-02-2025].

\bibitem[Gid19b]{gidney2019windowed}
Craig Gidney.
\newblock Windowed quantum arithmetic, May 2019.
\newblock DOI \DOIbbl{10.48550/arXiv.1905.07682}.

\bibitem[Gid25a]{gidney2025classicalquantumadderconstantworkspace}
Craig Gidney.
\newblock A {{Classical-Quantum Adder}} with {{Constant Workspace}} and
  {{Linear Gates}}, July 2025.
\newblock DOI \DOIbbl{10.48550/arXiv.2507.23079}.

\bibitem[Gid25b]{gidney2025factor2048bitrsa}
Craig Gidney.
\newblock How to factor 2048 bit {{RSA}} integers with less than a million
  noisy qubits, May 2025.
\newblock DOI \DOIbbl{10.48550/arXiv.2505.15917}.

\bibitem[GN08]{gn08}
Nicolas Gama and Phong~Q. Nguyen.
\newblock Finding short lattice vectors within {M}ordell's inequality.
\newblock In {\em Proceedings of the Fortieth Annual ACM Symposium on Theory of
  Computing}, STOC '08, page 207–216, New York, NY, USA, 2008. Association
  for Computing Machinery.
\newblock DOI \DOIbbl{10.1145/1374376.1374408}.

\bibitem[HHB{\etalchar{+}}25]{hothem_measuring_2025}
Daniel Hothem, Jordan Hines, Charles Baldwin, Dan Gresh, Robin {Blume-Kohout},
  and Timothy Proctor.
\newblock Measuring error rates of mid-circuit measurements.
\newblock {\em Nature Communications}, 16(1) page 5761, July 2025.
\newblock DOI \DOIbbl{10.1038/s41467-025-60923-x}.

\bibitem[HRS17]{hrs17}
Thomas H{\"{a}}ner, Martin Roetteler, and Krysta~M. Svore.
\newblock Factoring using $2n+2$ qubits with {Toffoli} based modular
  multiplication.
\newblock {\em Quantum Inf. Comput.}, 17(7{\&}8) pages 673--684, 2017.
\newblock DOI \DOIbbl{10.26421/QIC17.7-8-7}.

\bibitem[Kal17]{kal17}
Burton~S. Kaliski, Jr.
\newblock Targeted {{Fibonacci Exponentiation}}, November 2017.
\newblock DOI \DOIbbl{10.48550/arXiv.1711.02491}.

\bibitem[KMRV25]{julia_code}
Gregory~D. Kahanamoku-Meyer, Seyoon Ragavan, and Katherine {Van Kirk}.
\newblock {GregDMeyer}/parallel-spooky-pebbling: v0.1, October 2025.
\newblock DOI \DOIbbl{10.5281/zenodo.17298960}.

\bibitem[KSMM23]{koh_measurement-induced_2023}
Jin~Ming Koh, Shi-Ning Sun, Mario Motta, and Austin~J. Minnich.
\newblock Measurement-induced entanglement phase transition on a
  superconducting quantum processor with mid-circuit readout.
\newblock {\em Nature Physics}, 19(9) pages 1314--1319, September 2023.
\newblock DOI \DOIbbl{10.1038/s41567-023-02076-6}.

\bibitem[KSS25]{Kornerup2025tightboundsspooky}
Niels Kornerup, Jonathan Sadun, and David Soloveichik.
\newblock Tight {B}ounds on the {S}pooky {P}ebble {G}ame: {R}ecycling {Q}ubits
  with {M}easurements.
\newblock {\em {Quantum}}, v.9 page 1636, February 2025.
\newblock DOI \DOIbbl{10.22331/q-2025-02-18-1636}.

\bibitem[KY24]{kahanamokumeyer2024fast}
Gregory~D. {Kahanamoku-Meyer} and Norman~Y. Yao.
\newblock Fast quantum integer multiplication with zero ancillas, March 2024.
\newblock DOI \DOIbbl{10.48550/arXiv.2403.18006}.

\bibitem[LLL82]{lenstra1982factoring}
Arjen~K Lenstra, Hendrik~Willem Lenstra, and L{\'a}szl{\'o} Lov{\'a}sz.
\newblock Factoring polynomials with rational coefficients.
\newblock {\em Mathematische annalen}, v.261 pages 515--534, 1982.
\newblock DOI \DOIbbl{10.1007/BF01457454}.

\bibitem[LS90]{levine_note_1990}
Robert~Y. Levine and Alan~T. Sherman.
\newblock A {Note} on {Bennett}’s {Time}-{Space} {Tradeoff} for {Reversible}
  {Computation}.
\newblock {\em SIAM Journal on Computing}, 19(4) pages 673--677, August 1990.
\newblock Publisher: Society for Industrial and Applied Mathematics, DOI
  \DOIbbl{10.1137/0219046}.

\bibitem[Ozh99]{ozhigov_iterated_black_box_1999}
Y.~Ozhigov.
\newblock Quantum computer can not speed up iterated applications of a black
  box.
\newblock In Colin~P. Williams, editor, {\em Quantum Computing and Quantum
  Communications}, pages 152--159, Berlin, Heidelberg, 1999. Springer Berlin
  Heidelberg.
\newblock DOI \DOIbbl{10.1007/3-540-49208-9_12}.

\bibitem[Rag24]{cryptoeprint:2024/636}
Seyoon Ragavan.
\newblock Regev factoring beyond {F}ibonacci: Optimizing prefactors.
\newblock Cryptology ePrint Archive, Paper 2024/636, 2024.
\newblock \url{https://eprint.iacr.org/2024/636}, Online at
  \url{https://eprint.iacr.org/2024/636}.

\bibitem[Reg25]{Regev23}
Oded Regev.
\newblock An efficient quantum factoring algorithm.
\newblock {\em J. ACM}, 72(1) article~10 (13~pages), January 2025.
\newblock DOI \DOIbbl{10.1145/3708471}.

\bibitem[RSW96]{rswtimelock}
R.~L. Rivest, A.~Shamir, and D.~A. Wagner.
\newblock Time-lock puzzles and timed-release crypto.
\newblock Technical report, MIT, Weizmann Institute of Science, UC Berkeley,
  USA, 1996.

\bibitem[RV24]{rv24}
Seyoon Ragavan and Vinod Vaikuntanathan.
\newblock Space-efficient and noise-robust quantum factoring.
\newblock In Leonid Reyzin and Douglas Stebila, editors, {\em Advances in
  Cryptology - {CRYPTO} 2024 - 44th Annual International Cryptology Conference,
  Santa Barbara, CA, USA, August 18-22, 2024, Proceedings, Part {VI}}, volume
  14925 of {\em Lecture Notes in Computer Science}, pages 107--140. Springer,
  2024.
\newblock DOI \DOIbbl{10.1007/978-3-031-68391-6\_4}.

\bibitem[Sei01]{seifert2001using}
Jean-Pierre Seifert.
\newblock Using fewer qubits in {Shor}’s factorization algorithm via
  simultaneous {Diophantine} approximation.
\newblock In {\em Cryptographers’ Track at the RSA Conference}, pages
  319--327. Springer, 2001.
\newblock DOI \DOIbbl{10.1007/3-540-45353-9_24}.

\bibitem[Sho97]{shor97}
Peter~W. Shor.
\newblock Polynomial-time algorithms for prime factorization and discrete
  logarithms on a quantum computer.
\newblock {\em {SIAM} J. Comput.}, 26(5) pages 1484--1509, 1997.
\newblock DOI \DOIbbl{10.1137/S0097539795293172}.

\bibitem[TK06]{takahashi}
Yasuhiro Takahashi and Noboru Kunihiro.
\newblock A quantum circuit for {Shor's} factoring algorithm using $2n+2$
  qubits.
\newblock {\em Quantum Information \& Computation}, 6(2) pages 184--192, 2006.
\newblock DOI \DOIbbl{10.1007/s11128-023-04159-y}.

\bibitem[VBE96]{vedral}
Vlatko Vedral, Adriano Barenco, and Artur Ekert.
\newblock Quantum networks for elementary arithmetic operations.
\newblock {\em Physical Review A}, 54(1) page 147, 1996.
\newblock DOI \DOIbbl{10.1103/PhysRevA.54.147}.

\bibitem[Zal06]{zalka2006shors}
Christof Zalka.
\newblock Shor's algorithm with fewer (pure) qubits, January 2006.
\newblock DOI \DOIbbl{10.48550/arXiv.quant-ph/0601097}.

\bibitem[ZKL{\etalchar{+}}23]{zhu_interactive_2023}
Daiwei Zhu, Gregory~D. {Kahanamoku-Meyer}, Laura Lewis, Crystal Noel, Or~Katz,
  Bahaa Harraz, Qingfeng Wang, Andrew Risinger, Lei Feng, Debopriyo Biswas,
  Laird Egan, Alexandru Gheorghiu, Yunseong Nam, Thomas Vidick, Umesh Vazirani,
  Norman~Y. Yao, Marko Cetina, and Christopher Monroe.
\newblock Interactive cryptographic proofs of quantumness using mid-circuit
  measurements.
\newblock {\em Nature Physics}, 19(11) pages 1725--1731, November 2023.
\newblock DOI \DOIbbl{10.1038/s41567-023-02162-9}.

\end{thebibliography}
\fi

\appendix

\section{Deferred Proofs from Section~\ref{sec:parallelpebblingmain}}

\subsection{Proof of Proposition~\ref{prop:A_properties}}\label{sec:Apropertiesproof}

\begin{proof}[Proof of Items~\ref{item:increasing} and~\ref{item:partialsum}]
We prove all the assertions by strong induction on $k$. We first address the base cases where $k \leq 5$. We compute $A_1 = 1, A_2 = 2, A_3 = 2, A_4 = 3, A_5 = 4$. The claim about $A_{k-1} \leq A_k$ is now evident, and it remains to check the second assertion:
\begin{itemize}
    \item When $k = 4$, we have $A_k = 3$ and $\sum_{i = 1}^{\lfloor (k-2)/2 \rfloor} A_{k-2i-1} = A_{k-3} = 1$.
    \item When $k = 5$, we have $A_k = 4$ and $\sum_{i = 1}^{\lfloor (k-2)/2 \rfloor} A_{k-2i-1} = A_{k-3} = 2$.
\end{itemize}
Now for the inductive step. For any $k \geq 6$, we have:
\begin{align*}
    A_k &= A_{k-2} + A_{k-3} \\
    &> A_{k-4} + A_{k-3} \\
    &= A_{k-1}.
\end{align*}
The inequality is strict because $A_{k-2} \geq A_{k-3} \geq A_{k-4}$ and at least one of these inequalities is strict. Secondly, we have:
\begin{align*}
    A_k &= A_{k-3} + A_{k-2} \\
    &= A_{k-3} + 2 + \sum_{i = 1}^{\lfloor (k-4)/2 \rfloor} A_{k-2i-3} \quad \text{ (induction hypothesis)} \\
    &= A_{k-3} + 2 + \sum_{i = 2}^{\lfloor (k-2)/2 \rfloor} A_{k - 2i - 1} \\
    &= 2 + \sum_{i = 1}^{\lfloor (k-2)/2 \rfloor} A_{k-2i-1}.
\end{align*}
\end{proof}

\begin{proof}[Proof of Item~\ref{item:consecutivediffs}]
    We prove this by induction on $k$. The base case $k = 6$ can be checked manually: $A_6 = 5 = 4 + 1 = A_5 + A_1$. For the inductive step, consider any $k \geq 7$ and assume the assertion holds for $k-1$. Then we have that:
    \begin{align*}
        A_k - A_{k-1} &= (A_{k-2} + A_{k-3}) - A_{k-1} \\
        &= A_{k-3} - (A_{k-1} - A_{k-2}) \\
        &= A_{k-3} - A_{k-6} \\
        &= A_{k-5}.
    \end{align*}
\end{proof}

\begin{proof}[Proof of Item~\ref{item:consecutivediffsincreasing}]
    For $k \geq 7$, this is immediate from Items~\ref{item:increasing} and~\ref{item:consecutivediffs}. The cases $k = 4, 5, 6$ are straightforward to check manually.
\end{proof}

\begin{proof}[Proof of Item~\ref{item:Aasymptoticgrowth}]
    We prove this by strong induction.
    For $k = 7, 8, 9$, this can be verified manually.
    For $k \geq 10$, we have:
    \begin{align*}
        A_k &= A_{k-2} + A_{k-3} \\
        &\geq 0.94\left(\alpha^{k-2} + \alpha^{k-3}\right) \\
        &= 0.94\alpha^k,
    \end{align*}
    by definition of $\alpha$.
\end{proof}

\subsection{Proof of Lemma~\ref{lem:blast_algo_depth}}\label{sec:blast_algo_depth_proof}

This is straightforward by strong induction on $k$.
For $k = 2, 3$ we have $A_k - 1 = 1$ and indeed we only have one step.
For $k > 3$, the first two parts comprise $A_{k-3}$ time steps.
The recursive call to $\blast(\istart + A_{k-3}, \iend)$ uses $A_{k-2} - 1$ time steps by the induction hypothesis.
In total, this is $A_{k-3} + (A_{k-2} - 1) = A_k - 1$ time steps as desired.
\qed

\subsection{Proof of Lemma~\ref{lem:unblast_algo_depth}}\label{sec:unblast_algo_depth_proof}

This is again by strong induction on $k$.
For $k = 2, 3$, we have $A_k - 1 = 1$ and only one time step.
For $k > 3$, note by the induction hypothesis that $\unblast(\istart + A_{k-3}, \iend)$ uses $A_{k-2} - 1$ time steps and $\unblast(\istart, \istart + A_{k-3} - 1)$ uses $A_{k-3} - 1$ time steps.
Also, by Lemma~\ref{lem:blast_algo_depth}, $\blast(\istart, \istart + A_{k-3} - 1)$ uses $A_{k-3} - 1$ time steps.
We have four stages:
\begin{enumerate}
    \item The first $A_{k-2} - A_{k-3}$ time steps of $\unblast(\istart + A_{k-3}, \iend)$.
    \item An additional $A_{k-3} - 1 = (A_{k-2} - 1) - (A_{k-2} - A_{k-3})$ time steps go towards completing $\unblast(\istart + A_{k-3}, \iend)$. This is also the required number of time steps for $\blast(\istart, \istart + A_{k-3} - 1)$.
    \item One time step to unpebble $\istart + A_{k-3}$.
    \item The final $A_{k-3} - 1$ time steps to $\unblast(\istart, \istart + A_{k-3} - 1)$.
\end{enumerate}
In total, this is $(A_{k-2} - A_{k-3}) + (A_{k-3} - 1) + 1 + (A_{k-3} - 1) = A_k - 1$ time steps, which completes the proof.
\qed

\subsection{Proof of Lemma~\ref{lem:blast_space}}\label{sec:blast_space_proof}

We do this by induction on $k$. We first address the base cases where $k = 2, 3$. In this case, we simply have $b_k(1) = 1$. Indeed, we have:
$$\left\lceil \frac{k - A^{-1}(A_k - t+1)}{2}\right\rceil + 1 = \left\lceil \frac{k - A^{-1}(2)}{2}\right\rceil + 1 = \left\lceil \frac{k-3}{2} \right\rceil + 1 = 1.$$
For clarity, we also manually address the case where $k = 4$. In this case, we have $b_k(1) = 1$ and $b_k(2) = 2$, and indeed we have:
$$\left\lceil \frac{k - A^{-1}(A_k - 1+1)}{2}\right\rceil + 1 = \left\lceil \frac{4 - 4}{2}\right\rceil + 1 = 1 = b_4(1),\text{ and}$$
$$\left\lceil \frac{k - A^{-1}(A_k - 2+1)}{2}\right\rceil + 1 = \left\lceil \frac{k-3}{2}\right\rceil + 1 = 2 = b_4(2).$$
Now for the inductive step: assume that $k \geq 5$ and that we have shown the result for $k - 2$. We consider a few different ranges of $t$:
\begin{itemize}
    \item For $t = 1$, we have $b_k(t) = 1$, and indeed we have:
    $$\left\lceil \frac{k - A^{-1}(A_k - t+1)}{2}\right\rceil + 1 = \left\lceil \frac{k - A^{-1}(A_k)}{2}\right\rceil + 1 = 1.$$
    \item For $t \in [2, A_{k-3}]$, we have $b_k(t) = 2$. Moreover, we have $A_k - t + 1 \in [A_{k-2} + 1, A_k - 1] \Rightarrow A^{-1}(A_k - t + 1) \in \left\{k-2, k-1\right\}$, so it follows that:
    $$\left\lceil \frac{k - A^{-1}(A_k - t+1)}{2}\right\rceil + 1 = \left\lceil \frac{1\text{ or }2}{2}\right\rceil + 1 = 2.$$
    \item For $t \in [A_{k-3} + 1, A_k - 1]$, we have $b_k(t) = 1 + b_{k-2}(t - A_{k-3})$. The extra pebble comes from the pebble that has been placed at site $\istart + A_{k-3}$. Using the induction hypothesis, it follows that:
    \begin{align*}
        b_k(t) &= 1 + b_{k-2}(t - A_{k-3}) \\
        &= 2 + \left\lceil \frac{(k-2) - A^{-1}(A_{k-2} - (t - A_{k-3}) + 1)}{2} \right\rceil \\
        &= 2 + \left\lceil \frac{(k-2) - A^{-1}(A_k - t + 1)}{2} \right\rceil \\
        &= \left\lceil \frac{k - A^{-1}(A_k - t + 1)}{2} \right\rceil + 1,
    \end{align*}
    which completes the induction.
\end{itemize}
\qed

\subsection{Proof of Lemma~\ref{lemma:unblast_recurrence}}\label{sec:unblast_recurrence_proof}

This is straightforward from the recursive formulation of Algorithm~\ref{algo:A_unblast}. We address each case in sequence:
    \begin{itemize}
        \item For $t \in [1, A_{k-2} - A_{k-3}]$, the only thing we are doing is unblasting on the segment $[\istart + A_{k-3}, \iend]$. There is one extra pebble sitting idly in position $\istart + A_{k-3}$, which gives us a total pebble usage of $1 + u_{k-2}(t)$.
        \item For $t \in [A_{k-2} - A_{k-3} + 1, A_{k-2} - 1]$, we additionally commence blasting on the segment $[\istart, \istart + A_{k-3} - 1]$, which contributes an additional $b_{k-3}(t - (A_{k-2} - A_{k-3}))$ pebbles.
        \item At $t = A_{k-2}$, all we are doing is unpebbling site $\istart + A_{k-3}$. It is apparent from this that the pebbling usage of this step is $b_{k-3}(A_{k-3} - 1) + 1$.
        Thus we have:
        \begin{align*}
            u_k(t) &= b_{k-3}(A_{k-3} - 1) + 1 \\
            &= \left\lceil \frac{(k-3) - A^{-1}(A_{k-3} - (A_{k-3} - 1) + 1)}{2} \right\rceil + 2 \\
            &= \left\lceil \frac{(k-3) - A^{-1}(A_{k-3} - (A_{k-3} - 1) + 1)}{2} \right\rceil + 2 \\
            &= \left\lceil \frac{k-2}{2} \right\rceil \\
            &= \left\lfloor \frac{k-1}{2}\right\rfloor.
        \end{align*}

        \item For $t \in [A_{k-2} + 1, A_k-1]$, all we are doing is unblasting on the segment $[\istart, \istart + A_{k-3} - 1]$. It follows that the pebbling usage of this step is $u_{k-3}(t - A_{k-2})$.
    \end{itemize}
    This completes the proof of the lemma.
\qed

\subsection{Proof of Lemma~\ref{lem:ukfinalbound}}\label{sec:ukfinalbound_proof}

The cases $k = 6, 7, 8$ may be manually checked. For convenience, we list the relevant numbers below:
    \begin{itemize}
        \item $k = 6$: we have $A_k = 5$ and $u_k([1, 4]) = [3, 3, 2, 1]$.
        \item $k = 7$: we have $A_k = 7$ and $u_k([1, 6]) = [4, 4, 4, 3, 2, 1]$.
        \item $k = 8$: we have $A_k = 9$ and $u_k([1, 8]) = [4, 5, 5, 4, 3, 3, 2, 1]$.
    \end{itemize}

    Now for the inductive step: assume $k \geq 9$ and the result has been shown for $k-2$ and $k-3$. Before we begin, note for simplicity that when $k \geq 5$ and $t \in [A_{k-2} - A_{k-3} + 1, A_{k-2} - 1]$, we have that:
    \begin{align*}
        u_k(t) &= b_{k-3}(t - (A_{k-2} - A_{k-3})) + 1 + u_{k-2}(t)\\
        &= \left\lceil \frac{k-3 - A^{-1}(A_{k-3} - (t - (A_{k-2} - A_{k-3})) + 1)}{2} \right\rceil + 2 + u_{k-2}(t) \\
        &= \left\lceil \frac{k+1 - A^{-1}(A_{k-2} - t + 1)}{2} \right\rceil + u_{k-2}(t).
    \end{align*}
    We now just enumerate through various ranges of $t$ and check each case. Note that by Items~\ref{item:increasing} and~\ref{item:consecutivediffs} of Proposition~\ref{prop:A_properties}, we have that $A_{k-2} - A_{k-3} = A_{k-7} < A_{k-5} = A_{k-1} - A_k$ (since $k \geq 8$).
    \begin{itemize}
        \item $t \in [1, A_{k-2} - A_{k-3}]$: we have $u_k(t) = 1 + u_{k-2}(t) \leq 1 + ((k-2)-3) = k-4 < k-3$.
        \item $t = A_{k-2} - A_{k-3} + 1$: we have:
        \begin{align*}
            u_k(t) &= \left\lceil \frac{k+1 - A^{-1}(A_{k-2} - t + 1)}{2} \right\rceil + u_{k-2}(t) \\
            &= \left\lceil \frac{k+1 - A^{-1}(A_{k-3})}{2} \right\rceil + u_{k-2}(t) \\
            &= 2 + u_{k-2}(t) \text{ (Proposition~\ref{prop:A_properties}, Item~\ref{item:increasing})} \\
            &\leq 2 + ((k-2)-3) \text{ (induction hypothesis)} \\
            &= k-3.
        \end{align*}
        \item $t \in [A_{k-2} - A_{k-3} + 2, A_k - A_{k-1} + 1]$: note firstly that $A_k - A_{k-1} = A_{k-2} - (A_{k-1} - A_{k-3}) = A_{k-2} - A_{k-4} \Rightarrow A_k - A_{k-1} + 1 = A_{k-2} - (A_{k-4} - 1) \leq A_{k-2} - 3$ since $k \geq 9$. Secondly, we have $A_{k-2} - t + 1 \in [A_{k-4}, A_{k-3} - 1] \Rightarrow A^{-1}(A_{k-2} - t + 1) = k-4$. Thus we have:
        \begin{align*}
            u_k(t) &= \left\lceil \frac{k+1 - A^{-1}(A_{k-2} - t + 1)}{2} \right\rceil + u_{k-2}(t) \\
            &= \left\lceil \frac{k+1 - (k-4)}{2} \right\rceil + u_{k-2}(t) \\
            &= 3 + u_{k-2}(t) \\
            &\leq 3 + ((k-2)-4) \text{ (induction hypothesis)} \\
            &= k-3.
        \end{align*}

        \item $t \in [A_k - A_{k-1} + 2 = A_{k-2} - A_{k-4} + 2, A_{k-2} - A_{k-5} + 1]$: note that $A_{k-2} - A_{k-5} + 1 \leq A_{k-2} - 2$ since $k \geq 9$. Secondly, we have $A_{k-2} - t + 1 \in [A_{k-5}, A_{k-4} - 1] \Rightarrow A^{-1}(A_{k-2} - t + 1) = k-5$. Thus we have:
        \begin{align*}
            u_k(t) &= \left\lceil \frac{k+1 - (k-5)}{2} \right\rceil + u_{k-2}(t) \\
            &= 3 + u_{k-2}(t) \\
            &\leq 3 + ((k-2)-5) \text{ (induction hypothesis)} \\
            &= k - 4.
        \end{align*}
        This bound is sufficient for us since $t \leq A_{k-2} - A_{k-5} + 1 = A_{k-4} + 1 \leq A_{k-3} + 1 = A_k - A_{k-2} + 1$.

        \item $t \in [A_{k-2} - A_{k-5} + 2, A_{k-2} - A_{k-7} + 1]$: note that $A_{k-2} - A_{k-7} + 1 \leq A_{k-2} - 1$ since $k \geq 9$. Secondly, we have $A_{k-2} - t + 1 \in [A_{k-7}, A_{k-5} - 1] \Rightarrow A^{-1}(A_{k-2} - t + 1) \in \left\{k-7, k-6\right\}$. Thus we have:
        \begin{align*}
            u_k(t) &= \left\lceil \frac{k+1 - [(k-7)\text{ or }(k-6)]}{2} \right\rceil + u_{k-2}(t) \\
            &= 4 + u_{k-2}(t) \\
            &\leq 4 + ((k-2) - 6)\text{ (induction hypothesis)} \\
            &= k-4.
        \end{align*}
        This bound is sufficient for us since $t \leq A_{k-2} - A_{k-7} + 1 = A_{k-3} + 1 = A_k - A_{k-2} + 1$.

        \item $t \in [A_{k-2} - A_{(k-2) - 2r + 1} + 2, A_{k-2} - A_{(k-2) - 2r-1} + 1]$ for some $r \in [3, \lfloor (k-5)/2 \rfloor]$. Note that this case only arises if $\lfloor (k-5)/2 \rfloor \geq 3 \Leftrightarrow k \geq 11$. (For $k = 9, 10$, this case may be safely skipped over.)

        In this case, we have that:
        \begin{align*}
            A_{k-2} - A_{(k-2)-2r-1} + 1 &\leq A_{k-2} - A_{(k-2) - 2\lfloor (k-5)/2 \rfloor + 1} + 1 \\
            &= A_{k-2} - A_{2\text{ or }3} + 1 \\
            &= A_{k-2} - 1.
        \end{align*}
        Secondly, we have $A_{k-2} - t +1 \in [A_{(k-2) - 2r-1}, A_{(k-2) - 2r + 1} - 1] \Rightarrow$\\$A^{-1}(A_{k-2} - t + 1) \in \left\{(k-2)-2r-1, (k-2)-2r\right\}$. Thus we have:
        \begin{align*}
            u_k(t) &= \left\lceil \frac{k+1 - [(k-2r-3)\text{ or }(k-2r-2)]}{2} \right\rceil + u_{k-2}(t) \\
            &= r+2+u_{k-2}(t) \\
            &\leq r+2 + ((k-2)-r-5) \text{ (induction hypothesis)} \\
            &= k-5.
        \end{align*}
        This bound suffices since $t \leq A_{k-2} - 1 < A_k - A_{k-3} + 1$.

        \item $t = A_{k-2}$: in this case, we have $u_k(t) = \lfloor (k-1)/2 \rfloor$. Since $t < A_k - A_{k-3} + 1$, it suffices to check that $\lfloor (k-1)/2 \rfloor \leq k-5$, which holds for any $k \geq 8$.

        \item $t \in [A_{k-2} + 1 = A_k - A_{k-3} + 1, A_k - A_{k-5} + 1]$: we have
        \begin{align*}
            u_k(t) &= u_{k-3}(t - A_{k-2}) \\
            &= u_{k-3}(t' \in [1, A_{k-3} - A_{k-5} + 1]) \\
            &\leq (k-3) - 3 \text{ (induction hypothesis)} \\
            &= k-6,
        \end{align*}
        which is the desired bound.
        \item $t \in [A_k - A_{k-5} + 2, A_k - A_{k-7} + 1]$, which amounts to the $r = 3$ case: we have
        \begin{align*}
            u_k(t) &= u_{k-3}(t' \in [A_{k-3} - A_{k-5} + 2, A_{k-3} - A_{k-7} + 1]) \\
            &\leq (k-3) - 5 \text{ (induction hypothesis)} \\
            &= k - r - 5.
        \end{align*}
    \end{itemize}
    At this point, our proof is complete for $k = 9, 10$ since in this case we have $A_k - A_{k-7} + 1 = A_k - 1$. From here on, assume $k \geq 11$. We have two more cases:
    \begin{itemize}
        \item $t \in [A_k - A_{k-7} + 2, A_k - A_{k-9} + 1]$, which amounts to the $r = 4$ case: we have
        \begin{align*}
            u_k(t) &= u_{k-3}(t' \in [A_{k-3} - A_{k-7} + 2, A_{k-3} - A_{k-9} + 1]) \\
            &\leq (k-3) - 6 \\
            &= k - r - 5.
        \end{align*}
    \end{itemize}
    At this point, our proof is complete for $k = 11, 12$, since in this case we have $A_k - A_{k-9} + 1 = A_k - 1$. From here on, assume $k \geq 13$. We have one last case:
    \begin{itemize}
        \item $t \in [A_k - A_{k - 2r + 1} + 2, A_k - A_{k-2r-1} + 1]$ for some $r \in [5, \lfloor (k-3)/2 \rfloor]$. Now let $t' = t - A_{k-2} \in [A_{k-3} - A_{k-2r+1} + 2, A_{k-3} - A_{k-2r-1} + 1] \subseteq [A_{k-3} - A_{(k-3) - 2(r-2) + 1} + 2, A_{k-3} - A_{(k-3) - 2(r-1) -1} + 1]$.

        Thus we may take $r' \in \left\{r-2, r-1\right\}$ such that $t' \in [A_{k-3} - A_{(k-3) - 2r' + 1} + 2, A_{k-3} - A_{(k-3) - 2r'-1} + 1]$. Moreover, $r' \geq r-2 \geq 3$. Next we claim that $r' \leq \lfloor (k-6)/2 \rfloor$ (this will enable us to use the induction hypothesis for $r'$ and $k-3$). We address this in two cases:
        \begin{itemize}
            \item If $r \leq \lfloor (k-4)/2 \rfloor$, this follows from the fact that $r' \leq r-1 \leq \lfloor (k-6)/2 \rfloor$.

            \item If $\lfloor (k-4)/2 \rfloor < r \leq \lfloor (k-3)/2 \rfloor$, this forces $k$ to be odd and $r = (k-3)/2$. If we could show that we could take $r' = r - 2$, then we would be done since $r-2 \leq (k-7)/2$. This in turn follows from noting that $t' \leq A_{k-3} - 1 = A_{k-3} - A_{(k-3) - 2(r-2) - 1} + 1$.
        \end{itemize}
        It hence follows from the inductive hypothesis that:
        \begin{align*}
            u_k(t) &= u_{k-3}(t') \\
            &\leq (k-3)-r'-5 \\
            &\leq (k-3)-(r-2) - 5 \\
            &= k-r-6,
        \end{align*}
        as desired.
    \end{itemize}
    This exhausts all possible values of $t$ since $A_k - A_{k - 2\lfloor (k-3)/2 \rfloor -1} + 1 = A_k - A_{2\text{ or }3} + 1 = A_k - 1$. Thus our induction and the proof of the lemma are complete.
\qed

\section{Details of A* Search}

In Section~\ref{sec:julia}, we describe how we use an optimized A* search, implemented in the Julia programming language, to explicitly find the exact optimal parallel pebbling sequences for certain values of $\ell$ and $s$.
In this Appendix, we describe details of the design of the A* search.

We first must precisely define the problem we are solving:
\begin{center}
    \textit{
        Given a particular number of pebbles $s$ and pebble game length $\ell$, find a series of parallel pebbling steps that has the absolute minimum possible depth.
    }
\end{center}
We cast this as a graph search problem: consider a directed graph whose vertices are all possible configurations of $\leq s$ pebbles (and $\leq \ell$ ghosts) on $\ell$ indices, and whose edges are valid parallel pebbling steps (which may consist of several moves done in parallel).
The goal is to find the shortest path between the ``initial'' vertex $V_i$ (corresponding to no pebbles present) and the ``final'' vertex $V_f$ (corresponding to a single pebble at position $\ell$ and no other pebbles or ghosts present).

This graph search problem initially seems rather hopeless for all but very small $\ell$ and $s$.
There are $\sum_{s'=0}^s {\ell \choose s'}$ possible pebble configurations and $2^\ell$ possible ghost configurations, so the total number of vertices is $|V| = 2^\ell \cdot \sum_{s'=0}^s {\ell \choose s'}$.
Furthermore the graph has rather high degree: for a vertex whose configuration has $s'$ pebbles, there are $4^{s'}$ valid parallel pebbling moves (on each pebble, apply a pebble operation or not, and afterward ghost that pebble or not).
Fortunately we may dramatically reduce the search space by making the following observations, which can be combined into an admissible heuristic for $A*$ search.

First, as we discuss in Section~\ref{sec:parallelspookypebbling}, an optimal pebbling provably always begins with an initial ``blast'' phase where we pebble directly across to position $\ell$, leaving some ``marker'' pebbles along the way.
Therefore we can reduce our search to finding the shortest path from \textit{any} vertex $V_m$ that has a pebble at position $\ell$, to $V_f$, since given a particular $V_m$ the steps from $V_i$ to $V_m$ are entirely specified by this rule.

For the remaining search (corresponding to the ``unblast'' phase), we can make the following observation. 
$\mathsf{Ghost}(i)$ and $\unpeb(i)$ have the same effect of removing a pebble, but $\mathsf{Ghost}(i)$ is essentially free.
Also, if there are any pebbles with index higher than $i$ that need to be removed, then position $i$ will be pebbled again at some point in the future, at which time any ghosts can be removed.
Therefore, the configuration of ghosts in any optimal pebbling is entirely determined by the pebble configuration: there will be ghosts on all positions extending to the highest-indexed position that has a pebble (excluding position $\ell$ itself), and no ghosts on positions with indices higher than that.
This fully removes the factor of $2^\ell$ from the number of vertices we must explore.

Next we observe that $\mathsf{Ghost}(i)$ should only ever be applied immediately after $\pebop(i+1)$ (i.e. after position $i$ is used as input).
The proof is straightforward: given a pebbling scheme in which this is not true, simply move $\mathsf{Ghost}(i)$ back in time to the time slot immediately after $\pebop(i+1)$.
This can never make a pebble game less optimal; if anything it opens opportunity for something more optimal, because it frees up a pebble earlier.
This rule reduces the in-degree of vertices to $2^{s'}$ from $4^{s'}$, where as above $s'$ is the number of pebbles in the vertex's configuration.

Putting these observations together, the key is to start our search from $V_f$ and work \emph{backwards} until we encounter the first vertex $V_m$ in which position $\ell$ is pebbled.
Given that, there is one final addition to the A* heuristic which gives a dramatic further speedup.
We observe that a configuration whose rightmost pebble (exluding $\ell$) is at position $i_\mathrm{max}$ \emph{must} have distance at least $\ell-i_\mathrm{max}$ from any vertex $V_m$, since the rightmost pebble position can change by at most 1 with each parallel pebble step.
Including this fact in the A* heuristic leads to a search that is equivalent to looking for solutions of depth $2\ell$ first, then solutions of depth $2\ell + 1$, etc.
The number of valid sequences grows rapidly with how far one is from a depth $2\ell$ solution, so this strategy ends up exploring a small number of promising solutions first, before expanding into more numerous possibilities.
Furthermore it leads to a nice balance as $s$ is varied: with increased $s$, the number of vertices grows quickly, but the optimal depth also becomes closer to $2\ell$.
In practice we find that this makes it possible to successfully find optimal solutions for a wide range of $s$.

\vspace{1em}

Finally, we note that when searching for optimal pebblings for the practical application of Regev's factoring algorithm, we include the space cost of the temporary ancillas needed to hold the value $t$ (see Section~\ref{sec:factoringcosts}) in the optimization.
This space is only required in $\pebop(i)$ and $\unpeb(i)$ for which $i \equiv 1 \pmod{w+1}$ leading to nontrivial behavior in the pebbling games, as these steps can only be performed when there is sufficient space to spare.

\section{Additional Background on Lattice Reduction}

\subsection{Heuristic Justification for Assumption~\ref{assumption:roothermitetosbbp}}\label{sec:assumptionjustification}

Let us now reason heuristically about how we can use a lattice basis reduction algorithm to solve $\sbbp$ in dimension $r$.
Suppose that it achieves root-Hermite factor $\delta_0$ and outputs a size-reduced\footnote{See~\cite[Section 2.2]{gn08} for a definition of size-reduced bases.} basis $\vecb_1, \ldots, \vecb_r$ with Gram-Schmidt orthogonalization $\vecb^*_1, \ldots, \vecb^*_r$.
Then, as derived in~\cite[Section 3.3.1]{ekeragartnercomparison}, under the Geometric Series Assumption we may approximate $\norm{\vecb^*_i} \approx \norm{\vecb^*_1}/\delta_0^{2(i-1)}$.
Following the blueprint from~\cite[Claim 5.1]{Regev23}, to solve $\sbbp$ we will let $\ell \geq 0$ be minimal such that $\norm{\vecb^*_{\ell+1}} > \delta_0^{2(r-1)} T$ and output $\vecb_1, \ldots, \vecb_\ell$.
Heuristically, we expect that $\norm{\vecb^*_m} > T$ for all $m \geq \ell+1$, which is enough for correctness.
To bound the norm of each $\vecb_i$ for $i \in [\ell]$, note that we have:
$$\norm{\vecb_i}^2 \leq \sum_{j = 1}^i \frac{1}{4} \norm{\vecb_j^*}^2 \leq \frac{r}{4} \cdot \delta_0^{4(r-1)} T^2 \Rightarrow \norm{\vecb_i} \leq \frac{\sqrt{r}}{2} \cdot \delta_0^{2r} \cdot T.$$
(The factor of $1/4$ above comes from the assumption that the basis is size-reduced.)

\subsection{Root-Hermite Factors and Costs for BKZ Lattice Reduction}

\begin{table}[H]
\begin{center}
    \begin{tabular}{|c|c|c|}
    \hline
    Block size & Root-Hermite factor & Estimated CPU clock cycles for dimension 600 \\
    \hline\hline
    60 & 1.01145310214785 & $2^{52}$\\
    \hline
    120 & 1.00843474281592 & $2^{64}$\\
    \hline
    160 & 1.00718344897388 & $2^{76}$\\
    \hline
    200 & 1.00628260691082 & $2^{88}$\\
    \hline
    \end{tabular}
    \caption{Estimated root-Hermite factors and costs (in CPU clock cycles) for BKZ-$\beta$ for various block sizes $\beta$, obtained using Albrecht's estimator~\cite{DBLP:journals/jmc/AlbrechtPS15}. Costs are estimated based on~\cite{DBLP:conf/soda/BeckerDGL16}.}
    \label{roothermitetable}
\end{center}
\end{table}

\subsection{Derivation of Equation~\eqref{eqn:logDbound}}\label{sec:logDderivation}
Recall from Equation~\eqref{eqn:dimensionsinequalityprelims} that we require the following.
In the below derivation, we will plug in Assumption~\ref{assumption:roothermitetosbbp} and approximate $C \approx 1$ (where $C$ is as in Conjecture~\ref{conjecture:regevnt}).
\begin{align*}
\alpha \sqrt{m+1}\, 2^{Cn/d} 
  &< \frac{\sqrt{2}\, R}{\sqrt{d}} \cdot \frac{1}{6}\,(4 \cdot 2^n)^{-1/m} \\
  & \leq \frac{\sqrt{2}\, D}{2d} \cdot \frac{1}{6}\,(4 \cdot 2^n)^{-1/m}\\
\Leftrightarrow\; 
\sqrt{m+d}\,\delta_0^{2(m+d)} \sqrt{m+1}\, 2^{n/d}
  &< \frac{\sqrt{2}\, D}{d} \cdot \frac{1}{6}\,(4 \cdot 2^n)^{-1/m}\\
\Leftrightarrow\; 
\log D 
  &> 2(m+d)\log \delta_0 + \frac{n}{d} + \frac{n}{m}\\
  \quad &+ \tfrac{1}{2}\log(m+d) + \tfrac{1}{2}\log(m+1) 
     + \log\!\frac{6d}{\sqrt{2}} + \tfrac{2}{m}.
\end{align*}

\section{Proof of Theorem~\ref{thm:asymptoticwithoptimalpeb}}\label{app:asymptotic_factoring_proof}

\sloppy The operation $\pebop(i')$ (in the notation of Section~\ref{sec:parallelspookypebbling}), which sets $y_{i'} \leftarrow y_{i'-1}^2 \bmod N$ for odd $i'$ and $y_{i'} \leftarrow \left(\prod_j a_{j}^{z_{j,i'/2}}\right) y_{i'-1}$ for even $i'$, can be (in either case) implemented by a single out-of-place quantum multiplication.
(The value $t = \left(\prod_j a_{j}^{z_{j,i'/2}}\right)$ can be computed with negligible cost~\cite{Regev23,rv24}).
Since $\log t \leq \log \left[ \prod_j a_j \right] = o(n)$ this second multiplication will actually be cheaper than a general $n$-bit multiplication; here we count it the same for simplicity.
The unpebble operation $\unpeb(i')$ is equivalent.
Thus the multiplication depth of $\pebop(i')$ and $\unpeb(i')$ are both 1, and the ancillary space needed is $S_\times(n) + o(n)$.

\fussy
By Theorem~\ref{thm:parallelpebblingmain}, a line graph of length $\ell$ can be pebbled in depth $2\ell$ pebble operations using $2.47\log \ell$ pebbles.
Here $\ell = 2\log D$, so the depth will be $4\log D$ $n$-bit multiplications.
In terms of space, the maximum number of pebble operations that can occur in parallel is half of the number of pebbles (since each operation involves two pebbles, the input and the output).
Thus since one pebble corresponds to $n$ qubits, we use space at most $1.24 (2n + S_\times(n) + o(n)) \log \log D$.
\qed

\section{Modifying Fibonacci-Based Regev Circuits to Use Parallelism}\label{sec:parallelfib}

\cite[Theorem 2.4]{cryptoeprint:2024/636} provides a circuit that uses
\begin{equation}\label{eq:sequentialfibgates}
    2 \cdot \left(\frac{3r}{s} + 4 \log s + 7\right) \cdot K
\end{equation}
mod $N$ multiplications and at least
\begin{equation}\label{eq:sequentialfibspace}
    dK\lceil \log(r+1) \rceil + \max(2\log s + 5, 7)n
\end{equation}
qubits.

\paragraph{Straightforward optimizations.}
A closer look reveals that many of the operations in this circuit can be optimized in parallelism and/or space.

\begin{itemize}
    \item In~\cite[Lemma 3.5]{cryptoeprint:2024/636}, we need only use $n$ ancilla qubits (rather than $2n$); for typical multiplication circuits~\cite{kahanamokumeyer2024fast}, there is no need to write down a classical constant in a separate register before multiplying it in.
    \sloppy
    \item In~\cite[Lemma 3.7]{cryptoeprint:2024/636}, the quantum circuit computing $a, a^{-1}, b, b^{-1} \to a, a^{-1}, ab, (ab)^{-1} \pmod{N}$ can be implemented using 2 parallel blocks of 2 multiplications each and $\approx 2n$ ancilla qubits, rather than 3 sequential multiplications and $n$ ancilla qubits.

    This is because we can compute $a, b \to a, b, ab$ and $a^{-1}, b^{-1} \to a^{-1}, b^{-1}, a^{-1}b^{-1}$ in parallel before uncomputing $b$ using $a^{-1}, ab$ and uncomputing $b^{-1}$ using $a, a^{-1}b^{-1}$ in parallel.
    
    \fussy
    \item \cite[Lemma 3.8]{cryptoeprint:2024/636} can be modified in two places: in the first part, we can use a circuit for squaring modulo $N$ without having to write the number down twice.
    In the second part, we can plug in the aforementioned parallelized version of~\cite[Lemma 3.7]{cryptoeprint:2024/636}.
    This gives a circuit with a depth of $2r/s + 2\log s$ mod $N$ multiplications, a total of $4r/s + 4\log s$ total mod $N$ multiplications, and $(2\log s + 2)n$ ancilla qubits.
\end{itemize}

\paragraph{Running steps 3a and 3b in parallel.}
Finally, we can run steps 3a and 3b of~\cite[Algorithm 4.1]{cryptoeprint:2024/636} in parallel.
In more detail, we will start by running all but two stages of mod $N$ multiplications in~\cite[Lemma 3.8]{cryptoeprint:2024/636} as used in step 3a.
We will then run step 3b(i) on its own, then run the last two parallel blocks of mod $N$ multiplications in~\cite[Lemma 3.8]{cryptoeprint:2024/636} in parallel with the two multiplications in~\cite[Lemma 3.5]{cryptoeprint:2024/636} as used in step 3b(ii).
Finally, we will run step 3b(iii) on its own.
In all cases, we will have a depth of $2r/s + 2\log s$ mod $N$ multiplications and a total number of $4r/s + 4\log s + 2$ mod $N$ multiplications.
Accounting for the space is more delicate; for clarity, we step through this in cases below.

\emph{Case 1: $s = 1$.}
In this case, we will initially require $4n+o(n)$ ancilla qubits, out of which $2n$ will be restored.
After $2(r-1)$ parallel blocks of multiplications and applying 3b(i), we will have (omitting $o(n)$ qubits throughout):
\begin{align*}
    &\ket{a} \ket{a^{-1}} \ket{a^{r-1}b} \ket{(a^{r-1}b)^{-1}} \ket{\prod a_i^{r-z_{i, j}}} \ket{0^{3n}} \\
    \rightarrow &\ket{a} \ket{a^{-1}} \ket{a^{r-1}b} \ket{(a^{r-1}b)^{-1}} \ket{\prod a_i^{r-z_{i, j}}} \ket{a^rb} \ket{(a^rb)^{-1}} \ket{\prod a_i^{-z_{i, j}}} \\
    \rightarrow &\ket{a} \ket{a^{-1}} \ket{a^rb} \ket{(a^rb)^{-1}} \ket{\prod a_i^{-z_{i, j}}} \ket{0^{3n}} \\
    \rightarrow &\ket{a} \ket{a^{-1}} \ket{a^rb} \ket{(a^rb)^{-1}} \ket{\prod a_i^{-z_{i, j}}} \ket{\prod a_i^{z_{i, j}}} \ket{0^{2n}}.
\end{align*}

\emph{Case 2: $s = 2$.}
In this case, we will initially require $6n+o(n)$ ancilla qubits, out of which $4n$ will be restored.
After $r$ parallel blocks of multiplications and applying 3b(i), we will be halfway through the last application of~\cite[Lemma 3.7]{cryptoeprint:2024/636} within step 3a.
We hence have:
\begin{align*}
    &\ket{a} \ket{a^{-1}} \ket{a^2} \ket{a^{-2}} \ket{a^{r-2}b} \ket{(a^{r-2}b)^{-1}} \ket{a^rb} \ket{(a^rb)^{-1}} \ket{\prod a_i^{r-z_{i, j}}} \ket{0^{n}} \\
    \rightarrow &\ket{a} \ket{a^{-1}} \ket{a^2} \ket{a^{-2}} \ket{a^rb} \ket{(a^rb)^{-1}} \ket{\prod a_i^{r-z_{i, j}}} \ket{\prod a_i^{-z_{i, j}}} \ket{0^{2n}} \\
    \rightarrow &\ket{a} \ket{a^{-1}} \ket{a^rb} \ket{(a^rb)^{-1}} \ket{\prod a_i^{-z_{i, j}}} \ket{0^{5n}} \\
    \rightarrow &\ket{a} \ket{a^{-1}} \ket{a^rb} \ket{(a^rb)^{-1}} \ket{\prod a_i^{-z_{i, j}}} \ket{\prod a_i^{z_{i, j}}} \ket{0^{4n}}.
\end{align*}

\emph{Case 3: $s \geq 4$.}
In this case, we require $(2\log s+2)n+o(n)$ clean ancilla qubits, out of which $(2\log s)n + o(n)$ will be restored.
After $2r/s + 2\log s - 2$ parallel blocks of multiplications and applying 3b(i), we will have:
\begin{align*}
    &\ket{a}\ket{a^{-1}} \ket{a^2} \ket{a^{-2}} \ket{a^4} \ket{a^{-4}} \ket{a^rb} \ket{(a^rb)^{-1}} \ket{\prod a_i^{r-z_{i, j}}} \ket{0^{(2\log s- 3)n}} \\
    \rightarrow &\ket{a}\ket{a^{-1}} \ket{a^2} \ket{a^{-2}} \ket{a^rb} \ket{(a^rb)^{-1}} \ket{\prod a_i^{r-z_{i, j}}} \ket{\prod a_i^{-z_{i, j}}} \ket{0^{(2\log s - 2)n}} \\
    \rightarrow &\ket{a}\ket{a^{-1}} \ket{a^rb} \ket{(a^rb)^{-1}} \ket{\prod a_i^{-z_{i, j}}} \ket{0^{(2\log s + 1)n}} \\
    \rightarrow &\ket{a}\ket{a^{-1}} \ket{a^rb} \ket{(a^rb)^{-1}} \ket{\prod a_i^{-z_{i, j}}} \ket{\prod a_i^{z_{i, j}}} \ket{0^{(2\log s)n}}.
\end{align*}

\paragraph{Putting everything together, assuming $s \in \{1, 2\}$.}
As in all our results in practice as well as those of~\cite{ekeragartnercomparison}, we typically always have $s = 1, 2$. In this case, running 3a and 3b in parallel initially requires $2\log s + 4$ clean ancilla qubits.
We claim that the total space usage is then $dK\lceil \log(r+1) \rceil + (2\log s + 8)n$ qubits:
\begin{itemize}
    \item At the beginning of step 3, this will leave us with $(2\log s + 4)n$ clean ancilla qubits.
    \item We can run steps 3a and 3b in parallel as per the above. At this point, we will be left with $(2 \log s + 2)n$ clean ancilla qubits.
    \item For step 3c, we only need $2n$ clean ancilla qubits, which we have and will restore.
    \item For step 3d, we only need $n$ ancilla qubits, which we have and will restore.
\end{itemize}

\section{Parallelizing Shor's Algorithm}
\label{app:parallel_shor}

In this appendix we discuss the parallelization of the modular exponentiation in Shor's algorithm~\cite{shor97}.
The structure does not meaningfully change when leveraging e.g. the innovations of~\cite{DBLP:conf/pqcrypto/EkeraH17}, so for simplicity this section is written in terms of the construction in Shor's original paper.
Note that here we do not consider the use of recent innovations using Residue Number Systems to implement quantum factoring~\cite{DBLP:journals/iacr/ChevignardFS24, gidney2025factor2048bitrsa}, whose depth cannot easily be measured in number of $n$-bit multiplications.

Consider the operation $\ket{x} \ket{0} \to \ket{x}\ket{a^x \bmod N}$, where $a$ and $N$ are classical constants.
This is typically implemented via the decomposition
\begin{equation}
	a^x \bmod N = \prod_i c_i^{x_i} \bmod N
\end{equation}
where $c_i = a^{2^i} \bmod N$.
As a quantum circuit, this is implemented by starting with the state $\ket{x}\ket{1}$ and then for each $i$, performing an in-place modular multiplication of the second register by $c_i^{x_i}$.
This in-place multiplication is built out of two out-of-place modular multiply-adds using an ancilla register. 
The first is a multiplication $\ket{z}\ket{0} \to \ket{z}\ket{z c_i^{x_i} \bmod N}$ which yields the desired result in the ancilla register, and the second uses the modular inverse of $c_i$ (precomputed classically) to uncompute the input: $\ket{z}\ket{z c_i^{x_i} \bmod N} \to \ket{z - (c_i^{-1})^{x_i} \cdot z c_i^{x_i} \bmod N}\ket{z c_i^{x_i} \bmod N} = \ket{0}\ket{z c_i^{x_i} \bmod N}$.
A swap may be performed to move this value into the original register, but it is not necessary in practice as the first register can now be redefined as the ancilla register for the next iteration.

A particularly nice feature of this construction is that a single qubit can be recycled for the $\ket{x}$ register (see e.g.~\cite{beauregard}).
The total qubit count of the factoring algorithm is thus roughly $2n$ (plus the workspace required to do an out-of-place multiplication): one qubit for $\ket{x}$, one register for the output, and a second ancilla register.

One may parallelize this process by breaking the bits of $x$ into $p$ chunks $x[i]$ such that $\sum_i x[i] = x$ and then separately computing $a^{x[i]} \bmod N$ in parallel.
This yields the state
\begin{equation}
	\ket{x}\ket{a^{x[0]} \bmod N}\ket{a^{x[1]} \bmod N} \cdots \ket{a^{x[p-1]} \bmod N}
\end{equation}
We may then compute $\ket{a^{x} \bmod N}$ by multiplying these values together, an operation which can be cast as a spooky pebble game on a line graph of length $p-1$ (different geometries for the graph are possible too, since multiplication is associative).

Unfortunately we now run into a bit of a problem: there is no cheap way to uncompute the partial products we produced.
Indeed, it seems that if one follows this plan it is necessary to do the entire partial modular exponentiation that produced these registers in reverse to uncompute them, which introduces an extra factor of two to the number of multiplications required. 
This then increases the space usage, as the need to do the computation in reverse means that qubit recycling can no longer be applied to the $\ket{x}$ register, so an entire register of $n$ qubits is now needed for that.

The good news is that these overheads do not grow with $p$.
If $M$ is the multiplication depth of the original sequential construction, the depth will be $2\lceil M/p \rceil$ multiplications plus the depth of the length $p-1$ spooky pebble game at the end, and the qubit cost will be roughly $(1 + 2p)n$ qubits plus those needed for the final spooky pebble game (here ignoring ancillas required for the out-of-place multiplications).

Setting $p=2$ yields no advantage in depth, because the improvement is canceled out by the extra cost of uncomputation.
For our estimates in this work, we examine $p=3$ and $p=5$.
For $p=3$ the final spooky pebble game is trivial, it uses $2$ registers and has depth $3$.
For $p=5$ the spooky pebble game is not quite trivial, one can get away with only using $3$ extra registers for a multiplication depth of $8$.
Thus the overall costs are:
\begin{itemize}
	\item $p = 3$: depth $2\lceil M/3 \rceil + 3$, qubits $\approx 9n$
	\item $p = 5$: depth $2\lceil M/5 \rceil + 8$, qubits $\approx 14n$
\end{itemize}

\section{Detailed Results for Factoring 2048- and 4096-Bit Integers}

\subsection{Shor's Algorithm}

\begin{table}[H]
\begin{center}
    \begin{tabular}{|c|c|c|c|c|c|c|c|}
    \hline
    \multirow{2}{*}{$n$} & \multirow{2}{*}{Parallelism} & \multirow{2}{*}{Windowing} & \multirow{2}{*}{N. runs} & \multicolumn{3}{c|}{Mod $N$ multiplications} & \multirow{2}{*}{Qubits$/n$} \\
    \cline{5-7}
    & & & & Depth & Per run & Overall & \\
    \hline\hline
    \multirow{12}{*}{2048} & \multirow{4}{*}{None} & \multirow{2}{*}{No} & 1 & 6018 & 6018 & 6018 & \multirow{4}{*}{$\approx 2$} \\
    \cline{4-7}
    & & & 20 & 2290 & 2290 & 45800 & \\
    \cline{3-7}
    & & \multirow{2}{*}{$w = 10$} & 1 & 602 & 602 & 602 & \\
    \cline{4-7}
    & & & 20 & 230 & 230 & 4600 & \\
    \cline{2-8}
    & \multirow{4}{*}{$p=3$} & \multirow{2}{*}{No} & 1 & 4015 &12039 & 12039 & \multirow{4}{*}{$\approx 9$} \\
    \cline{4-7}
    & & & 20 & 1531 & 4583 & 91660 & \\
    \cline{3-7}
    & & \multirow{2}{*}{$w = 10$} & 1 & 405 & 1207 & 1207 & \\
    \cline{4-7}
    & & & 20 & 157 & 463 & 9260 & \\
    \cline{2-8}
    & \multirow{4}{*}{$p=5$} & \multirow{2}{*}{No} & 1 & 2416 & 12044 & 12044 & \multirow{4}{*}{$\approx 14$} \\
	\cline{4-7}
	& & & 20 & 924 & 4588 & 91760 & \\
	\cline{3-7}
	& & \multirow{2}{*}{$w = 10$} & 1 & 250 & 1212 & 1212 & \\
	\cline{4-7}
	& & & 20 & 100 & 468 &9360 & \\
	\hline
	\multirow{12}{*}{4096} & \multirow{4}{*}{None} & \multirow{2}{*}{No} & 1 & 12162 & 12162 & 12162 &\multirow{4}{*}{$\approx 2$} \\
	\cline{4-7}
	& & & 27 & 4438 & 4438 & 119826 & \\
	\cline{3-7}
	& & \multirow{2}{*}{$w = 10$} & 1 & 1218 & 1218 & 1218 & \\
	\cline{4-7}
	& & & 27 & 444 & 444 & 11988 & \\
	\cline{2-8}
	& \multirow{4}{*}{$p=3$} & \multirow{2}{*}{No} & 1 & 8111 & 24327 & 24327 & \multirow{4}{*}{$\approx 9$} \\
	\cline{4-7}
	& & & 27 & 2963 & 8879 & 239733 & \\
	\cline{3-7}
	& & \multirow{2}{*}{$w = 10$} & 1 & 815 & 2439 & 2439 & \\
	\cline{4-7}
	& & & 27 & 299 & 891 & 24057 & \\
	\cline{2-8}
	& \multirow{4}{*}{$p=5$} & \multirow{2}{*}{No} & 1 & 4874 & 24332 & 24332 & \multirow{4}{*}{$\approx 14$} \\
	\cline{4-7}
	& & & 27 & 1784 & 8884 & 239868 & \\
	\cline{3-7}
	& & \multirow{2}{*}{$w = 10$} & 1 & 496 & 2444 & 2444 & \\
	\cline{4-7}
	& & & 27 & 186 & 896 & 24192 & \\
	\hline
    \end{tabular}
    \caption{Costs for Shor's algorithm~\cite{shor97} for $n = 2048, 4096$ with optimizations by~\cite{DBLP:conf/pqcrypto/EkeraH17,DBLP:journals/dcc/Ekera20,DBLP:journals/corr/abs-2309-01754} as evaluated and reported by~\cite{ekeragartnercomparison}, plus parallelism as described in Appendix~\ref{app:parallel_shor}. We differentiate along three axes: whether parallelism is used, whether windowing is used~\cite[Table 2]{ekeragartnercomparison} or not~\cite[Table 1]{ekeragartnercomparison}, and whether we use one or many runs of the quantum circuit. As noted by~\cite{ekeragartnercomparison}, it is possible that Shor would benefit from even more aggressive windowing than $w = 10$, but here we adhere to the parameter settings used by their work.}
    \label{shortable}
\end{center}
\end{table}

\subsection{Regev's Factoring Algorithm with Fibonacci Arithmetic}\label{app:regevfib}

First, we state explicit cost estimates from essentially using~\cite[Algorithm 4.1]{cryptoeprint:2024/636} as in~\cite{ekeragartnercomparison}, but with some key modifications to benefit from parallelism which are discussed in Appendix~\ref{sec:parallelfib}.

Let $s \leq r$ be positive integers such that $s$ is a power of 2 and $s$ divides $r$.
These are parameters that the algorithm designer is free to choose.
Then define the sequence $\{G_k\}$ by $G_0 = 0, G_1 = 1$, and $G_k = rG_{k-1} + G_{k-2}$ for $k \geq 2$.
Let $K$ be maximal such that $G_K \leq D$.
We show that the necessary computation can then done with a depth of
\begin{equation}\label{eq:parallelfibdepth}
    2 \cdot \left(\frac{2r}{s} + 2 \log s + 4\right) \cdot K
\end{equation}
mod $N$ multiplications, a total of
\begin{equation}\label{eq:parallelfibgates}
    2 \cdot \left(\frac{4r}{s} + 4 \log s + 8\right) \cdot K
\end{equation}
mod $N$ multiplications, and --- assuming\footnote{We also treat the case of $s \geq 4$ in Appendix~\ref{sec:parallelfib}, but we only need these cases for our results.} $s \in \{1, 2\}$ --- a qubit count of
\begin{equation}\label{eq:parallelfibspace}
    \geq dK\lceil \log(r+1) \rceil + (2\log s + 8)n + o(n),
\end{equation}
which is a fairly aggressive lower bound because we are including the cost of storing a product tree computing $\prod_{j = 1}^d a_j^{z_{j, i}}$ in the $o(n)$ term.

With the above modifications in mind, our experimental methodology for evaluating Fibonacci-based approaches to Regev's circuit is straightforward.
For any $n, \beta, r, s$ (such that $s$ is a power of 2 dividing $r$), we select $d, m$ by iterating over all possible values subject to constraints analogous to those stated in ``Setting parameters in our algorithms'' (the constraint on $d$ will instead be $\prod_{i = 1}^d p_i^r \leq 2^n$), and minimizing the required value of $\log D$ as dictated by Equation~\eqref{eqn:logDbound}.
We then set $K$ to be maximal such that $G_K \leq D$ and calculate and report costs according to Equations~\eqref{eq:parallelfibdepth} and~\eqref{eq:parallelfibgates}.

These comparison points are summarized in Table~\ref{fibtable}.
For the rows corresponding to $(r, s) = (1, 1)$, we report costs exactly as specified above.
For the other rows, we search over values of $r, s$ and report the results achieving minimal multiplication depth.
In all our settings, it happens that these values are $(r, s) = (4, 2)$.

We observe that --- perhaps surprisingly --- the concrete gain from increasing BKZ block size is somewhat marginal. This will be apparent in our later spooky pebbling results as well.

\begin{table}
\begin{center}
    \begin{tabular}{|c|c|c|c|c|c|c|c|c|c|c|}
    \hline
    \multirow{2}{*}{$n$} & \multirow{2}{*}{BKZ-$\beta$} & \multirow{2}{*}{$r, s$} & \multirow{2}{*}{$d$} & \multirow{2}{*}{$m$} & \multirow{2}{*}{$\log D$} & \multirow{2}{*}{$K$} & \multicolumn{3}{c|}{Mod $N$ multiplications} & \multirow{2}{*}{Qubits$/n$} \\
    \cline{8-10}
    & & & & & & & Depth & Per run & Overall & \\
    \hline
    \hline
    \multirow{8}{*}{2048} & \multirow{2}{*}{60} & $1, 1$ & 165 & 227 & 52 & 75 & 900 & 1800 & 408600 & $\geq 14.0$  \\
    \cline{3-11}
    & & 4, 2 & 74 & 210 & 63 & 31 & 620 & 1240 & 260400 & $\geq 13.3$ \\
    \cline{2-11}
    & \multirow{2}{*}{120} & $1, 1$ & 177 & 260 & 48 & 69 & 828 & 1656 & 430560 & $\geq 13.9$ \\
    \cline{3-11}
    & & $4, 2$ & 75 & 228 & 60 & 29 & 580 & 1160 & 264480 & $\geq 13.1$\\
    \cline{2-11}
    & \multirow{2}{*}{160} & $1, 1$ & 189 & 277 & 46 & 66 & 792 & 1584 & 438768 & $\geq 14.0$ \\
    \cline{3-11}
    & & $4, 2$ & 74 & 253 & 59 & 29 & 580 & 1160 & 293480 & $\geq 13.1$\\
    \cline{2-11}
    & \multirow{2}{*}{200} & $1, 1$ & 222 & 303 & 44 & 64 & 768 & 1536 & 465408 & $\geq 14.9$ \\
    \cline{3-11}
    & & $4, 2$ & 74 & 278 & 58 & 28 & 560 & 1120 & 311360 & $\geq 13.0$\\
    \hline\hline
    \multirow{8}{*}{4096} & \multirow{2}{*}{60} & $1, 1$ & 290 & 333 & 66 & 95 & 1140 & 2280 & 759240 & $\geq 14.7$ \\
    \cline{3-11}
    & & $4, 2$ & 129 & 315 & 77 & 37 & 740 & 1480 & 466200 & $\geq 13.4$\\
    \cline{2-11}
    & \multirow{2}{*}{120} & $1, 1$ & 317 & 381 & 60 & 87 & 1044 & 2088 & 795528 & $\geq 14.7$ \\
    \cline{3-11}
    & & $4, 2$ & 128 & 363 & 73 & 35 & 700 & 1400 & 508200 & $\geq 13.2$\\
    \cline{2-11}
    & \multirow{2}{*}{160} & $1, 1$ & 358 & 411 & 57 & 82 & 984 & 1968 & 808848 & $\geq 15.1$ \\
    \cline{3-11}
    & & $4, 2$ & 129 & 367 & 71 & 34 & 680 & 1360 & 499120 & $\geq 13.2$\\
    \cline{2-11}
    & \multirow{2}{*}{200} & $1, 1$ & 354 & 434 & 55 & 79 & 948 & 1896 & 822864 & $\geq 14.8$ \\
    \cline{3-11}
    & & $4, 2$ & 131 & 427 & 69 & 34 & 680 & 1360 & 580720 & $\geq 13.2$\\
    \hline
    \end{tabular}
    \caption{Costs for Regev's algorithm for $n = 2048, 4096$ based on the Fibonacci-style arithmetic in~\cite{rv24,cryptoeprint:2024/636}. We differentiate based on BKZ block size $\beta$ (which corresponds to classical postprocessing power) and whether we set $(r, s) = (1, 1)$ or allow $r > 1$. Our numbers deviate from those of~\cite{ekeragartnercomparison} for reasons discussed in Appendix~\ref{app:regevfib}. We additionally report \textbf{lower bounds} on the qubit usage, ignoring --- in addition to qubit costs described at the beginning of Section~\ref{sec:facsetup} --- that we consistently neglect the qubits required to hold the values $\prod_{j = 1}^d a_j^{z_{j, i}}$. We do not ignore these costs in Tables~\ref{table2048} and~\ref{table4096}.}
    \label{fibtable}
\end{center}
\end{table}

\begin{table}
\begin{center}
    \begin{tabular}{|c|c|c|c|c|c|c|c|c|}
    \hline
    \multirow{2}{*}{BKZ-$\beta$} & \multirow{2}{*}{$d$} & \multirow{2}{*}{$m$} & \multirow{2}{*}{$\log D$} & \multirow{2}{*}{$s$} & \multicolumn{3}{c|}{Mod $N$ multiplications} & \multirow{2}{*}{Qubits$/n$} \\
    \cline{6-8}
    & & & & & Depth & Per run & Overall & \\
    \hline
    \hline
\multirow{3}{*}{60} & \multirow{3}{*}{90} & \multirow{3}{*}{203} & \multirow{3}{*}{59} & 5 & 465 & 500 & 101500 & 7.6 \\
\cline{5-9}
&  &  &  & 7 & 263 & 370 & 75110 & 9.6 \\
\cline{5-9}
&  &  &  & 12 & 175 & 295 & 59885 & 14.6 \\
\hline
\hline
\multirow{3}{*}{120} & \multirow{3}{*}{90} & \multirow{3}{*}{242} & \multirow{3}{*}{56} & 5 & 418 & 451 & 109142 & 7.5 \\
\cline{5-9}
&  &  &  & 7 & 240 & 341 & 82522 & 9.5 \\
\cline{5-9}
&  &  &  & 12 & 163 & 266 & 64372 & 14.5 \\
\hline
\hline
\multirow{3}{*}{160} & \multirow{3}{*}{94} & \multirow{3}{*}{242} & \multirow{3}{*}{54} & 5 & 396 & 429 & 103818 & 7.5 \\
\cline{5-9}
&  &  &  & 7 & 229 & 326 & 78892 & 9.5 \\
\cline{5-9}
&  &  &  & 12 & 157 & 253 & 61226 & 14.5 \\
\hline
\hline
\multirow{3}{*}{200} & \multirow{3}{*}{94} & \multirow{3}{*}{255} & \multirow{3}{*}{53} & 5 & 396 & 429 & 109395 & 7.4 \\
\cline{5-9}
&  &  &  & 7 & 229 & 326 & 83130 & 9.4 \\
\cline{5-9}
&  &  &  & 12 & 157 & 253 & 64515 & 14.4 \\
\hline
    \end{tabular}
    \caption{Costs for Regev's algorithm with parallel spooky pebbling for $n = 2048$. For four different BKZ block sizes $\beta$ (which corresponds to classical postprocessing power), we present two sets of results corresponding to a tradeoff between qubit count and depth. For all presented results, the window size $w=2$. Qubit counts here ignore the minor overheads described at the beginning of Section~\ref{sec:facsetup}. However, unlike Table~\ref{fibtable}, this table \emph{does} account for the qubits needed to hold $\prod_{j = 1}^d a_j^{z_{j, i}}$.}
    \label{table2048}
\end{center}
\end{table}

\begin{table}
\begin{center}
    \begin{tabular}{|c|c|c|c|c|c|c|c|c|}
    \hline
    \multirow{2}{*}{BKZ-$\beta$} & \multirow{2}{*}{$d$} & \multirow{2}{*}{$m$} & \multirow{2}{*}{$\log D$} & \multirow{2}{*}{$s$} & \multicolumn{3}{c|}{Mod $N$ multiplications} & \multirow{2}{*}{Qubits$/n$} \\
    \cline{6-8}
    & & & & & Depth & Per run & Overall & \\
    \hline
    \hline
\multirow{3}{*}{60} & \multirow{3}{*}{162} & \multirow{3}{*}{314} & \multirow{3}{*}{72} & 5 & 610 & 652 & 204728 & 7.8 \\
\cline{5-9}
&  &  &  & 7 & 334 & 475 & 149150 & 9.8 \\
\cline{5-9}
&  &  &  & 8 & 287 & 444 & 139416 & 10.8 \\
\hline
\hline
\multirow{3}{*}{120} & \multirow{3}{*}{166} & \multirow{3}{*}{371} & \multirow{3}{*}{67} & 5 & 561 & 600 & 222600 & 7.7 \\
\cline{5-9}
&  &  &  & 7 & 310 & 446 & 165466 & 9.7 \\
\cline{5-9}
&  &  &  & 12 & 200 & 381 & 141351 & 14.7 \\
\hline
\hline
\multirow{3}{*}{160} & \multirow{3}{*}{166} & \multirow{3}{*}{388} & \multirow{3}{*}{65} & 5 & 537 & 575 & 223100 & 7.6 \\
\cline{5-9}
&  &  &  & 7 & 298 & 432 & 167616 & 9.6 \\
\cline{5-9}
&  &  &  & 12 & 193 & 345 & 133860 & 14.6 \\
\hline
\hline
\multirow{3}{*}{200} & \multirow{3}{*}{162} & \multirow{3}{*}{401} & \multirow{3}{*}{64} & 5 & 513 & 550 & 220550 & 7.5 \\
\cline{5-9}
&  &  &  & 7 & 286 & 419 & 168019 & 9.5 \\
\cline{5-9}
&  &  &  & 12 & 187 & 326 & 130726 & 14.5 \\
\hline
\end{tabular}
    \caption{Costs for Regev's algorithm with parallel spooky pebbling for $n = 4096$. For four different BKZ block sizes $\beta$ (which corresponds to classical postprocessing power), we present two sets of results corresponding to a tradeoff between qubit count and depth. For all presented results, the window size $w=2$. Qubit counts here ignore the minor overheads described at the beginning of Section~\ref{sec:facsetup}. However, unlike Table~\ref{fibtable}, this table \emph{does} account for the qubits needed to hold $\prod_{j = 1}^d a_j^{z_{j, i}}$.}
    \label{table4096}
\end{center}
\end{table}

\end{document}